\def\rk{{\rm rank}}
\def\cork{\mbox{\rm corank}}
\def\mincork{\cork}
\newcommand{\disc}{\mathrm{disc}}
\def\Q{{\mathbb Q}}
\def\R{{\mathbb R}}
\def\C{{\mathbb C}}
\def\F{{\mathbb F}}
\def\N{{\mathbb N}}
\def\Hom{{\rm Hom}}
\def\Elin{{\rm Lin}}
\def\rem#1{}
\def\cH{{\cal H}}
\def\cA{{\cal A}}
\def\cB{{\cal B}}
\def\cD{{\cal D}}
\def\cT{{\cal T}}
\def\matsp{\Elin(V, V')}
\def\matspc{M(n, \F)}
\def\matspe{M(n, \F')}
\def\vecspc{\F^n}
\newcommand{\fail}{\mathrm{{\texttt{Fail}}}}
\newcommand{\trialgo}{\textsc{TriAlgo}}
\newcommand{\maxrk}{\mathrm{maxrk}}
\newcommand{\sk}{\mathrm{sk}}
\def\rmitem[#1]#2{\item[{\rm #1}]#2}
\newcommand{\comment}[1]{}
\newcommand{\im}{\mathrm{im}}
\newcommand{\pep}{power overflow problem}
\newcommand{\tp}[1]{{#1}^{\mathrm{T}}}
\newcommand{\class}[1]{\mathbf{#1}}
\newcommand{\rkone}{\class{R_1}}
\newcommand{\ut}{\class{UT}}
\newtheorem{theorem}{Theorem}
\newtheorem{proposition}[theorem]{Proposition}
\newtheorem{lemma}[theorem]{Lemma}
\newtheorem{corollary}[theorem]{Corollary}
\newtheorem{fact}[theorem]{Fact}
\newtheorem{definition}[theorem]{Definition}
\newtheorem{problem}[theorem]{Problem}
\newtheorem{remark}[theorem]{Remark}
\title{
Generalized Wong sequences \\
and their applications to Edmonds' problems
%\\
%\small{\it Full version}
}
\author{
G\'abor Ivanyos\thanks{Institute for Computer Science and Control, Hungarian 
Academy of Sciences, 
Budapest, Hungary 
({\tt Gabor.Ivanyos@sztaki.mta.hu}).} 
\and 
Marek Karpinski\thanks{Department of Computer Science, 
University of Bonn, Bonn, Germany
({\tt marek@cs.uni-bonn.de}).}
\and
 Youming Qiao\thanks{Centre for Quantum Computation and Intelligent Systems
 University of Technology, Sydney; and Centre for Quantum Technologies,
 National University of Singapore, Singapore 117543. %Research at the Centre 
 %for Quantum Technologies is funded by the Singapore Ministry of Education and 
 %the National Research Foundation. 
 ({\tt jimmyqiao86@gmail.com})}
\and
 Miklos Santha\thanks{LIAFA, Univ. Paris 7, CNRS, 75205 Paris, France;  and 
Centre for Quantum Technologies, National University of Singapore, 
Singapore 117543 ({\tt miklos.santha@liafa.jussieu.fr}).}
 }
\date{}
\begin{document}

\maketitle

\begin{abstract}

We design two deterministic polynomial-time algorithms for variants of a 
problem introduced by Edmonds in 1967: 
determine the rank of a matrix $M$ whose entries are homogeneous linear polynomials
over the integers.
Given a linear subspace $\cB$ of the $n \times n$ matrices over some field $\F$, we consider the following problems:
\emph{symbolic matrix rank} (SMR) is the problem to 
determine the maximum rank among
matrices in $\cB$, while %its weakening, 
{\em symbolic determinant 
identity testing} (SDIT) is 
the question to decide whether there exists a nonsingular matrix 
in $\cB$. The constructive versions of these problems are asking to find a matrix of maximum rank, respectively a 
nonsingular matrix, if there exists one.

Our first algorithm solves the {\em constructive} SMR when $\cB$ is spanned by unknown rank one matrices, answering
an open question of Gurvits. Our second algorithm solves the constructive SDIT when $\cB$ is spanned by 
triangularizable matrices, but the triangularization is not given explicitly. 
Both algorithms work 
over 
%finite 
fields of size at least $n+1$, 
%and over the rational numbers, 
%\gabor{deleted: finite/rational}
and the first algorithm 
actually solves (the non-constructive)
SMR independent of the field size.
Our framework is based on a generalization of Wong sequences,
%\youming{Added: based on a}
%\gabor{was: Our main tool to obtain these results is to generalize Wong 
%sequences,} 
a classical 
method to deal with pairs of matrices, 
%(K.-T. Wong, Journal of Differential Equations, 16(2):270-280, 1974) 
to the case of pairs of matrix spaces. 
\footnote{
A preliminary report on this work appeared in~\cite{conf_version}.
}
 \end{abstract}
 
\section{Introduction}\label{sec:intro}

In 1967, Edmonds introduced the following problem \cite{Edmonds}:
Given a matrix $M$ whose entries are homogeneous linear polynomials
over the integers, determine the rank of $M$. The problem is
the same as determining the maximum rank of a matrix in 
a linear space of matrices over the rationals. In this paper
we consider this question and its certain variants over
more general fields. 

Let us denote by $\matspc$ the linear space of $n\times n$ 
matrices over a field $\F$. 
We call a linear subspace $\cB \leq \matspc$ a \emph{matrix space}. We define the
\emph{symbolic matrix rank} problem (SMR)
%\emph{Edmonds' problem} 
over $\F$ as follows: given 
$\{B_1, \dots, B_m\}\subseteq \matspc$, determine the maximum rank among
matrices in $\cB = \langle B_1, \dots, B_m\rangle$, %the linear subspace of $\matspc$ 
the matrix space spanned by $B_i$'s.
The \emph{constructive} version of SMR is to find a matrix of maximum rank in $\cB$
(this is called the maximum rank matrix completion problem in \cite{Geelen} 
and in \cite{IKS}).
We refer to the weakening of SMR,
when the question is to decide whether there exists a nonsingular matrix 
in $\cB$, as the  {\em symbolic determinant 
identity testing} problem (SDIT), the name used %by Kabanets and Impagliazzo
by~\cite{KI04} (in~\cite{Gurvits} this variant is called Edmonds' problem). 
The \emph{constructive} version in that case is to find a nonsingular matrix, if there is one in $\cB$.
We will occasionally refer to any of the above problems as \emph{Edmonds' problem}.

% over $\F$ is the following: given 
%$\{B_1, \dots, B_m\}\subseteq \matspc$, 
%decide whether there exists a nonsingular matrix 
%in $\cB = \langle B_1, \dots, B_m\rangle$, %the linear subspace of $\matspc$ 
%the matrix space spanned by $B_i$'s.
%The constructive version of this problem would be finding a nonsingular matrix in a matrix space, if there exists any.

%In the following for $\{B_1, \dots, B_m\}\subseteq\matspc$, the linear subspace in $\matspc$ spanned by them is denoted as $\langle B_1, \dots, B_m\rangle\leq \matspc$.

%In \cite{KI04}, the problem has the name {\em symbolic determinant 
%identity test problem}, while in \cite{Gurvits} it is called Edmonds' problem.
%(A constructive variant, finding maximum rank matrices in matrix spaces
%is called the maximum rank matrix completion problem in \cite{Geelen} 
%and in \cite{IKS}.) 
The complexity of the SDIT
depends crucially on the size of the underlying field $\F$. When $|\F|$ 
is a constant then it is NP-hard \cite{BFS}. On the other hand if 
the field size is large enough (say $\geq 2n$) then by the
Schwartz-Zippel lemma \cite{Schwartz,Zippel} 
it admits an efficient randomized algorithm \cite{Lovasz79}. 
Obtaining a deterministic polynomial-time algorithm for the SDIT
would be of fundamental importance, since
Kabanets and Impagliazzo~\cite{KI04}  showed 
that such an algorithm would imply strong circuit lower 
bounds which seem beyond current techniques. 

Previous works on Edmonds' problems mostly dealt with the case when the 
given \emph{matrices} $B_1, \dots, B_m$ satisfy certain property. 
For example, Lov\'asz~\cite{Lovasz}  considered several cases of SMR, including 
when the $B_i$'s are of rank $1$, and when they are skew symmetric 
matrices of rank $2$. These classes were then shown to have 
deterministic polynomial-time algorithms 
\cite{Geelen,Murota,HKM05,GIM03,GI05,IKS},
see Section~\ref{subsec:previous} for more details.
%The only exception we know of is the work by Gurvits \cite{Gurvits}. We shall discuss his result and compare with ours in Section~\ref{subsec:previous}. 

Another direction also studied is when instead of the given matrices, the 
spanned \emph{matrix space} $\cB = \langle B_1, \dots, B_m\rangle$
satisfies certain property. Since such a property is just a subset of all 
matrix spaces, we also call it a class of matrix spaces. 
Gurvits~\cite{Gurvits} presented an efficient 
deterministic algorithm for the SDIT
over 
subfields of $\C$,
%\gabor{was: $\Q$,} 
when the matrix space falls in a special class,
what we call the \emph{Edmonds-Rado class}. 
We shall review the definition of this class 
in Section~\ref{subsec:previous}. 
%\gabor{removed the term "property" from here}
In this paper, our main
goal is to consider Edmonds' problems for the following two 
classes.
\begin{itemize}
\item The class of rank-1 spanned matrix spaces, $\rkone$: a matrix space 
$\cB\leq \matspc$ is in $\rkone$, if $\cB$ has a basis consisting of rank-$1$ 
matrices over $\F'$, where $\F'$ is some extension field of $\F$.\footnote{Note 
that it is possible for $\cB$ to have a rank-$1$ basis over $\F'$ but no such 
over $\F$. 
See \cite{Gurvits_matching} for an example. }
\item The class of (upper) triangularizable matrix spaces, $\ut$: a matrix 
$\cB\leq \matspc$ is in $\ut$, if there exist nonsingular $C, D\in M(n, \F')$, 
where $\F'$ is some extension field of $\F$, such that
for all $B\in\cB$, the matrix
$DBC^{-1}$ is upper-triangular.
\end{itemize}
%For now we only note that this class 
%includes $\mathbf{R_1}$, the class of \emph{rank}-1 spanned matrix spaces, 
%where a matrix space $\cB$ is in $\mathbf{R_1}$ 
%if and only if $\cB$ has a basis consisting of rank-$1$ matrices. 
%if and only if $\cB$  has a basis consisting of rank-$1$ matrices over some extension field $\F'$ of the underlying field $\F$.\footnote{See \cite{Gurvits_matching} for an example of a matrix space defined over $\Q$, s.t. it has a rank-$1$ basis over an extension field, but no rank-$1$ basis over $\Q$.} 
It is known that the
Edmonds-Rado class 
%\gabor{was: property}
includes $\rkone$ and $\ut$. See 
Section~\ref{subsec:previous} for more details. While Gurvits presented an 
efficient deterministic SDIT algorithm for the Edmonds-Rado class 
%\gabor{was: property}
over 
subfields of $\C$,
%\gabor{was: $\Q$} 
the same problem over (large enough) finite fields is still open, even for 
special classes like $\rkone$ and $\ut$. In fact, 
Gurvits stated as an open question the complexity of
the SMR for $\mathbf{R_1}$ over 
finite fields \cite[page 456]{Gurvits}. 
%This fact was first shown by Lov\'asz \cite{Lovasz} via a theorem of Rado and 
%Edmonds \cite{Rado,Edmonds70,Welsh}. %which gave Gurvits the motivation for 
%%the 
%%choice of the name of the class.

The difference between 
properties of matrices and properties of matrix spaces is 
critical for Edmonds' problems. 
In particular, whether a matrix space satisfies a certain property or not, 
should not depend on choices of basis. 
We are not aware of any result on the complexity 
of finding rank one generators for a subspace
$\cB$ in $\rkone$ if it is given by a basis
consisting of not necessarily rank one matrices.
We believe that the problem is hard.
%\gabor{Changed sentence, removed footnote}
%Given matrices $B_1, \dots, B_m$ and suppose $\cB=\langle B_1, \dots, 
%B_m\rangle$ is in $\rkone$, it is
%presumably hard
%\footnote
%{Actually, At present, we are not aware of the 
%deterministic 
%complexity of 
%computing a 
%rank-1 basis for matrix spaces in $\mathbf{R_1}$. 
%Gurvits 
%made a similar comment in \cite{Gurvits_matching}. 
%In \cite{kayal}, N. Kayal gave an efficient \emph{randomized} algorithm for 
%the special case when $m=n$. 
%}
%finding a basis of rank-$1$ matrices of $\cB$. 
Thus the existence of algorithms for SMR when the $B_i$'s are rank-$1$ 
does not immediately imply algorithms for matrix spaces in $\mathbf{R_1}$. 
%to determine whether 
%$\cB=\langle B_1, \dots, B_m\rangle$ is in $\mathbf{R_1}$. Even if $\cB$ is in 
%$\rkone$, it is not clear how
%\footnote{\youmingsay{Need some hardness result. It should be NP-hard?} }
%; in other words, whether $\cB$ has a basis consisting of rank-$1$ matrices. 
%have been known for some time \cite{Lovasz,HKM05,IKS}, it 

Furthermore, most properties we encounter in practice respect the following 
equivalence relation of matrix spaces. Two matrix spaces $\cA$ and $\cB$ in 
$\matspc$ are 
equivalent, if there exist nonsingular $C, D\in M(n, \F)$, s.t. $\cA=C\cB 
D:=\{CBD\mid B\in \cB\}$. 
Edmonds-Rado 
%\gabor{was: property}
class, 
$\rkone$ and $\ut$ all respect 
this equivalence relation. Again, given matrices $B_1, \dots, B_m$, and suppose 
$\cB=\langle B_1, \dots, B_m\rangle$ is in $\ut$, 
it is not clear how difficult is computing
%\gabor{changed sentence}
%it is also presumably 
%difficult to compute nonsingular 
matrices $C, D$ that triangularize 
$\cB$. 
%\gabor{removed footnote, added the next sentence}
The problem does not look as hard as 
finding rank one generators, see Section~\ref{sec:concl} for
some details.
%\footnote{For example, a straightforward way is to express the condition 
%$CB_iD$ being 
%upper-triangular by a set of quadratic equations in the entries of $C$ and $D$. 
%However, to solve a system of 
%quadratic equations is already $\mathrm{NP}$-hard. 
%} 
Thus while SDIT for upper-triangular $B_i$'s is easy, it does not immediately 
suggest an algorithm for matrix spaces in $\ut$. 

To ease the description of our results, we make a few definitions and notations. %define the following: f
We denote by $\rk(B)$ the rank of a matrix $B$, and we set $\cork(B) = n - \rk(B).$ 
For a matrix space $\cB$ we set $\rk(\cB)=\max\{\rk(B)\mid B\in\cB\}$ and $\cork(\cB)=n-\rk(\cB)$.
We say that $\cB$ is \emph{singular} if $\rk(\cB) < n$, that is 
if $\cB$ does not contain a nonsingular element, and \emph{nonsingular} otherwise. 
%Also $\maxrk(\cB)=\max(\rk(B)\mid B\in\cB)$, and 
%$\mincork(\cB)=n-\maxrk(\cB)$. 

For a subspace $U\leq \vecspc$, we set $\cB(U)=\langle B(u)\mid B\in \cB, u\in U\rangle$.
%, where for $S\subseteq \vecspc$ $\langle S\rangle$ denotes the subspace spanned by $S$. 
Let $c$ be a nonnegative integer. We say that $U$ is a 
\emph{$c$-singularity witness of $\cB$}, if $\dim(U)-\dim(\cB(U))\geq c$, and 
$U$ is a \emph{singularity witness} of $\cB$ if for some $c>0$, it is a 
$c$-singularity witness.
%\begin{definition}\label{def:witness}
%For $U\leq \vecspc$, $U$ is a \emph{singular (subspace) witness} of $\cB$ if $\dim(\cB(U))<\dim(U)$. For $U, U'\leq \vecspc$ and $c\in\N$, $(U, U')$ is a \emph{$c$-discrepancy (subspace) witness of $\cB$}, if $\cB(U)\leq U'$ and $\dim(U)-\dim(U')\geq c$.
%\end{definition}
Note that if there exists a singularity witness of $\cB$ then $\cB$ can only be singular. Let us define the \emph{discrepancy} of $\cB$ 
as $\disc(\cB)=\max\{c\in \N\mid \exists ~c$-singularity witness of $\cB\}$. Then it is also clear that 
$
\cork(\cB)\geq \disc(\cB).
$
%and if there exists a $c$-singularity witness of $\cB$ then $\cork(\cB)\geq c$. 

Our main results 
%\gabor{added paragraph, rephrased theorems}
are algorithms that run in polynomial time on
an {\em algebraic RAM}~\cite{kalt88}, a random access machine in
which the field operations as well as testing equality of field
elements are performed at unit cost. 
Over finite fields, the straightforward implementations
of these algorithms automatically have polynomial (in $\log \F$ and $n$)
Boolean (``bit'') complexity. With some
effort, we are also able to present deterministic algorithms over the
rationals which have Boolean complexity polynomial 
in the number of bits representing the input data.
%We assume that field operations are at unit cost. 
We now state our main theorems.

\begin{theorem}\label{thm:main1}
%Let $\F$ be either $\Q$ or a finite field. 
There are deterministic algorithms which solve
 %Edmonds' problem
the SMR 
on an algebraic RAM for $\F$ or over $\Q$, respectively,
in polynomial time if $\cB$ 
is spanned by rank-1 matrices. 
If the size of the base field is at least $n+1$,
the algorithm solves the constructive SMR, and it also outputs
a $\cork(\cB)$-singularity witness.
\end{theorem}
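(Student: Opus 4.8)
The plan is to reduce the theorem to a single rank-increment step and iterate it. The core subroutine, run at most $n$ times, takes a matrix $A\in\cB$ of rank $r$ and either certifies $r=\rk(\cB)$ by outputting a $\cork(\cB)$-singularity witness, or returns some $A'\in\cB$ with $\rk(A')>r$; starting from $A=0$ and repeatedly applying it yields both a maximum-rank matrix and, in the final round, the witness. The subroutine itself performs only linear algebra on the \emph{given} generators $B_1,\dots,B_m$ (it never attempts to compute a rank-one basis, which we do not know how to do); the hypothesis that $\cB\in\rkone$ enters only in the correctness analysis, through the structure of $\cB\otimes\overline{\F}$ as a span of rank-one matrices.

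The engine is the generalized second Wong sequence of the pair $(A,\cB)$: set $\mathcal{W}_0=0$ and $\mathcal{W}_{i+1}=A^{-1}(\cB(\mathcal{W}_i))$, where $A^{-1}(X)=\{v:Av\in X\}$ and $\cB(U)=\langle B(u)\mid B\in\cB, u\in U\rangle$. This is a nondecreasing chain, stabilizing within $n$ steps at some $\mathcal{W}^{*}$, computable by Gaussian elimination. I record three elementary facts: $\mathcal{W}_1=\ker A$, hence $\ker A\subseteq\mathcal{W}^{*}$; at the fixed point $A\mathcal{W}^{*}\subseteq\cB(\mathcal{W}^{*})$; and if $Z$ is a complement of $\mathcal{W}^{*}$ then $A|_Z$ is injective with $AZ\cap\cB(\mathcal{W}^{*})=0$, because $Az\in\cB(\mathcal{W}^{*})$ forces $z\in A^{-1}(\cB(\mathcal{W}^{*}))=\mathcal{W}^{*}$. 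From the first two facts, $\dim\mathcal{W}^{*}-\dim\cB(\mathcal{W}^{*})\le\dim\mathcal{W}^{*}-\dim A\mathcal{W}^{*}=\dim(\ker A\cap\mathcal{W}^{*})=\dim\ker A=n-r$, with equality exactly when $\cB(\mathcal{W}^{*})=A\mathcal{W}^{*}$. So in the case $\cB(\mathcal{W}^{*})=A\mathcal{W}^{*}$, $\mathcal{W}^{*}$ is an $(n-r)$-singularity witness; combined with the inequality $\disc(\cB)\le\cork(\cB)$ this forces $\rk(\cB)=r$ and $\disc(\cB)=\cork(\cB)=n-r$, so we are done in this round, emitting $\mathcal{W}^{*}$. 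This half needs no hypothesis on $\cB$ at all.

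The remaining case, $\cB(\mathcal{W}^{*})\supsetneq A\mathcal{W}^{*}$, is where a rank increment must be produced, and it is the heart of the proof. Since $A|_Z$ is already injective with image complementary to $\cB(\mathcal{W}^{*})$, it suffices to raise the rank of $A$ \emph{restricted to $\mathcal{W}^{*}$} by one without disturbing the $Z$-part; a new element $A'=A+\sum_{k}t_kB_{j_k}$ built from $A$ and $O(n)$ generators, chosen by solving a chain of linear systems that walks back along $\mathcal{W}_1\subsetneq\mathcal{W}_2\subsetneq\cdots$, accomplishes this for a suitable choice of the scalars $t_k$. Proving that this construction genuinely attains rank $r+1$ whenever $\cB(\mathcal{W}^{*})\supsetneq A\mathcal{W}^{*}$ is exactly the place where $\cB\in\rkone$ is used: over $\overline{\F}$ one writes the generators as $u_j\tp{v_j}$, so $\cB(\mathcal{W}^{*})=\langle u_j : v_j\notin(\mathcal{W}^{*})^{\perp}\rangle$, and a Lov\'asz-type min--max/matroid argument shows that this strict containment cannot coexist with $A$ being of maximum rank. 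Concretely, a designated $(r+1)\times(r+1)$ minor of $A'$ is a polynomial of degree at most $n$ in the $t_k$ that is not identically zero (witnessed over $\overline{\F}$), hence it has a nonvanishing specialization whenever the field has at least $n+1$ elements --- precisely the size hypothesis. I expect this step --- verifying that the chased combination really increases the rank, i.e.\ that the dichotomy ``witness or increment'' is exhaustive for $\rkone$ --- to be the main obstacle, since it is the only point that must exploit rank one rather than a weaker structural feature.

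It remains to handle the two secondary claims. For the field-size-independent (non-constructive) statement, note that $\rk(\cB)$ coincides with the rank of the symbolic matrix $\sum_i x_iB_i$ over $\F(x_1,\dots,x_m)$ and is therefore unchanged under replacing $\F$ by any extension $\F'$; so over a small field one runs the above over an extension of size $\ge n+1$ and reports the resulting integer, discarding the $\F'$-matrix and $\F'$-witness (which need not descend to $\F$). For the statement over $\Q$, one checks that every intermediate object --- the Wong subspaces produced by row reduction and the linear systems defining the $t_k$ --- has bit-size polynomial in the input, using the standard Hadamard/determinant bounds for the elimination steps (alternatively, carrying out the search for good scalars modulo a prime of polynomial bit-length and lifting), which yields the claimed polynomial Boolean complexity.
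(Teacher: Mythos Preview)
Your overall architecture---iterate a subroutine that, given $A\in\cB$, either produces a $\cork(A)$-singularity witness via the (limit of the) second Wong sequence or returns a higher-rank element---is exactly the paper's. Your ``witness case'' analysis is correct and matches Lemma~\ref{lem:second_utility}.

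The gap is in the increment step. You write that when $\cB(\mathcal{W}^{*})\supsetneq A\mathcal{W}^{*}$ one builds $A'=A+\sum_k t_kB_{j_k}$ by ``walking back along $\mathcal{W}_1\subsetneq\mathcal{W}_2\subsetneq\cdots$'' and that a Lov\'asz-type argument certifies $\rk(A')>r$. This is precisely the hard part, and your sketch does not supply it. The difficulty is that the Wong sequence only tells you that some \emph{product} $B_{j_\ell}A'\cdots B_{j_1}A'$ (with possibly distinct $B_{j_k}$) maps $\ker(AA')$ outside $\im(A)$; what you need is a \emph{single} $B\in\cB$ with $(BA')^\ell\ker(AA')\not\subseteq\im(A)$, so that Atkinson--Stephens (Fact~\ref{fact:atkinson}) forces $\rk(A+\lambda B)>r$ for some $\lambda$. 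The paper's device for collapsing the product to a power is the ``helpful subspaces'' $\cH_i\leq\cD$ of Section~\ref{sec:rank1}: one shows (Lemma~\ref{lem:goodproduct}) that for rank-one spanned $\cD$ each $\cH_i\neq 0$ and $\cH_\ell\cdots\cH_1(U)\not\subseteq U'$, and then (Lemma~\ref{lem:oneterm}) that $D=X_1+\cdots+X_\ell$ with $X_i\in\cH_i$ satisfies $D^\ell(U)\not\subseteq U'$ because all non-identity terms in the expansion of $D^\ell$ die. Your proposal contains no analogue of this mechanism. Also note that your multivariate-minor argument does not give a deterministic search over a field of size $n+1$: a degree-$n$ polynomial in several $t_k$ cannot be hit by trying $n+1$ values unless the variables are eliminated one at a time, which you have not arranged.

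There is also a genuine error in the small-field claim. It is \emph{not} true in general that $\rk(\cB)$ over $\F$ equals the rank of $\sum_i x_iB_i$ over $\F(x_1,\dots,x_m)$; over small fields the former can be strictly smaller. The reason the maximum rank is extension-invariant for $\rkone$ is different: one has $\cork(\cB)=\disc(\cB)$ over every field in this class (Lov\'asz; see \cite{IKS}), and since $\disc$ cannot decrease and $\cork$ cannot increase under extension, both are preserved. Your argument needs to be replaced by this one.
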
 

\begin{theorem}\label{thm:main2}
Assume that the size of the base field $\F$ is at least $n+1$.
%Let $\F$ be either $\Q$ or a finite field of size  at least $n+1$.
Then there are deterministic polynomial-time
algorithms which solve the constructive SDIT 
on an algebraic RAM for $\F$ or over $\Q$, respectively,
if $\cB$ is triangularizable.
Furthermore,  
when $\cB$ is singular,
the algebraic RAM algorithm 
also outputs a singularity witness.
\end{theorem}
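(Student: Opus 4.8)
The plan is to reduce the triangularizable case to the rank-1 spanned case of Theorem~\ref{thm:main1} by a change of perspective, combined with a generalized Wong sequence to locate the obstruction. First I would observe that if $\cB$ is upper-triangularizable via $DBC^{-1}$ upper-triangular for all $B\in\cB$, then $\cB$ is singular if and only if there is a \emph{common} coordinate $i$ such that the $i$-th diagonal entry of $DBC^{-1}$ vanishes for every $B\in\cB$ simultaneously; equivalently, in the (unknown) triangularizing coordinates there is an index $i$ with $\cB$ mapping the flag step $\langle e_1,\dots,e_i\rangle$ into $\langle e_1,\dots,e_{i-1}\rangle$. This says exactly that the triangularizing basis furnishes, for the singular case, a subspace $U$ of dimension $i$ with $\cB(U)\leq$ a space of dimension $i-1$, i.e. a $1$-singularity witness. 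So the structural content is: \emph{$\cB\in\ut$ is singular iff $\disc(\cB)\geq 1$}, and moreover a witness exists that is a flag step in the triangularizing frame.

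The engine is the generalized Wong sequence. Starting from a generic (or suitably chosen) element $B_0\in\cB$ and a generic subspace, I would iterate $U_{k+1} = B_0^{-1}(\cB(U_k))$ (interpreted appropriately when $B_0$ is singular, using a pseudo-inverse or working with the image space $\cB(U_k)$ directly), exactly as the classical Wong sequence for a pair $(A,B)$ is generalized to a pair $(B_0,\cB)$. The key lemma to establish is that this sequence stabilizes in at most $n$ steps at a subspace $U_\infty$, and that $\dim U_\infty - \dim \cB(U_\infty)$ equals $\disc(\cB)$ — or at least is positive precisely when $\cB$ is singular and $\cB\in\ut$. This is where the triangularizability hypothesis does the work: in the triangularizing coordinates the Wong iteration respects the flag, so the limit space is spanned by flag steps, and the deficiency it exhibits is genuine rather than an artifact of a bad choice of $B_0$. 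The field-size bound $|\F|\geq n+1$ enters to guarantee that a generic $B_0$ (and the generic choices inside the iteration) can be found deterministically by trying $\leq n+1$ values of a parameter, using that a nonzero polynomial of degree $\leq n$ has a non-root among $n+1$ field elements.

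Given the witness $U$, the constructive part proceeds by descent on $\disc$: if $\disc(\cB)=0$ then $\cB$ is nonsingular and I must exhibit an explicit nonsingular element. Here I would use the witness-guided reduction — pass to an equivalent smaller instance by quotienting out $\cB(U)$ in the range and restricting to a complement of $U$ in the domain, check that triangularizability is inherited, recurse, and lift the nonsingular element back. When $\disc(\cB)=0$, the recursion bottoms out and one assembles a nonsingular matrix from the blocks; the $|\F|\geq n+1$ condition again secures that the finitely many linear-combination choices needed to make each diagonal block nonsingular simultaneously can be made over $\F$. The main obstacle I anticipate is the two-sided nature of the problem: unlike the classical Wong sequence which handles $\ker$ and $\im$ of a single matrix, here we must track how the whole space $\cB$ moves a subspace, and proving that the stabilized $U_\infty$ actually certifies $\disc(\cB)$ — not merely some smaller deficiency — requires genuinely using that the unknown $C,D$ triangularize $\cB$, so that the iteration is "aligned" with the hidden flag. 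Controlling the Boolean complexity over $\Q$ (bit-size blow-up of the $U_k$ through $n$ iterations of inverse-image computations) is a secondary technical hurdle, handled by working with reduced bases and bounding determinants via Hadamard-type inequalities.
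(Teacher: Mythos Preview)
Your proposal has the right high-level shape (structural observation that singularity in $\ut$ forces a common zero diagonal and hence a $1$-singularity witness; a Wong-type iteration to detect this; recursion on a block quotient), but several load-bearing steps are set up incorrectly.

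First, the iteration is reversed. You write $U_{k+1}=B_0^{-1}(\cB(U_k))$. Since $B_0\in\cB$, one always has $B_0(T)\subseteq\cB(T)$, so this sequence is nondecreasing and, started from $V$, stays at $V$; started from a generic $U_0$, it grows but its limit $U_\infty$ satisfies $\cB(U_\infty)=B_0(U_\infty)$, which for nonsingular $B_0$ gives deficiency zero. The paper uses the \emph{first} Wong sequence of $(A,\cB)$, namely $U_0=V$, $U_{i+1}=\cB^{-1}(A(U_i))$, whose limit $U^*$ is the largest $T$ with $\cB(T)\subseteq A(T)$; this is what yields either a singularity witness or a block-upper-triangular reduction with $A$ nonsingular on the top block (Lemma~\ref{prop:first_1_dim}). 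Second, even with the correct iteration, the paper does not and need not prove that $\dim U^*-\dim\cB(U^*)=\disc(\cB)$; it only needs the dichotomy ``witness or block reduction,'' and for SDIT a $1$-singularity witness suffices. Third, a ``generic'' $B_0$ is not what makes this work. The essential structural lemma (Lemma~\ref{lem:first_correct}) says that for a triangularizable $\cB$ given by a basis $B_1,\dots,B_m$, either the common kernel is nonzero, or \emph{some} basis element $B_j$ has nonzero first-Wong limit; the algorithm therefore tries each $B_j$. Genericity buys nothing here, and your argument that the iteration ``respects the flag'' only shows the limit is nonzero for the particular $B_j$ whose top-left triangular entry is nonzero, not for a generic combination.

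Finally, your plan for $\Q$ does not go through. The paper explicitly notes it cannot bound the bit growth of bases along Wong sequences, so Hadamard-type control is exactly the missing ingredient, not a routine step. Instead the paper gives an algebraic criterion for triangularizability of $\cB S^{-1}$ via nilpotency of commutator ideals (Lemmas~\ref{lem:triang-reg} and~\ref{lem:tri_nonsing}), which is preserved under reduction modulo a small prime $p$; one then runs the algebraic-RAM algorithm over $\F_p$ for polynomially many primes. Your opening remark about reducing $\ut$ to $\rkone$ is also backwards: Lemma~\ref{prop:reduction_to_triangular} reduces general SMR to triangularizable SMR, which is why the paper only claims SDIT (not SMR) for $\ut$.
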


%\begin{theorem}\label{thm:main1}
%Let $\F$ be either $\Q$ or a finite field, and let $B_1, \dots, B_m$ be $n\times n$ 
%matrices over $\F$. 
%Then there is a deterministic polynomial-time 
%algorithm such that if $\cB=\langle B_1, \dots, B_m\rangle$ is rank-$1$ spanned, 
%it computes $\cork(\cB)$. Furthermore, 
%if the size of the field $\F$ is at least $n+1$,
%the algorithm 
%constructs some $B\in \cB$ of maximum rank, and
%a $\cork(\cB)$-singularity witness.
%\end{theorem} 

%\begin{theorem}\label{thm:main2}
%Let $\F$ be either $\Q$ or a finite field of size  at least $n+1$, 
%and let $B_1, \dots, B_m$ be $n\times n$ matrices over $\F$. 
%Then there is a deterministic polynomial-time
%algorithm such that if $\cB=\langle B_1, \dots, B_m\rangle$ is triangularizable, 
%it decides whether $\cB$ contains a nonsingular matrix or not. 
%Furthermore, if $\cB$ is nonsingular then it outputs a nonsingular $B\in\cB$. Otherwise it outputs a singularity witness.
%\end{theorem}

%We remind the reader that the definitions of rank-$1$ spanned and triangularizable matrix spaces only require that $\cB$ is  rank-$1$ spanned or triangularizable, respectively, 
%over some extension field of $\F$ (say, the algebraic closure) 
%rather than over $\F$. 
%We remark that Theorem~\ref{thm:main1} remains true if we weaken
%the assumptions by only requiring that
%$\cB$ is  rank-$1$ spanned over some extension field of $\F$ %(say, the 
%%algebraic closure) 
%rather than over $\F$. 
Theorem~\ref{thm:main1} can be slightly strengthened as follows: instead of 
assuming
that the whole space $\cB$ is rank-$1$ spanned, it is 
sufficient to suppose that a subspace of $\cB$ of co-dimension one is 
spanned by rank-$1$ matrices. See Remark \ref{remark:thm1} (2) for the work 
needed to achieve this. 

Let us comment briefly on the
framework for our algorithms. 
%\gabor{was main technical tool we use in our algorithms.}
We generalize the first and second Wong sequences
for matrix pencils (essentially two-dimensional matrix spaces)
which have turned out to be useful among others 
in the area of linear 
differential-algebraic equations (see the recent survey \cite{Tre13}).
These were originally defined in \cite{Wong} for a pair of 
matrices $(A, B)$, and were recently used to compute the 
Kronecker normal form in a numerical stable way \cite{BT12,BT12addition}. 
We generalize Wong sequences to the 
case $(\cA, \cB)$ where $\cA$ and $\cB$ are matrix spaces, 
and show that they have analogous basic properties to the original ones. We 
relate the generalized Wong sequences to
Edmonds' problems via singularity witnesses. Essentially this connection allows 
us to design the algorithm 
for  $\mathbf{R_1}$ using the second Wong sequence, and the algorithm for 
%triangularizable matrix spaces 
$\ut$ using the first Wong sequence.
We remark that the application of the second Wong sequence is not
new. Similar techniques were used in \cite{IKS} to find
maximum rank matrices in the case where rank one generators
for $\cB$ were given.
% \gabor{Rewrote sentence and added the next one}
Furthermore, while preparing the present version, 
we became aware of the paper \cite{fr04} by Fortin and Reutenauer
in which essentially the same method is used for testing existence of 
$\cork(\cB)$-singularity witnesses (on a randomized algebraic RAM).
%While the first extension can be achieved easily, 
%the second extension requires some more work (though mostly technical).
%Furthermore, we actually have algorithms that require polynomially 
%many arithmetic operations and equality tests in a black box model for the base field $\F$. 

%On the other hand, our first algorithm works as long as $|\F|\geq n+1$, but only in time polynomial in the number of arithmetic operations. The second algorithm computes the maximum rank regardless of the field size, and runs in the size of the input data over many fields. 

%There is a subtle difference in the statements of Theorem~\ref{thm:main1} and Theorem~\ref{thm:main2}: in Theorem~\ref{thm:main1} the algorithm only decides the existence of nonsingular matrix, while the algorithm in Theorem~\ref{thm:main2} actually computes the maximum rank within the matrix subspace. In Section~\ref{sec:more_ER}, besides showing that this is unavoidable, we also introduce some subclasses of Edmonds-Rado class and discuss on matrix subspaces having bases of promised properties. 

\subsection{Comparison with previous works}\label{subsec:previous}

% 1. Gurvits and Lovasz

The idea of singularity witnesses was already present in Lov\'asz's work \cite{Lovasz}. 
%In fact, if we define $\maxdisc(\cB)=\max(c\in \N\mid \exists c$-discrepancy subspace witness$)$ for $\cB\leq \matspc$, then it is clear that 
%$
%\mincork(\cB)\geq \maxdisc(\cB)
%$
Among other things, Lov\'asz showed that for the rank-$1$ spanned case, the 
equality 
$
\cork(\cB) = \disc(\cB)
$
holds, by reducing it to Edmonds' Matroid Intersection theorem \cite{Edmonds70}, which in turn can be deduced from Rado's matroidal generalization of Hall's theorem \cite{Rado} (see also \cite{Welsh}).
%by reducing to the Matroid Intersection Theorem of Edmonds \cite{Edmonds70}. 
%\gabor{changed sentence}
Inspired by this fact, Gurvits introduced the term
\emph{Edmonds-Rado property} for 
membership in
the class of matrix  spaces which are
either nonsingular, or have a singularity witness.
%\gabor{inserted sentence}
Throughout this paper we refer to this class as
the \emph{Edmonds-Rado class}.
%Gurvits has  presented an efficient deterministic algorithm for matrix spaces over $\Q$ with this property. 
%\begin{definition}
%A matrix space has the \emph{Edmonds-Rado property}, or is in the \emph{Edmonds-Rado class}, if either it is nonsingular, or it is singular and it has singular subspace witnesses.
%\end{definition}
Gurvits listed several subclasses of the Edmonds-Rado class, 
including $\mathbf{R_1}$ (by the aforementioned result of Lov\'asz) and $\ut$.
%triangularizable matrices. 
A well-known example of a matrix space 
%\gabor{was: without the ER-property}
outside the Edmonds-Rado class 
is the linear space of skew symmetric matrices of size $3$ 
\cite{Lovasz}. 

%Let us compare Gurvits' result to ours. He has presented polynomial-time deterministic algorithms 
As we stated already, Gurvits has  presented a polynomial-time deterministic 
algorithm for the SDIT over
subfields of $\C$ 
%\gabor{was: $Q$}
%$\Q$ 
for matrix spaces in the Edmonds-Rado class. Therefore 
over 
these fields,
%\gabor{was: $Q$}
%$\Q$, 
his algorithm covers the SDIT for 
$\mathbf{R_1}$ and for $\ut$. %triangularizable matrices. 
Our algorithms 
(in the algebraic RAM model)
%\gabor{inserted}
are valid 
%not only over $\Q$ but 
%also over finite fields. 
over arbitrary sufficiently large fields.
In the triangularizable case we also deal with
the SDIT, but for $\mathbf{R_1}$ we solve the more general SMR. 
In fact, it is not hard to reduce SMR for the general to SMR for the 
triangularizable case (see Lemma~\ref{prop:reduction_to_triangular}), so 
solving SMR for $\ut$ is as hard as the general case.
%we do not expect to solve SMR for the triangularizable case at present.
In both cases the algorithms solve the constructive version of the problems, 
and they also construct singularity witnesses.
%\gabor{deleted: 
%except for the SDIT over the rationals.}
Finally, they work
in polynomial time when the field size is at least $n+1$. Moreover, for 
$\mathbf{R_1}$ the algorithm solves the non-constructive SMR in polynomial time 
even over arbitrarily small finite fields,
%\gabor{was: regardless of the field size,} 
settling 
an
%\gabor{was: the} 
open problem of Gurvits.

%On the other hand, both our algorithms use polynomially many arithmetic operations,  assuming the field size is no less than $n+1$. With additional care of the bit size issue, over rational number field the algorithms run in polynomial time in the size of the input data. Also, the first (resp. second) algorithm either constructs a nonsingular matrix (resp. a matrix of maximum rank), or computes a nonsingular (resp. discrepancy) subspace witness, while Gurvits' algorithm only decides the existence of nonsingularity or not. Finally note that if only maximum rank is of interest, the second algorithm works regardless of the field size. Thus that algorithm answers a question by Gurvits (the Edmonds' problem for $\mathbf{C}$ over finite fields) as mentioned above. 
% The mixed matrix setting, and rank-1.

Over fields of constant size, the SMR has certain practical implications \cite{HKM05,HKY06},
but is shown to be NP-hard \cite{BFS} in general.
Some special cases have been studied, mostly in the form of the \emph{mixed matrices}, that is linear matrices where each entry is either a variable or a field element. Then by restricting the way variables appear in the matrices some cases turn out to have efficient deterministic algorithms, including when every variable appears at most once (\cite{HKM05}, building on \cite{Geelen,Murota}), 
and when the mixed matrix is skew-symmetric and every variable appears at most twice (\cite{GIM03,GI05}). Finally in \cite{IKS}, Ivanyos, Karpinski and Saxena present a deterministic polynomial-time algorithm for the case when 
among the input matrices $B_1, \dots, B_m$ all but $B_1$ are of rank $1$. 

% Identity test on algebraic branching programs.

%Over fields of size $\Omega(n)$ where $n$ is the size of the matrix space, in light of Schwartz-Zippel lemma, the problem is only interesting from the point of derandomization, and turns out to be fundamental due to the connection to circuit lower bounds as shown in \cite{KI04}. Over $\Q$, Gurvits' result \cite{Gurvits} covers the most general class of matrix spaces. 
%Another direction is related to the formulation of the problem
%already mentioned at the beginning of the paper: 
%given $B_1, \dots, B_m\in\matspc$, let $x_1, \dots, x_m$ be 
%independent variables, and we form $\mathbf{B}=\sum_{i\in[m]}x_iB_i$, 
%namely $\mathbf{B}$ is a matrix with homogeneous linear polynomials in 
%$x_1, \dots, x_m$. Considering a coset $B_0+\langle B_1, \dots, B_m \rangle$
%leads to a matrix whose entries are affine linear polynomials.
%Then the coset $B_0+\langle B_1, \dots, B_m\rangle$ 
%contains a nonsingular matrix if and only if $\det(B_0+\mathbf{B})$ 
%is not the zero polynomial. To determine whether $\det(B_0+\mathbf{B})\equiv 0$ 
%or not is called the symbolic identity test problem (SDIT). 
As a computational model of polynomials, determinants with affine polynomial 
entries turn out to be equivalent to algebraic branching programs 
(ABPs) \cite{Valiant,Berkowitz} up to a polynomial overhead. 
Thus the identity test for ABPs is the same as SDIT. 
For restricted classes of ABPs, (quasi)polynomial-time deterministic 
identity test algorithms have been devised 
(cf. \cite{FS12} and the references therein). 
Note that identity test results for SDIT and ABPs are 
in general incomparable. For an application
of SDIT to quantum information processing see \cite{CDS10}.
%We note that in general it is not immediate to see what the restrictions for ABPs correspond to the SDIT setting (and vice versa). 

%An important distinction is that the algorithm in \cite{IKS} 
%works regardless of the field size, while -- at least for the
%constructive version -- we need here the 
%field size to be at least $n+1$, which seems to be a necessary 
%cost due that the rank-$1$ basis is only implicitly assumed.

\paragraph{Organization.} In Section~\ref{sec:wong} we define Wong sequences of 
a pair of matrix spaces, and present their basic properties. In 
Section~\ref{sec:second} the connection between the second Wong sequence and 
singularity witnesses is shown. Based on this connection we introduce the 
\pep{}, and reduce the SMR  to it. We also prove here Theorem~\ref{thm:main1}
under the hypothesis that there is a polynomial time algorithm for the \pep. In 
Section~\ref{sec:rank1} we show an algorithm for the \pep{} that works in 
polynomial time for rank-$1$ spanned matrix spaces. Section~\ref{sec:first} is 
devoted to the algorithm for triangularizable matrix spaces, proving 
Theorem~\ref{thm:main2}. 
Finally, in Section~\ref{sec:more_ER} we propose and investigate some natural 
subclasses of the Edmonds-Rado class. 
%The readers are referred to the full version 
%\cite{conf_version} for certain missing details, and some 
%discussion on the Edmonds-Rado class and some subclasses.
%Appendix~\ref{app:first} contains full details for the triangularizable case, 
%while in Appendix~\ref{sec:more_ER} we discuss the Edmonds-Rado class and some 
%subclasses.
%In Section~\ref{sec:more_ER} we propose and investigate some natural subclasses of the Edmonds-Rado class. 

%\paragraph{Notations.} For $n\in\N$, $[n]=\{1, \dots, n\}$. We use $0$ for the zero vector space. 

%\subsection{Finer subclasses of the Edmonds-Rado class}

%So far this is the most impressive algorithm for Edmonds' problem. 

%\subsection{Notations}

%Script letters (e.g. $\cA$, $\cB$\dots) are to denote linear subspaces of $\matsp$, and we write as $\cA\leq \matsp$. If $A_1, \dots, A_m\in \matsp$ span $\cA\leq \matsp$, we denote by $\cA=\langle A_1, \dots, A_m\rangle$. 

\section{Wong sequences for pairs of matrix spaces}\label{sec:wong}

%After fixing a coordinate system, $V$ can be identified as $\F^n$, and $\matsp$ as $M(n, \F)$, the set of $n$ by $n$ matrix over $\F$. 
%A matrix subspace $\cB$ is a linear subspace of $\matsp$, denoted as $\cB\leq\matsp$. If $\{B_1, \dots, B_m\}\subseteq \matsp$ linearly spans $\cB$, we denote it by $\cB=\langle B_1, \dots, B_m\rangle$. 
For $n\in\N$, we set $[n]=\{1, \dots, n\}$. We use $0$ to denote the zero vector space. 
In this section we generalize the classical Wong sequences of matrix pencils to 
the situation of pairs of matrix subspaces. 
This is 
%\gabor{the main technical tool}
the framework for the algorithms in 
this work. Let $V$ and $V'$ be finite dimensional vector spaces over a field 
$\F$, and let $\matsp$ be the vector space of linear maps from $V$ to $V'$. 
Suppose $n=\dim(V)$ and $n'=\dim(V')$. %Note that here we do not require that 
%$\dim(V)=\dim(V')$. 

Let $U\leq V$ and $W\leq V'$ be subspaces of $V$ and $V'$, respectively. For 
$A\in\matsp$, the image of $U$ under $A$ is $A(U)=\{A(u)\mid u\in U\}$, and the 
preimage of $W$ under $A$ is $A^{-1}(W)= \{v\in V\mid A(v)\in W\}$. To define 
generalized 
Wong sequences, the first step is to generalize the definitions of image and 
preimage under a single linear map $A$, to those under a matrix space 
$\cA\leq \matsp$. 

Naturally, the image of $U$ under $\cA$ is the span of the 
images of $U$ under every $A\in \cA$, that is $\cA(U)=\langle 
\cup_{A\in\cA}A(U)\rangle=\langle \{A(u)\mid A\in 
\cA, u\in U\}\rangle.$
On the other hand, the preimage of $W$ under $\cA$ may be somewhat unexpected. 
It turns out that we 
need to take the intersection of the preimages of $W$ under every $A\in \cA$, 
that is $\cA^{-1}(W)=\cap_{A\in\cA}A^{-1}(W)=\{v\in V\mid \forall A\in \cA, 
A(v)\in W\}$.
%\end{itemize}
%Note that $A(U)$, $\cA(U)$ are linear subspaces of $V'$, whereas
%$A^{-1}(W)$ and $\cA^{-1}(W)$ are subspaces of $V$. 
Note that $\cA(U)$ (resp. $\cA^{-1}(W)$) is a subspace of $V'$ (resp. $V$). 
%Also note that $\cA(U) =\langle \cup_{A\in\cA}A(U)\rangle$ and 
%$\cA^{-1}(W) =\cap_{A\in\cA}A^{-1}(W).$ 
Moreover, if
$\cA$ is spanned by $\{ A_1, \dots, A_m\}$, then $\cA(U)=\langle \cup_{i\in[m]} A_i(U)\rangle$, and $\cA^{-1}(W)=\cap_{i\in[m]} A_i^{-1}(W)$.
Some easy and useful facts are the following. 
\begin{lemma}\label{lemma:contain}
%\gabor{was: Fact}
For $\cA, \cB\leq \matsp$, and $U, S\leq V$, $W, T\leq V'$, we have:
\begin{enumerate}
\item If $U\subseteq S$ and $W\subseteq T$, then $\cA(U)\subseteq \cA(S)$ and $\cA^{-1}(W)\subseteq \cA^{-1}(T)$; 
\item If $\cB(U)\subseteq \cA(U)$ and $\cB(S)\subseteq \cA(S)$, then $\cB(\langle U\cup S\rangle)\subseteq \cA(\langle U\cup S\rangle)$; 
\item If $\cB^{-1}(W)\supseteq \cA^{-1}(W)$ and $\cB^{-1}(T)\supseteq \cA^{-1}(T)$, then $\cB^{-1}(W\cap T)\supseteq \cA^{-1}(W\cap T)$;
\item $\cA^{-1}(\cA(U))\supseteq U$, and $\cA(\cA^{-1}(W))\subseteq W$. 
\end{enumerate}
\end{lemma}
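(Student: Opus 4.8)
The plan is to reduce all four items to two elementary observations: the monotonicity asserted in item~1, together with the (deliberately asymmetric) distributivity identities $\cA(U+S)=\cA(U)+\cA(S)$ for images and $\cA^{-1}(W\cap T)=\cA^{-1}(W)\cap\cA^{-1}(T)$ for preimages. Once these are in place, items~2 and~3 are one-line consequences, and item~4 is immediate from the definitions.

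I would start with item~1. For the image part, every spanning vector $A(u)$ of $\cA(U)$ (with $A\in\cA$, $u\in U$) is, because $U\subseteq S$, also a spanning vector of $\cA(S)$, so $\cA(U)\subseteq\cA(S)$. For the preimage part, if $v\in\cA^{-1}(W)$ then $A(v)\in W\subseteq T$ for every $A\in\cA$, hence $v\in\cA^{-1}(T)$. Next I would record the two distributivity identities. Since $\langle U\cup S\rangle=U+S$, for any $B\in\cA$ and any $v=u+s$ with $u\in U$, $s\in S$ we have $B(v)=B(u)+B(s)\in\cA(U)+\cA(S)$, which gives $\cA(\langle U\cup S\rangle)\subseteq\cA(U)+\cA(S)$; the reverse inclusion is item~1 applied to $U,S\subseteq U+S$. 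Dually, $v\in\cA^{-1}(W\cap T)$ means $A(v)\in W\cap T$ for all $A\in\cA$, which is the same as saying $A(v)\in W$ for all $A$ and $A(v)\in T$ for all $A$, i.e.\ $v\in\cA^{-1}(W)\cap\cA^{-1}(T)$; again the converse is item~1.

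Items~2 and~3 then follow directly: $\cB(\langle U\cup S\rangle)=\cB(U)+\cB(S)\subseteq\cA(U)+\cA(S)=\cA(\langle U\cup S\rangle)$, and $\cA^{-1}(W\cap T)=\cA^{-1}(W)\cap\cA^{-1}(T)\subseteq\cB^{-1}(W)\cap\cB^{-1}(T)=\cB^{-1}(W\cap T)$. For item~4, if $u\in U$ then for every $A\in\cA$ the vector $A(u)$ is one of the generators of $\cA(U)$, so $A(u)\in\cA(U)$ for all $A$, i.e.\ $u\in\cA^{-1}(\cA(U))$; and if $v\in\cA^{-1}(W)$ then $A(v)\in W$ for every $A\in\cA$, so all generators of $\cA(\cA^{-1}(W))$ lie in the subspace $W$, whence $\cA(\cA^{-1}(W))\subseteq W$. (The reduction to a finite spanning set noted just before the lemma can be used to phrase the generator arguments, but is not needed.)

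There is no genuine obstacle here: the whole content is bookkeeping with the definitions. The one point worth flagging explicitly in the write-up is the source of the asymmetry between items~2 and~3 — images of subspace families commute with sums, whereas preimages of a subspace family commute with intersections, not sums — which is precisely the reason $\cA^{-1}(W)$ was defined as $\bigcap_{A\in\cA}A^{-1}(W)$ rather than as a span.
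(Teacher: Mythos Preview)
Your proof is correct. The paper does not actually supply a proof of this lemma: it introduces the statement with ``Some easy and useful facts are the following'' and leaves the verification to the reader. Your write-up fills in exactly the routine bookkeeping the authors omitted, and your isolation of the two distributivity identities $\cA(U+S)=\cA(U)+\cA(S)$ and $\cA^{-1}(W\cap T)=\cA^{-1}(W)\cap\cA^{-1}(T)$ is a clean way to organize items~2 and~3.
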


We now define two Wong sequences for a pair of matrix subspaces. 
\begin{definition}\label{def:wong_seqs}
Let $\cA, \cB\leq \matsp$. The sequence of subspaces $(U_i)_{i\in\N}$ of $V$  is called the \emph{first Wong sequence of $(\cA, \cB)$},
where $U_0=V$, and $U_{i+1}=\cB^{-1}(\cA(U_i))$. 
The sequence of subspaces $(W_i)_{i\in\N}$ of $V'$ is called the \emph{second Wong sequences of $(\cA, \cB)$}, where
$W_0=0$, and $W_{i+1}=\cB(\cA^{-1}(W_i))$. 
%Let $\cA, \cB\leq \matsp$. Consider the following subspaces of $V$: $U_0=V$, $U_i'=\cA(U_i)$ and $U_{i+1}=\cB^{-1}(U_i')$. Then the sequence of subspaces $(U_i)_{i\in\N}$ is called the \emph{first Wong sequence of $(\cA, \cB)$}. Note that $U_{i+1}=\cB^{-1}(\cA(U_i))$. 

%Then consider the following subspaces of $V$: $W_0=0$, $W_i'=\cA^{-1}(W_i)$, and $W_{i+1}=\cB(W_i)$. The sequence of subspaces $(W_i)_{i\in\N}$ is called the \emph{second Wong sequences of $(\cA, \cB)$}. Note that $W_{i+1}=\cB(\cA^{-1}(W_i))$.
\end{definition}

When $\cA=\langle A\rangle$ and $\cB=\langle B\rangle$ are one 
dimensional matrix spaces, 
%the first Wong sequence corresponds to the classical first Wong sequence 
%for the matrix pencil $Ax-B$, and $(\cA^{-1}(W_i))_{i\in \N}$ 
%corresponds to the classical second Wong sequence for $Ax-B$ 
the Wong sequences for $(\cA,\cB)$ coincide with
the classical Wong sequences for the matrix pencil $Ax-B$
\cite{Wong,BT12}. 
%, cf. \cite{Tre13}.
%
The following properties are straightforward generalizations of those for classical Wong sequences. We 
start by considering the first Wong sequence. 

\begin{proposition}\label{prop:first}
Let $(U_i)_{i\in\N}$ be the
first Wong sequence of $(\cA, \cB)$. Then for all $i\in\N$, we have $U_{i+1}\subseteq U_i$. Furthermore, $U_{i+1}=U_i$ if and only if $\cB(U_i)\subseteq \cA(U_i)$. 
%for a specific $i\in\N$, either $U_{i+1}=U_i$, or $U_{i+1}\subsetneq U_i$. 
\end{proposition}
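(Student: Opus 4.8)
The plan is to prove the two assertions in turn, both relying on the monotonicity facts in Lemma~\ref{lemma:contain}, especially parts (1) and (4), together with a small ``adjunction-type'' reformulation.

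First I would establish $U_{i+1}\subseteq U_i$ for all $i$ by induction on $i$. The base case $U_1=\cB^{-1}(\cA(V))\subseteq V=U_0$ is immediate, since by definition $\cB^{-1}(W)$ is a subspace of $V$. For the inductive step, assuming $U_{i+1}\subseteq U_i$, two applications of Lemma~\ref{lemma:contain}(1) give first $\cA(U_{i+1})\subseteq\cA(U_i)$ and then $\cB^{-1}(\cA(U_{i+1}))\subseteq\cB^{-1}(\cA(U_i))$, i.e. $U_{i+2}\subseteq U_{i+1}$.

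For the second assertion the key observation is: for any subspace $U\leq V$, one has $\cB(U)\subseteq\cA(U)$ if and only if $U\subseteq\cB^{-1}(\cA(U))$. The forward direction is immediate from the definition of $\cB^{-1}$: if $u\in U$ then $B(u)\in\cB(U)\subseteq\cA(U)$ for every $B\in\cB$, so $u\in\cB^{-1}(\cA(U))$. The converse follows by applying $\cB$ to both sides (monotonicity of the image) and using $\cB(\cB^{-1}(\cA(U)))\subseteq\cA(U)$ from Lemma~\ref{lemma:contain}(4). Now specialize $U=U_i$: since $U_{i+1}=\cB^{-1}(\cA(U_i))$ and $U_{i+1}\subseteq U_i$ by the first part, the equality $U_{i+1}=U_i$ is equivalent to the inclusion $U_i\subseteq\cB^{-1}(\cA(U_i))$, which by the observation above is equivalent to $\cB(U_i)\subseteq\cA(U_i)$.

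I do not expect a genuine obstacle here; the only point requiring care is to keep straight that the image $\cA(U)$ is a \emph{span} of all $A(u)$ while the preimage $\cA^{-1}(W)$ is an \emph{intersection} of the $A^{-1}(W)$, so that $u\in\cB^{-1}(W)$ means $B(u)\in W$ for \emph{all} $B\in\cB$, and that $\cB(U)\subseteq W$ for a subspace $W$ is equivalent to $B(u)\in W$ for all $B\in\cB$ and $u\in U$. Once these definitions are applied correctly, both parts are short.
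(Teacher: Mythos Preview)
Your proof is correct and essentially matches the paper's argument: both use induction together with Lemma~\ref{lemma:contain}(1) for the monotonicity $U_{i+1}\subseteq U_i$, and then Lemma~\ref{lemma:contain}(4) combined with the definition of $\cB^{-1}$ for the equivalence. Your adjunction-style packaging of the second part is a clean reformulation (the paper argues one direction by contrapositive with an explicit witness $v\in U_i$, $B\in\cB$ with $B(v)\notin\cA(U_i)$), but the underlying steps are the same.
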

\begin{proof}
Firstly we show that $U_{i+1}\subseteq U_i$, for every $i\in\N$. For $i=0$, this holds trivially. For $i>0$, by 
Lemma \ref{lemma:contain} (1) we get $U_{i+1}=\cB^{-1}(\cA(U_i))\subseteq \cB^{-1}(\cA(U_{i-1}))=U_{i}$, since $U_{i}\subseteq U_{i-1}$.

%If not, let $j$ be the smallest positive integer s.t. there exists $v\in V$, $v\in U_{j+1}$ while $v\not\in U_j$. As $U_j\subseteq U_{j-1}$, $\cB^{-1}(\cA(U_j))\subseteq \cB^{-1}(\cA(U_{j+1}))$. As $v\in U_{j+1}=\cB^{-1}(\cA(U_j))$, $v\in U_j=\cB^{-1}(\cA(U_{j-1}))$, a contradiction. 

Suppose now that $\cB(U_i)\subseteq \cA(U_i)$, for some $i$. 
Then $U_i \subseteq \cB^{-1}(\cB(U_i)) \subseteq \cB^{-1}(\cA(U_i))$
respectively by Lemma~\ref{lemma:contain} (4) and (1),  which gives $U_{i+1}=U_i$. If $\cB(U_i)\not\subseteq \cA(U_i)$ then there exist $B\in\cB$ and $v\in U_i$ such that $B(v)\not\in \cA(U_i)$. Thus $v\not\in \cB^{-1}(\cA(U_i))=U_{i+1}$, which gives $U_{i+1}\subset U_i$. 
\end{proof}

Given Proposition~\ref{prop:first}, we see that the first Wong sequence stabilizes after at most $n$ steps at some subspace. That is, for any $(\cA, \cB)$, there exists $\ell\in\{0, \dots, n\}$, such that 
$U_0\supset U_1\supset \dots \supset U_\ell= U_{\ell+1}=\cdots$. 
In this case we call the subspace $U_\ell$ the \emph{limit} of $(U_i)_{i\in\N}$, and we denote it by $U^*$. 

\begin{proposition}\label{prop:largest}
$U^*$ is the largest subspace $T\leq V$ such that $\cB(T)\subseteq \cA(T)$. 
\end{proposition}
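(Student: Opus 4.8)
The plan is to show two inclusions of the relevant kind: first that $U^*$ itself satisfies $\cB(U^*)\subseteq \cA(U^*)$, and second that any subspace $T$ with $\cB(T)\subseteq \cA(T)$ is contained in $U^*$. The first part is immediate from the characterization in Proposition~\ref{prop:first}: since the sequence stabilizes at $U^*=U_\ell=U_{\ell+1}$, the ``only if'' direction (or rather the equivalence) gives exactly $\cB(U^*)\subseteq \cA(U^*)$. So $U^*$ is a legitimate candidate for the ``largest such $T$,'' and it remains to prove maximality.

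For maximality, I would argue by induction that every $T$ with $\cB(T)\subseteq \cA(T)$ satisfies $T\subseteq U_i$ for all $i\in\N$, hence $T\subseteq U^*$. The base case $T\subseteq U_0=V$ is trivial. For the inductive step, assume $T\subseteq U_i$. Then by Lemma~\ref{lemma:contain}(1), monotonicity of the image operator gives $\cA(T)\subseteq \cA(U_i)$, so combining with the hypothesis $\cB(T)\subseteq \cA(T)$ we get $\cB(T)\subseteq \cA(U_i)$. Now apply Lemma~\ref{lemma:contain}(4): $T\subseteq \cB^{-1}(\cB(T))\subseteq \cB^{-1}(\cA(U_i))=U_{i+1}$, where the middle inclusion uses Lemma~\ref{lemma:contain}(1) again (monotonicity of preimage applied to $\cB(T)\subseteq \cA(U_i)$). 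This closes the induction.

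Putting the two pieces together: $U^*$ satisfies $\cB(U^*)\subseteq\cA(U^*)$, and any $T$ with that property is contained in every $U_i$ and therefore in the limit $U^*$. Hence $U^*$ is the largest such subspace, which is the claim. I do not expect any genuine obstacle here; the only mild subtlety is being careful that ``largest'' is witnessed both by $U^*$ belonging to the family and by its containing every other member, and the induction in the second half is the place where one must invoke the right parts of Lemma~\ref{lemma:contain} in the right order — but this is routine given the tools already established.
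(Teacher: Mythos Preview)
Your proof is correct and essentially identical to the paper's: both invoke Proposition~\ref{prop:first} to see that $U^*$ satisfies $\cB(U^*)\subseteq\cA(U^*)$, then prove $T\subseteq U_i$ for all $i$ by induction via the chain $T\subseteq \cB^{-1}(\cB(T))\subseteq \cB^{-1}(\cA(T))\subseteq \cB^{-1}(\cA(U_i))=U_{i+1}$ using Lemma~\ref{lemma:contain}. The only difference is cosmetic ordering of the intermediate inclusions.
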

\begin{proof}
By Proposition~\ref{prop:first} we know that $U^*$ satisfies 
$\cB(U^*)\subseteq\cA(U^*)$.
Consider an arbitrary $T\leq V$ such that $\cB(T)\subseteq \cA(T)$, we show 
by induction that $T\subseteq U_i$, for all $i$. When $i=0$ this trivially holds. Suppose that 
$T\subseteq U_i$, for some $i$. Then by repeated applications of 
Lemma~\ref{lemma:contain} we have
$T\subseteq \cB^{-1}(\cB(T))\subseteq \cB^{-1}(\cA(T))\subseteq \cB^{-1}(\cA(U_i)) = U_{i+1}$. 
\end{proof}

Analogous properties hold for the second Wong sequence $(W_i)_{i\in\N}$. In particular the sequence stabilizes after at most $n'$ steps, and there exists a \emph{limit} subspace $W^*$ of $(W_i)_{i\in\N}$. We summarize them in the following proposition.

\begin{proposition}\label{prop:second}
Let $(W_i)_{i\in\N}$ be the
second Wong sequence of $(\cA, \cB)$.
Then
\begin{enumerate}
\item$W_{i+1}\supseteq W_i$,  for all $i\in\N$. Furthermore, $W_{i+1}=W_i$ if and only if $\cB^{-1}(W_i)\supseteq \cA^{-1}(W_i)$.
\item The limit subspace $W^*$ is the smallest subspace $T\leq V'$ s.t. $\cB^{-1}(T)\supseteq \cA^{-1}(T)$. 
\end{enumerate}
\end{proposition}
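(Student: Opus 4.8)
The plan is to mirror the proof of Proposition~\ref{prop:first} and Proposition~\ref{prop:largest}, dualizing inclusions throughout. The second Wong sequence is obtained from the first by swapping the roles of images and preimages, of $V$ and $V'$, and of $\subseteq$ and $\supseteq$; Lemma~\ref{lemma:contain} is already stated in the symmetric form that makes this dualization go through cleanly.

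\begin{proof}
\textbf{Part 1.} We first show $W_i\subseteq W_{i+1}$ for all $i$, by induction. For $i=0$ this is immediate since $W_0=0$. For $i>0$, assuming $W_{i-1}\subseteq W_i$, Lemma~\ref{lemma:contain}~(1) gives $\cA^{-1}(W_{i-1})\subseteq \cA^{-1}(W_i)$, and applying Lemma~\ref{lemma:contain}~(1) once more (monotonicity of $\cB(\cdot)$, which is immediate from the definition but is not among the listed facts, so I would instead note $\cB(S)\subseteq\cB(T)$ whenever $S\subseteq T$ directly from $\cB(S)=\langle\{B(s): B\in\cB, s\in S\}\rangle$) yields $W_i=\cB(\cA^{-1}(W_{i-1}))\subseteq \cB(\cA^{-1}(W_i))=W_{i+1}$.

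Next, suppose $\cB^{-1}(W_i)\supseteq \cA^{-1}(W_i)$ for some $i$. Then, using Lemma~\ref{lemma:contain}~(4), $W_{i+1}=\cB(\cA^{-1}(W_i))\subseteq \cB(\cB^{-1}(W_i))\subseteq W_i$, and combined with $W_{i+1}\supseteq W_i$ this gives $W_{i+1}=W_i$. Conversely, if $\cB^{-1}(W_i)\not\supseteq \cA^{-1}(W_i)$, pick $v\in \cA^{-1}(W_i)$ with $v\notin \cB^{-1}(W_i)$; then there is $B\in\cB$ with $B(v)\notin W_i$, while $B(v)\in \cB(\cA^{-1}(W_i))=W_{i+1}$, so $W_{i+1}\supsetneq W_i$.

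\textbf{Part 2.} By Part 1 the chain $W_0\subseteq W_1\subseteq\cdots$ is nondecreasing in $V'$, hence stabilizes after at most $n'=\dim(V')$ steps at a limit $W^*$, and by Part 1 the stabilization condition says exactly $\cB^{-1}(W^*)\supseteq \cA^{-1}(W^*)$. It remains to prove minimality. Let $T\leq V'$ satisfy $\cB^{-1}(T)\supseteq \cA^{-1}(T)$; we show $W_i\subseteq T$ for all $i$ by induction. For $i=0$ this is trivial. Assuming $W_i\subseteq T$, we get $\cA^{-1}(W_i)\subseteq \cA^{-1}(T)\subseteq \cB^{-1}(T)$ by Lemma~\ref{lemma:contain}~(1) and the hypothesis on $T$, and hence $W_{i+1}=\cB(\cA^{-1}(W_i))\subseteq \cB(\cB^{-1}(T))\subseteq T$ by monotonicity of $\cB(\cdot)$ and Lemma~\ref{lemma:contain}~(4). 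Taking $i$ large enough that $W_i=W^*$ gives $W^*\subseteq T$.
\end{proof}

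The only mild subtlety — the ``main obstacle'', though it is minor — is that monotonicity of the image operator $S\mapsto\cB(S)$ under subspace inclusion is used repeatedly but is not literally one of the four items in Lemma~\ref{lemma:contain} (which only lists monotonicity of $\cA(\cdot)$, not $\cB(\cdot)$, and of $\cA^{-1}(\cdot)$); however this is immediate from the definition $\cB(S)=\langle\{B(s):B\in\cB,\,s\in S\}\rangle$, so I would just invoke it inline or fold it into a reading of Lemma~\ref{lemma:contain}~(1) with $\cA$ replaced by $\cB$. Everything else is a routine dualization of the arguments already given for the first Wong sequence.
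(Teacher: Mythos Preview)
Your proof is correct and is precisely the dualization that the paper intends: the paper does not actually give a proof of Proposition~\ref{prop:second}, it simply states that analogous properties hold and summarizes them. Your concern about Lemma~\ref{lemma:contain}~(1) is unfounded, though: that item is stated for arbitrary matrix spaces $\cA,\cB\leq\matsp$, so monotonicity of the image operator holds for $\cB$ just as well as for $\cA$; there is no need to argue it separately from the definition.
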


It is worth noting that the second Wong sequence can be viewed as the 
dual of the first one in the following sense. 
Assume that $V$ and $V'$ are equipped with nonsingular symmetric
bilinear forms, both denoted by $\langle , \rangle$. For a 
linear map  $A:V\to V'$ let $\tp{A}:V'\to V$ stand for the transpose of $A$ with respect
to $\langle , \rangle$. This is the unique map with the property
$\langle\tp{A}(u), v\rangle=\langle u, A(v)\rangle $, for all $u\in V'$ and $v\in V$.
For a matrix space $\cA$, let $\tp{\cA}$ be the space
$\{\tp{A}|A\in\cA\}$. 
For $U\leq V$, the orthogonal subspace of $U$ is defined as 
$U^\perp=\{v\in V\mid \langle v, u \rangle=0$ for all $u \in U\}$. %thus $U^\perp\leq V$. 
Similarly we define $W^\perp$ for $W\leq V'$. 
Then we have $((\tp{\cA})^{-1}(U))^\perp=\cA(U^\perp)$, 
and $(\tp{\cA}(V))^\perp=\cA^{-1}(V^\perp)$. 
It can be verified that if $(W_i)_{i\in\N}$ is the second Wong sequence 
of $(\cA, \cB)$ and $(U_i)_{i\in\N}$ the first Wong sequence of 
$(\tp{\cA}, \tp{\cB})$, then $W_i=U_i^\perp$. We note  that the duality 
%property 
of Wong sequences was 
%essentially, 
already derived
in~\cite{BT12} for pairs of matrices.
% and $(\cA^{-1}(U))^\perp=\tp{\cA}(U^\perp)$. 

%\subsection{Wong sequences and Edmonds' problem}

%In this subsection we present some connections of Wong sequences to the Edmonds problem via the singular and $c$-discrepancy subspace witnesses. Let $V$, $V'$ and $\matsp$ be as in the last section. 

%\subsection{Computing Wong sequences}

For a matrix space $\cA$ and a subspace $U\leq V$ given in terms of a basis we can compute
$\cA(U)$ by applying the basis elements for $\cA$ to those
of $U$ and then selecting a maximal set of linearly
independent vectors. A possible way of computing
$\cA^{-1}(U)$ for $U\leq V'$ is to compute first $U^\perp$,
then $\tp{\cA}(U^\perp)$ and finally
$\cA^{-1}(U)=(\tp{\cA}(U^\perp))^\perp$.
Therefore we have
\begin{proposition}\label{prop:wongcomplexity}
Wong sequences can be computed in time
using $(n+n')^{O(1)}$ 
on an algebraic RAM.
%\gabor{was: arithmetic operations.}
\end{proposition}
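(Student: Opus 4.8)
The plan is to show that a single step of either Wong sequence costs a polynomial number of field operations, and then to invoke the fact that only polynomially many steps occur before stabilization. By Propositions~\ref{prop:first} and~\ref{prop:second}, the first Wong sequence of $(\cA,\cB)$ stabilizes after at most $n$ steps and the second after at most $n'$ steps, so it suffices to bound the cost of passing from a basis of $U_i$ to one of $U_{i+1}=\cB^{-1}(\cA(U_i))$, and dually from a basis of $W_i$ to one of $W_{i+1}=\cB(\cA^{-1}(W_i))$.

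First I would normalize the input. Since $\matsp$ has dimension $nn'$, a single Gaussian elimination replaces any generating set of $\cA$ (and of $\cB$) by a basis of size at most $nn'$, at cost $(n+n')^{O(1)}$. From then on, every intermediate subspace that arises is carried as a list of at most $n$ vectors of $V$, or at most $n'$ vectors of $V'$. Next, to compute $\cA(U)$ from a basis $A_1,\dots,A_m$ of $\cA$ (with $m\le nn'$) and a basis $u_1,\dots,u_k$ of $U$ (with $k\le n$), I form the $mk\le n^2 n'$ vectors $A_j(u_\ell)$ by matrix--vector multiplication and extract a maximal linearly independent subset by Gaussian elimination in $V'$; this is $(n+n')^{O(1)}$ field operations. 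For the preimage $\cA^{-1}(W)$ with $W\le V'$ given by a basis, I follow the recipe indicated just before the proposition: compute a basis of $W^\perp$ by solving a homogeneous linear system, apply the transposes $\tp{A_j}$ to it exactly as in the image computation to obtain $\tp{\cA}(W^\perp)$, and take the orthogonal complement, using $\cA^{-1}(W)=(\tp{\cA}(W^\perp))^\perp$. Equivalently, one may observe that $v\in\cA^{-1}(W)$ iff $A_j(v)\in W$ for every $j$, a homogeneous linear system in $v$ solved by one nullspace computation. Either route is polynomial. Composing an image with a preimage computation gives one Wong iteration in $(n+n')^{O(1)}$ field operations, and multiplying by the $O(n+n')$ iterations yields the claimed bound.

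There is no genuinely hard step here; the only point deserving care is keeping the a priori size bounds in place, namely that $\cA$ and $\cB$ may be taken with at most $nn'$ generators and that every $U_i$ (resp.\ $W_i$) has dimension at most $n$ (resp.\ $n'$), so that every Gaussian elimination runs inside a space of dimension at most $n+n'$ and every field-operation count stays polynomial. Since we work on an algebraic RAM, no analysis of bit sizes is needed at this stage; the sharper Boolean-complexity statements over $\Q$ are handled separately in the later sections.
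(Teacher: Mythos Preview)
Your argument is correct and follows essentially the same approach as the paper: compute images by applying basis elements and selecting a maximal independent set, compute preimages via the duality $\cA^{-1}(W)=(\tp{\cA}(W^\perp))^\perp$, and iterate at most $n$ (resp.\ $n'$) times. Your write-up is simply more explicit about the bookkeeping (bounding the number of generators by $nn'$, counting iterations), which the paper leaves implicit in the paragraph preceding the proposition.
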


%This shows the Wong sequences can be computed
%using $(n+n')^{O(1)}$ arithmetic operations.

Unfortunately, we are unable to prove that over the rationals
the bit length of the entries of the bases describing 
the Wong sequences remain polynomially bounded in the length of 
the data for $\cA$ and $\cB$. 
However, in Section~\ref{subsec:second_connection} we show that if $\cA=\langle A\rangle$,
then the first few members of the second Wong sequence 
which happen to be
contained in $\im(A)$ can be computed in polynomial time using an iteration
of multiplying vectors by matrices from a basis for 
$\cB$ and by a %matrix $A'$, a sort of 
pseudo-inverse
of $A$. 
%See Section~\ref{subsec:second_connection} for details.

We also observe that if we consider the bases
for $\cA$ and $\cB$ as matrices over an extension
field $\F'$ of $\F$ then the members of the Wong sequences
over $\F'$ are just the $\F'$-linear spaces spanned by
the corresponding members of the Wong sequences over $\F$.
In particular, the limit of the first Wong sequence
over $\F$ is nontrivial if and only if the limit of the
first Wong sequence over $\F'$ is nontrivial.

\section{The second Wong sequence and rank-$1$ spanned matrix spaces}
\label{sec:second}

\subsection{Second Wong sequences and singularity 
witnesses}\label{subsec:second_connection}

As in Section~\ref{sec:wong}, let $V$ and $V'$ be finite dimensional vector spaces over a field $\F$,
of respective dimensions $n$ and $n'$. For $A\in\matsp$ we set 
$\cork(A)=\dim(\ker(A))$. For $\cB\leq\matsp$, the concepts of $c$-singularity 
witnesses, $\disc(\cB)$ and $\cork(\cB)$, defined for the case when $n=n'$, can 
be generalized naturally to $\cB$. We also have that $\cork(\cB)\geq 
\disc(\cB)$, and that a $\cork(\cB)$-singularity witness of $\cB$ does not 
exist necessarily.
% and $\matsp$ is the vector space of linear maps from $V$ to $V'$. $n:=\dim(V)$ and $n':=\dim(V')$. For $A\in\cB\leq\matsp$, its rank is $\dim(\im(A))$ and co-rank is $\dim(\ker(A))$.  
Let $A\in\cB$, and consider $(W_i)_{i\in\N}$, the second Wong sequence of $(A, \cB)$.
The next lemma states that
the limit $W^*$ 
is basically such a witness under the
condition that it is contained in the image of $A$. Moreover, in this specific case the limit can be computed efficiently.

% and $W^*$ be its limit space. Here is the connection between second Wong sequences and discrepancy witnesses. 

\begin{lemma}\label{lem:second_utility}
Let $A\in\cB\leq\matsp$, and let $W^*$ be the limit of the second 
Wong sequence of $(A, \cB)$. There exists a  $\cork(A)$-singularity
witness of $\cB$ if and only if $W^*\subseteq \im(A)$.
If this is the case, then $A$ is of maximum rank and $A^{-1}(W^*)$ is a  $\cork(\cB)$-singularity
witness.
\end{lemma}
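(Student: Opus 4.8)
The plan is to analyze the relationship between the second Wong sequence of $(A,\cB)$ and singularity witnesses by tracking dimensions carefully. The key quantity to control is how $\dim(\cB(U))$ compares to $\dim(U)$ as $U$ runs through the spaces $A^{-1}(W_i)$.

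\medskip

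\noindent\textbf{Setup and the forward direction.} First I would record the basic dimension count. Since $A \in \cB$, for any subspace $U \leq V$ we have $A(U) \subseteq \cB(U)$. Now consider the spaces $W_i$ in the second Wong sequence of $(A,\cB)$, so $W_0 = 0$ and $W_{i+1} = \cB(A^{-1}(W_i))$. Set $U_i = A^{-1}(W_i)$; then $W_{i+1} = \cB(U_i)$. The crucial observation is: since $W_i \subseteq W_{i+1} = \cB(U_i)$ and $A(U_i) \subseteq W_i$ (because $U_i = A^{-1}(W_i)$), we can estimate $\dim(U_i)$. Indeed $U_i = A^{-1}(W_i)$, so $\dim(U_i) = \dim(\ker A) + \dim(W_i \cap \im(A)) = \cork(A) + \dim(W_i \cap \im A)$. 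The plan for the ``only if''-type direction (existence of a $\cork(A)$-singularity witness implies $W^* \subseteq \im A$) is really the converse, so let me organize it as: (i) show that when $W^* \subseteq \im A$, the space $A^{-1}(W^*)$ is a $\cork(\cB)$-singularity witness and $A$ has maximum rank; (ii) show that when $W^* \not\subseteq \im A$, no $\cork(A)$-singularity witness exists.

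\medskip

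\noindent\textbf{Direction (i): $W^* \subseteq \im A$ gives the witness.} Suppose $W^* \subseteq \im A$. Let $U^* = A^{-1}(W^*)$. Since the sequence has stabilized, $W^* = \cB(A^{-1}(W^*)) = \cB(U^*)$, so $\dim(\cB(U^*)) = \dim(W^*)$. On the other hand, because $W^* \subseteq \im A$, the map $A$ restricted to $U^* = A^{-1}(W^*)$ surjects onto $W^*$ with kernel $\ker A$, so $\dim(U^*) = \dim(W^*) + \cork(A) = \dim(\cB(U^*)) + \cork(A)$. Hence $\dim(U^*) - \dim(\cB(U^*)) = \cork(A)$, i.e.\ $U^*$ is a $\cork(A)$-singularity witness. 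Combined with the general inequality $\cork(\cB) \geq \disc(\cB) \geq \cork(A)$ and the trivial $\cork(A) \geq \cork(\cB)$ (as $A \in \cB$), we get $\cork(A) = \cork(\cB)$, so $A$ has maximum rank and $U^* = A^{-1}(W^*)$ is in fact a $\cork(\cB)$-singularity witness. This settles the ``if this is the case'' clause at the same time.

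\medskip

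\noindent\textbf{Direction (ii): if a $\cork(A)$-singularity witness exists, then $W^* \subseteq \im A$.} This is the direction I expect to be the main obstacle. Suppose $U$ is a $\cork(A)$-singularity witness, so $\dim(U) - \dim(\cB(U)) \geq \cork(A)$; note the reverse inequality $\dim(U) - \dim(\cB(U)) \leq \dim(U) - \dim(A(U)) = \dim(\ker(A|_U)) \leq \cork(A)$ always holds, so in fact $\dim(U) - \dim(\cB(U)) = \cork(A)$, which forces $\ker(A|_U) = \ker A$, i.e.\ $\ker A \subseteq U$, and also $\cB(U) = A(U)$. The plan is then to show by induction that $W_i \subseteq A(U) \subseteq \im A$ for all $i$, which gives $W^* \subseteq \im A$. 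Base case $W_0 = 0$ is clear. For the inductive step, assume $W_i \subseteq A(U)$. I want $W_{i+1} = \cB(A^{-1}(W_i)) \subseteq A(U)$. The idea is to show $A^{-1}(W_i) \subseteq U + \ker A = U$ (using $\ker A \subseteq U$): indeed if $A(v) \in W_i \subseteq A(U)$, then $A(v) = A(u)$ for some $u \in U$, so $v - u \in \ker A \subseteq U$, hence $v \in U$. Therefore $A^{-1}(W_i) \subseteq U$, and then $W_{i+1} = \cB(A^{-1}(W_i)) \subseteq \cB(U) = A(U) \subseteq \im A$, completing the induction. Taking the limit gives $W^* \subseteq \im A$. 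The one subtlety to be careful about is that $\cB(U) = A(U)$ only holds for the particular witness $U$; applying $\cB$ to a smaller subspace $A^{-1}(W_i) \subseteq U$ need not give exactly $A$ of it, but monotonicity $\cB(A^{-1}(W_i)) \subseteq \cB(U) = A(U)$ is all that is needed. I would double-check that the inequality $\dim(U)-\dim(\cB(U))\le\cork(A)$ indeed uses only $A\in\cB$ and the rank–nullity theorem applied to $A|_U$, which it does.
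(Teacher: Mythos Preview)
Your proof is correct and follows essentially the same approach as the paper. The only minor difference is in direction (ii): after establishing $\ker A\subseteq U$ and $\cB(U)=A(U)$, the paper invokes the minimality characterization of $W^*$ (Proposition~\ref{prop:second}) to conclude $W^*\subseteq A(U)$, whereas you carry out the equivalent induction on the $W_i$ directly; these are the same argument unfolded.
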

\begin{proof}
We  prove the equivalence. Firstly suppose that $W^*\subseteq \im(A)$. Then
$\dim(A^{-1}(W^*))=\dim(W^*)+\dim(\ker(A))$. Since $W^*= \cB(A^{-1}(W^*))$ and 
$\dim(\ker(A)) = \cork(A)$, it follows that
$A^{-1}(W^*)$ is a $\cork(A)$-singularity witness of $\cB$.

%$(\Leftarrow):$ We claim that $(A^{-1}(W^*), W^*)$ is a $\cork(A)$-discrepancy witness. As $W^*\subseteq \im(A)$, $\dim(A^{-1}(W^*)=\dim(W^*)+\dim(\ker(A))=\dim(W^*)+\cork(A)$. Also by definition of $W^*$, $\cB(A^{-1}(W^*))=W^*$. 
%Let $U'=W^*$ and $U=A^{-1}(W^*)$. As $U'=W^*\subseteq\im(A)$, $\dim(U)=\dim(U')+\dim(\ker(A))=\dim(U')+\cork(A)$. Also $\cB(U)=\cB(A^{-1}(W^*))=W^*=U'$, by definition of $W^*$.
Let us now suppose that some $U \leq V$ is a $\cork(A)$-singularity witness, that is  $\dim(U)-\dim(\cB(U))\geq \cork(A)$.
Then $\dim(U)-\dim(A(U))\geq \cork(A)$ because $A \in \cB$. Since the reverse 
inequality always holds without any condition on $U$, we have 
$\dim(U)-\dim(A(U)) = \cork(A)$. Similarly we have $\dim(U)-\dim(\cB(U)) =  
\cork(A)$, which implies that 
$\dim(A(U)) = \dim(\cB(U))$, and therefore $A(U) = \cB(U)$. For a subspace $S 
\leq V$ the equality $\dim(S)-\dim(A(S)) = \cork(S)$ is equivalent to $\ker(A) 
\subseteq S$, thus we have $\ker(A) \subseteq U$ from which it follows that $U 
= A^{-1}(A(U))$.
But then $\cB^{-1}(A(U)) = \cB^{-1}(\cB(U)) \supseteq U = A^{-1}(A(U))$. Since $W^*$ is the smallest subspace $T \leq V'$ satisfying
$\cB^{-1}(T) \supseteq A^{-1}(T)$, we can conclude that $W^* \subseteq A(U)$.

The existence of  a  $\cork(A)$-singularity
witness obviously implies that $A$ is of maximum rank, and
when $W^*\subseteq \im(A)$ we have already seen that $A^{-1}(W^*)$ is a $\cork(A)$-singularity witness of $\cB$.
Since $\cork(A) = \cork(\cB)$, it is also a $\cork(\cB)$-singularity
witness.
%$(\Rightarrow):$ Let $(U, U')$ be $\cork(A)$-discrepancy witness, that is $\cB(U)\subseteq U'$ and $\dim(U)-\dim(U')\geq \cork(A)$. As $A\in\cB$, $A(U)\subseteq \cB(U) \subseteq U'$. We claim that $A(U)\supseteq U'$. For this note $\dim(A(U))\geq \dim(U)-\cork(A)$ where equality holds if and only if $U\supseteq \ker(A)$. So if $A(U)\subset U'$, $\dim(U)-\cork(A)\geq \dim(U')>\dim(A(U))\geq \dim(U)-\cork(A)$, a contradiction. To summarize, we have that $U\supseteq \ker(A)$, $U'=A(U)$ and $U=A^{-1}(U')$. 
%{}From $U'\supseteq \cB(U)\supseteq \cB(A^{-1}(U'))$, we have $\cB^{-1}(U')\supseteq \cB^{-1}(\cB(A^{-1}(U')))\supseteq A^{-1}(U')$. As $W^*$ is the smallest subspace satisfying $\cB^{-1}(T)\supseteq A^{-1}(T)$ over all $T\leq V$, $W^*\subseteq U'=A(U)\subseteq \im(A)$. 
\end{proof}

%\gabor{substantially rephrased paragraph}
%\youming{changed several words in this paragraph}
We remark that in \cite{fr04}, a slightly different version
of this statement is proved. We decided to keep our original
proof for completeness. In our terminology, Theorem~3 of 
\cite{fr04} states that the existence of  a  $\cork(A)$-singularity witness
is equivalent to the equality $\dim(A^{-1}(W^*))=\dim(W^*)+\dim(\ker(A))$.
Both versions offer a straightforward method for testing existence of
(and computing) $\cork(A)$-singularity witnesses.
Besides that our version resembles the concept of augmenting paths 
in algorithms
for matchings in bipartite graphs, it offers the possibility
of stopping the construction of the Wong sequence at the point
after which (while working over the rationals) data blowup can occur; this data 
blowup can occur if we adopt the naive way of computing the preimage of a 
subspace under $A$.
%\miklos{inserted}
Before that point, we will make use of a pseudo-inverse of $A$.
We describe now this method.

%From the proof of Lemma~\ref{lem:second_utility} it also follows that if $W^*\subseteq \im(A)$
%than $A$ is of maximum rank in $\cB$. 
%We would like to find an efficient way of testing
%whether $W^*\subseteq \im(A)$ for a given $A \in \cB$. In the computation of the limit $W^*$
%of the second 
%Wong sequence of $(A, \cB)$ the computationally hard step  is applying 
%iteratively $A^{-1}$. We overcome this difficulty by introducing a pseudo-inverse of $A$
%in the computation. We describe now this method.

Let $n=\dim(V)$ and $n'=\dim(V')$.
First of all we assume without loss of generality that $n=n'$. %{}From the general setting this can be achieved (
Indeed, if $n<n'$ we can add as a direct complement
a suitable space to $V$ on which $\cB$ acts as zero, and 
if $n>n'$, we can embed $V'$ into a larger space.
In terms of matrices, this means augmenting the elements
of $\cB$ by zero columns or zero rows to obtain square matrices.
This procedure affects neither the ranks of the matrices in $\cB$
nor the singularity witnesses. 
%(If $V$ is augmented by a subspace
%$Z$ on which $\cB$ acts as zero then $\cB(U)=\cB(U+Z)$ for every
%$U\leq V$.)

%By Lemma~\ref{lem:second_utility}, we have $W^*\leq \im(A)$ if and only if $W_j\leq \im(A)$ 
%for every index $j\leq n$. We now interpret this condition
%in a specialized situation.
%Assume now that % $A\in \cB$ and that 
%$\cB$ is spanned by $A$
%and a subspace $\cB_0$ of $\cB$. 
We say that a nonsingular linear map
$A':V'\rightarrow V$ is a \emph{pseudo-inverse} of $A$ if the restriction of $A'$ to $\im(A)$
is the inverse of the restriction of $A$ to a direct complement
of $\ker(A)$. Such a map can be efficiently constructed as follows. Choose
a direct complement $U$ of $\ker(A)$ in $V$ as well as a direct
complement $U'$ of $\im(A)$ in $V'$. Then take the map
$A_0':\im(A)\rightarrow U$ such that $AA_0'$ is the identity of $\im(A)$
and take an arbitrary nonsingular linear map $A_1':U'\rightarrow \ker(A)$.
Finally let $A'$ be the direct sum of $A_0'$ and $A_1'$. 
%See the full version for a proof of the following lemma. 

\begin{lemma}
\label{lem:second-interpreted}
Let $A\in\cB\leq\matsp$ and let $A'$ be a pseudo-inverse of $A$.
% and let $\cB$ be spanned by $A$ and a subspace $\cB_0$ of $\cB$. 
There exists a
$\cork(A)$-singularity 
witness of $\cB$ if and only if %\\
%\centerline{
$(\cB A')^i (\ker(AA')) \subseteq \im(A),$
%}
for all $i \in [n]$.
In the algebraic RAM model as well as over $\Q$,
%\gabor{inserted}
this can be tested in deterministic polynomial time, and
if the condition holds then $A$ is of maximum rank and $A'(W^*)$ is a  $\cork(\cB)$-singularity
witness which also can be computed deterministically in polynomial time. 
%Let $A$, $A'$, $\cB$ and $\cB_0$ be as above. Then $W_1=\cB_0 A'(\ker(AA'))$.
%Furthermore, if for some $i\geq 1$ we have $W_i\leq \im(A)$
%then $W_{i+1}=W_i+\cB_0A'(W_i)$. As a consequence, the existence
%of a $\cork(A)$-discrepancy 
%subspace witness is equivalent to
%$$(\cB_0A')^j\ker(AA')\leq \im(AA')\;\;\;(j=1,\ldots,n).$$
\end{lemma}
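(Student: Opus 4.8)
The plan is to relate the matrix-level iteration $(\cB A')^i(\ker(AA'))$ to the members $W_i$ of the second Wong sequence of $(A,\cB)$, and then invoke Lemma~\ref{lem:second_utility}. Since $A'$ is nonsingular, I expect to prove by induction that as long as $W_{i-1}\subseteq \im(A)$, we have $A'(W_i) = (\cB A')^i(\ker(AA'))$, or something very close to it. The base case is $i=1$: unwinding definitions, $\cA^{-1}(W_0)=\cA^{-1}(0)=\ker(A)$, so $W_1=\cB(\ker(A))$, and $A'(W_1)=A'(\cB(\ker(A)))=(\cB A')(A'^{-1}\ker(A))$. Here I would check that $A'^{-1}(\ker(A)) = \ker(AA')$: indeed $v\in\ker(AA')$ iff $A'(v)\in\ker(A)$, using that $A'$ is injective. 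For the inductive step, assuming $W_{i-1}\subseteq\im(A)$, the key identity is that $A^{-1}(W_{i-1})$ equals $A'(W_{i-1})\oplus\ker(A)$ as sets (since $A'$ restricted to $\im(A)$ inverts $A$ on a complement of $\ker(A)$), and then $W_i=\cB(A^{-1}(W_{i-1})) = \cB(A'(W_{i-1})+\ker(A))$. One then pulls back by $A'$ and uses that $\ker(A)=A'(\ker(AA'))$ together with the inductive hypothesis to get $A'(W_i)=(\cB A')(A'(W_{i-1})+A'(\ker(AA')))$, which after bookkeeping of spans gives the claimed formula, modulo carrying along the "$+\ker(AA')$" term consistently at each stage.

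Next I would handle the equivalence. By Lemma~\ref{lem:second_utility}, a $\cork(A)$-singularity witness exists iff $W^*\subseteq\im(A)$. Since the second Wong sequence stabilizes within $n'=n$ steps (Proposition~\ref{prop:second}), $W^*\subseteq\im(A)$ holds iff $W_i\subseteq\im(A)$ for all $i\in[n]$ — and here one must be slightly careful: the inductive translation above is only valid while all earlier members are inside $\im(A)$, so one argues that the first index $i$ at which $W_i\not\subseteq\im(A)$ (if any) is detected exactly as the first $i$ at which $(\cB A')^i(\ker(AA'))\not\subseteq\im(A)$, because up to that point the two iterations track each other. Conversely, if all $(\cB A')^i(\ker(AA'))\subseteq\im(A)$ for $i\in[n]$, the induction goes through to the limit and $W^*\subseteq\im(A)$. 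When the condition holds, $A'(W^*)=(\cB A')^n(\ker(AA'))$ (stabilized), which coincides with $A^{-1}(W^*)$ up to the $\ker(A)$ summand; since $A^{-1}(W^*)$ is the $\cork(\cB)$-singularity witness produced by Lemma~\ref{lem:second_utility} and $\ker(A)=A'(\ker(AA'))$, the space $A'(W^*)$ together with $\ker(AA')$ spans it, and one can take $A'(W^*)+\ker(AA')$ (or simply note $A'(W^*)\supseteq$ enough, then adjust) as the output witness; I would state it so the dimension count $\dim - \dim\cB(\cdot)\ge\cork(\cB)$ is immediate. The maximality of $\rk(A)$ is inherited directly from Lemma~\ref{lem:second_utility}.

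Finally, the complexity claim: the pseudo-inverse $A'$ is constructed in polynomial time as described in the paragraph preceding the lemma (choosing complements of $\ker(A)$ and $\im(A)$ and solving linear systems). Each of the $n$ iterations $X\mapsto \cB A'(X)$ is: apply the fixed matrix $A'$ to a spanning set of $X$, apply each of the $m\le n^2$ basis matrices of $\cB$, and Gaussian-eliminate to a basis — all polynomial in $n$ on an algebraic RAM, and over $\Q$ with polynomially bounded bit length because $A'$ and the $B_j$ are fixed rational matrices and each step is one round of matrix-vector products plus Gaussian elimination, which does not blow up when iterated a bounded number of times against fixed matrices (this is precisely the point emphasized in the remark after Lemma~\ref{lem:second_utility}: we avoid the naive preimage computation, whose data blowup we cannot control). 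Testing $(\cB A')^i(\ker(AA'))\subseteq\im(A)$ is a containment test between explicitly given subspaces, hence polynomial as well. The step I expect to be the main obstacle is the bookkeeping in the inductive identity — getting the exact relationship between $A'(W_i)$, the iterate $(\cB A')^i(\ker(AA'))$, and the stray $\ker(A)$ / $\ker(AA')$ summands right, and making sure the translation is valid precisely in the regime "all earlier $W_j\subseteq\im(A)$" so that the equivalence is not circular; the rest is routine.
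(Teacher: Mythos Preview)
Your approach is the paper's approach: relate the iterates $(\cB A')^i(\ker(AA'))$ to the Wong sequence $W_i$ and invoke Lemma~\ref{lem:second_utility}. However, your bookkeeping is off in a way that matters. The correct identity is
\[
W_i \;=\; (\cB A')^i(\ker(AA')) \qquad\text{whenever }(\cB A')^j(\ker(AA'))\subseteq\im(A)\text{ for }j<i,
\]
not $A'(W_i)=(\cB A')^i(\ker(AA'))$. Note that $(\cB A')^i(\ker(AA'))$ lives in $V'$ (since $\ker(AA')\leq V'$ and $\cB A':V'\to V'$), while $A'(W_i)\leq V$; your base-case computation $A'(\cB(\ker A))=(\cB A')(A'^{-1}\ker A)$ swaps the order of $A'$ and $\cB$ and is a type error. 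The right base case is simply $A'(\ker(AA'))=\ker(A)$, hence $(\cB A')(\ker(AA'))=\cB(\ker A)=W_1$.

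For the inductive step, the paper also uses the trivial inclusion $(\cB A')^i(\ker(AA'))\subseteq W_i$ (valid unconditionally) to get one direction of the equivalence for free, so only the reverse inclusion needs work. There, from $W_i=\cB(A'(W_{i-1})+\ker A)=\cB A'(W_{i-1})+W_1$ the point you are groping for is how to absorb the stray $W_1$: since $W_{i-1}\subseteq\im(A)$ one has $W_{i-1}=AA'(W_{i-1})\subseteq \cB A'(W_{i-1})$, and $W_1\subseteq W_{i-1}$, so $W_1\subseteq \cB A'(W_{i-1})$ and the extra term disappears. Finally, the witness is $A^{-1}(W^*)=A'(W^*)+\ker(A)$, with $\ker(A)\leq V$ (not $\ker(AA')\leq V'$, another type slip in your write-up). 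Once these are straightened out, the rest of your outline, including the complexity discussion, matches the paper.
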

%\iffalse
\begin{proof}
It follows from Lemma~\ref{lem:second_utility} that a $\cork(A)$-singularity 
witness exists if and only if $W_i \subseteq \im(A)$, for $i = 1, \ldots , n$.
Observing that $(\cB A')^i (\ker(AA')) \subseteq W_i$ for $i = 1, \ldots , n$,
to prove the equivalence it is sufficient to show that if $(\cB A')^i (\ker(AA')) \subseteq \im(A)$
for $i = 1, \ldots , n$ then $W_i = (\cB A')^i (\ker(AA'))$ for $i = 1, \ldots , n$.
The proof is by induction.
For $i=1$ the claim
$W_1= \cB A'(\ker(AA'))$ holds since
$\ker(AA')=A'^{-1} (\ker(A))$.
For $i > 1$, by definition 
$W_{i} = \cB A^{-1} (W_{i-1})$. Since every subspace $W\leq \im(A)$ satisfies $A^{-1}W=A'W + \ker(A)$,
where + denotes the direct sum,
we get $W_{i} \subseteq \cB A' (W_{i-1}) + \cB (\ker(A))$.
Observe that $\cB (\ker(A)) = W_1$. We will show that $W_1 \subseteq \cB A' (W_{i-1})$ 
and then we conclude by the inductive hypothesis. We know that $W_1 \subseteq W_{i-1}$ from the
properties of the Wong sequence, therefore it is sufficient to show that $ W_{i-1} \subseteq \cB A' (W_{i-1})$.
But $W_{i-1} = AA'(W_{i-1})$ since $W_i \subseteq \im(A)$ and $A'$ is the inverse of $A$ on $\im(A)$.

Based on this equivalence, testing the existence 
of a $\cork(A)$-singularity witness can be accomplished
by a simple algorithm. 
First compute a basis
for $\cB$, and then multiply it  by $A'$ to obtain a basis for $\cB A'$. Compute also a basis
for $\ker (AA')$. We now describe how to compute bases for the subspaces in the second Wong sequence
until either we find $i$ such that $W_i \not \subseteq \im(A)$  or we compute $W^*$.
A basis for $W_1$ can be obtained by applying
the basis elements of $\cB A'$ to the basis elements of
$\ker AA'$ and then selecting a maximal set of linearly independent
vectors. Having computed a basis for $W_i$, we stop if
it contains an element outside $\im(A)$. Otherwise
we apply the basis elements of $\cB A'$ to the basis elements
of $W_i$, %add the new vectors to the basis of $W_{i}$,
and select a maximal set of linearly independent vectors
to obtain a basis for $W_{i+1}$. When $W_{i+1}=W_i$ we can stop since 
$W^*=W_i$.

If we find that the condition holds then $A'(W^*)$  by Lemma~\ref{lem:second_utility} is a
$\cork(\cB)$-singularity
witness, and it can be easily computed from $W^*$.
% and the pair $(A'W_i,W_i)$ is the desired witness.
\end{proof}
%\fi

\subsection{The \pep{}}
\label{subsec:pep} 
%If $\cB=\langle A, B\rangle$ is $2$-dimensional, Atkinson and Stephens showed the following. Let $n=\max(\dim(V), \dim(V'))$. 
%For $\cB\leq \matsp$ and 
For $A\in\cB\leq \matsp$,
we would like to know whether $A$ is of maximum
rank in $\cB$. With the help of the limit $W^*$ of the
second Wong sequence of 
$(A, \cB)$ we have established a sufficient condition: we know that
if $W^* \subseteq \im(A)$ then $A$ is indeed of maximum rank. Our results until 
now do not
give a necessary condition for the maximum rank. Now we show that
the second Wong sequence actually allows to translate this question to the \emph{power overflow} problem (PO)
which we define below. As a consequence an efficient solution of the PO
guarantees an efficient solution for the SMR. The reduction is mainly based on 
a theorem of Atkinson and Stephens \cite{AS78} which
essentially says that over large enough fields, in 2-dimensional matrix spaces 
$\cB$,
the equality $\cork(\cB) = \disc(\cB)$ holds.

\begin{fact}[\cite{AS78}]
%\gabor{was: Proposition}
\label{fact:atkinson}
Assume that $|\F| > n$, and let $A,B \in \matsp$.
%, and $\cB=\langle A, B\rangle$.
If $A$ is a maximum rank element of $\langle A, B\rangle$ 
then there exists a $\cork(A)$-singularity witness of $\langle A, B\rangle$. 
\end{fact}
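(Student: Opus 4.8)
The plan is to work inside the two-dimensional pencil $\langle A, B\rangle$ and pass to an algebraically closed (or at least sufficiently large) field, where we may bring the pencil to Kronecker normal form. Concretely, I would first reduce to the case $\F$ algebraically closed: enlarging the field does not change ranks of the individual matrices $A + \lambda B$, hence does not change $\cork(A)$ or whether $A$ has maximum rank in the pencil; and by the last remark of Section~\ref{sec:wong}, singularity witnesses over $\F$ and over an extension $\F'$ correspond, so it suffices to produce a $\cork(A)$-singularity witness over $\F'$ and descend. (The hypothesis $|\F| > n$ is what guarantees that the needed scalars for normalization and for a ``generic'' combination exist; I would keep track of this so that the argument works already over a field of size $> n$ rather than only over $\overline{\F}$.)

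Next I would invoke the Kronecker canonical form of the pencil $A x - B$: after left and right multiplication by invertible matrices (over $\F'$), $\langle A, B\rangle$ decomposes as a direct sum of blocks of four types --- the $L_\varepsilon$ blocks ($\varepsilon \times (\varepsilon+1)$, the ``column singular'' blocks), the $L_\eta^{\mathrm{T}}$ blocks ($(\eta+1)\times\eta$, ``row singular''), the finite Jordan-type blocks $J_k(\mu) = \mu I - N$, and the infinite Jordan blocks $N x - I$. For each block the rank of $A x - B$ as a function of the parameter is transparent: the regular blocks are nonsingular for all but finitely many values of the parameter, while each $L_\varepsilon$ block contributes a permanent rank deficiency of $1$ on the column side and each $L_\eta^{\mathrm{T}}$ block a deficiency of $1$ on the row side. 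Since $|\F'| > n$ is larger than the total number of ``bad'' eigenvalues coming from the regular part, there is a scalar $\lambda$ (in fact $A$ itself can be taken, or $A - \lambda B$) realizing the generic rank; because $A$ is assumed to be of maximum rank in the pencil, the rank of $A$ already equals this generic rank, so $\cork(A)$ equals the number of $L_\varepsilon$ blocks (assuming WLOG the maximum rank is attained with only column-type deficiency; if $A$ is square and of maximum rank there are no $L_\eta^{\mathrm{T}}$ blocks, else one symmetrizes the argument).

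Finally I would exhibit the witness explicitly from the normal form. Collecting the right-hand (domain) coordinates belonging to the $L_\varepsilon$ blocks and a suitable coordinate within each such block yields a subspace $U \le V$ with $\dim U = (\text{number of } L_\varepsilon \text{ blocks}) = \cork(A)$, on which both $A$ and $B$ --- hence all of $\cB = \langle A,B\rangle$ --- map into a common subspace $\cB(U)$ with $\dim \cB(U) = 0$; more precisely one picks, in each $L_\varepsilon$ block with columns $v_0,\dots,v_\varepsilon$ and the standard action, the single vector in the kernel direction so that $\dim U - \dim \cB(U) \ge \cork(A)$. Transporting $U$ back through the invertible change of coordinates gives a $\cork(A)$-singularity witness of $\langle A, B\rangle$ over $\F'$, and the descent remark finishes the proof over $\F$. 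The main obstacle I anticipate is the bookkeeping in the Kronecker form: one must argue carefully that the \emph{maximum} rank over the whole pencil is attained at a specific $\lambda$ (here is where $|\F|>n$ is essential, to beat the finitely many eigenvalues), and that the dimension count $\dim U - \dim \cB(U) = \#\{L_\varepsilon \text{ blocks}\}$ is exact rather than just an inequality --- i.e., that no row-singular or regular blocks sneak in extra kernel and spoil the equality $\cork(A) = \disc(\langle A,B\rangle)$. An alternative to avoid Kronecker forms entirely would be to cite the matroid-union/Rado argument underlying Lov\'asz's result, but the Kronecker-form route is the most self-contained for a $2$-dimensional space.
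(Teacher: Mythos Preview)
The paper does not give a proof of Fact~\ref{fact:atkinson}; it is stated with a citation to Atkinson and Stephens~\cite{AS78} and used as a black box. Your Kronecker--form strategy is a standard and essentially correct route, but the execution has two concrete gaps.

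First, the reduction to an extension field is not as harmless as you claim. You assert that enlarging $\F$ ``does not change \ldots whether $A$ has maximum rank in the pencil,'' but this is exactly what can fail over small fields: the maximum rank of $\langle A,B\rangle$ may increase when one passes to $\overline{\F}$, so $A$ can be of maximum rank over $\F$ yet not over $\overline{\F}$, in which case $[1:0]$ is an eigenvalue of the regular part of the Kronecker form and $\cork(A)$ strictly exceeds the number $p$ of $L_\varepsilon$-blocks. The hypothesis $|\F|>n$ is precisely what rules this out (a nonvanishing $r\times r$ minor of $\lambda A+\mu B$, with $r$ the generic rank, has degree at most $n$, hence an $\F$-rational nonzero), and this is where it must be invoked --- not merely ``for normalization.''

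Second, your explicit witness is wrong. Taking in each $L_\varepsilon$ block ``the single vector in the kernel direction'' of $A$ yields a subspace $U$ of dimension $p$ with $A(U)=0$, but $B(U)\neq 0$ whenever some $\varepsilon>0$: with $A|_{L_\varepsilon}=[I_\varepsilon\;|\;0]$ and $B|_{L_\varepsilon}=[0\;|\;I_\varepsilon]$, the last domain vector lies in $\ker A$ but is sent by $B$ to a nonzero codomain vector, so $\dim U-\dim\cB(U)<p$ in general. The correct witness is the direct sum of the \emph{entire} domains of the $L_{\varepsilon_i}$ blocks: then $\dim U=\sum_i(\varepsilon_i+1)$ while $\cB(U)$ is contained in the direct sum of the codomains, of dimension $\sum_i\varepsilon_i$, giving discrepancy exactly $p=\cork(A)$. (Your side remark that a square pencil of maximum rank has no $L_\eta^{\mathrm{T}}$ blocks is also false --- it can have equally many of each type --- but this is harmless, as those blocks contribute nothing to $\ker A$ and can be ignored.) With these two fixes in place, the descent to $\F$ goes through cleanly by combining the field-extension remark at the end of Section~\ref{sec:wong} with Lemma~\ref{lem:second_utility}.
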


%In \cite{AS78}, Atkinson and Stephens showed that over fields of 
%size greater than $n$, in 2-dimensional matrix spaces
%the equality $\mincork=\disc$ holds,
%that is, if $A$ is a maximum rank element of $\cB=\langle A, B\rangle$ 
%then there exists a $\cork(A)$-subspace witness of $\cB$. 

Combining  Lemma~\ref{lem:second-interpreted}
and Fact~\ref{fact:atkinson}
we get also an equivalent condition for $A$ being of maximum rank.

\begin{lemma}
\label{lem:sufficient}
Assume that $|\F|> n$.
Let $A \in \cB \leq \matsp$, and let $A'$ be a pseudo-inverse of $A$. %, and $\cB=\langle A, B\rangle$.
%Let $\cB=\langle A, B\rangle $ be $2$-dimensional, and
%and let $A'$ be as in Lemma~\ref{lem:second-interpreted}.
Then $A$ is of maximum rank in $\cB$
if and only if for every $B \in \cB$ and for all $ i\in[n]$, we have 
$$(BA')^i (\ker(AA')) \subseteq \im(A).$$

%Moreohen $D^\ell U \subseteq U'$ for all $D\in\cD$ and
%$\ell\in[n]$, it also finds a $\cork(\cD)$-singularity witness.
%This can be tested in polynomial time, and
%if the condition holds then a  $\cork(\cB)$-singularity
%witness also can be computed in polynomial time. 
%for all $ i\in[n]$.
%$\forall i\in[n]$ $(BA^{-1})^i(0)\subseteq \im(A)$. 
%If the condition holds, 

%Moreover, if $(BA')^i (\ker(AA')) \not \subseteq \im(A)$ for some $i$, 
%then $\rk(A+\lambda B) > \rk(A)$ 
%for all but at most $\rk(A)+1$ field elements $\lambda$.
\end{lemma}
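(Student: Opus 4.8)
The plan is to derive Lemma~\ref{lem:sufficient} by combining Lemma~\ref{lem:second-interpreted}, Fact~\ref{fact:atkinson}, and Lemma~\ref{lem:second_utility}, reducing the membership statement for $\cB$ to a family of two-dimensional statements. First I would observe that the condition ``$(BA')^i(\ker(AA'))\subseteq\im(A)$ for all $B\in\cB$ and $i\in[n]$'' is, after expanding $\cB A'$ through a basis of $\cB$, exactly the condition appearing in Lemma~\ref{lem:second-interpreted} applied to the pair $(A,\cB)$; hence by that lemma it is equivalent to the existence of a $\cork(A)$-singularity witness of $\cB$, and in that case $A$ is of maximum rank in $\cB$. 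This immediately gives one direction of the equivalence (``if the power condition holds, then $A$ has maximum rank''), with no restriction on the field size needed.

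For the converse I would use that $|\F|>n$ together with Fact~\ref{fact:atkinson}. Suppose $A$ is of maximum rank in $\cB$. Pick an arbitrary $B\in\cB$ and consider the $2$-dimensional (or smaller) subspace $\langle A,B\rangle\leq\cB$. Since $A$ has maximum rank in the larger space $\cB$, it certainly has maximum rank in $\langle A,B\rangle$, so $\cork(A)=\cork(\langle A,B\rangle)$. By Fact~\ref{fact:atkinson}, applied to $\langle A,B\rangle$ (the hypothesis $|\F|>n$ is exactly what we have), there is a $\cork(A)$-singularity witness $U$ of $\langle A,B\rangle$. I would then apply Lemma~\ref{lem:second_utility} to $A\in\langle A,B\rangle$: the existence of such a witness is equivalent to $W^*_{(A,\langle A,B\rangle)}\subseteq\im(A)$, which, combined with the argument in the proof of Lemma~\ref{lem:second-interpreted} (or simply its statement, since $A$ is now the distinguished element), yields $(BA')^i(\ker(AA'))\subseteq W^*_{(A,\langle A,B\rangle)}\subseteq\im(A)$ for all $i$. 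As $B\in\cB$ was arbitrary, the power condition holds for every $B$ and every $i\in[n]$.

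One subtlety I would be careful about: the second Wong sequence in the converse direction is taken with respect to the \emph{small} pair $(A,\langle A,B\rangle)$, whereas the power condition in the statement uses the same pseudo-inverse $A'$ of $A$ but lets $B$ range over all of $\cB$. This is fine because the iterated subspaces $(BA')^i(\ker(AA'))$ only involve the single fixed matrix $B$ and $A'$, so membership in $\im(A)$ can be checked $B$ by $B$; and the pseudo-inverse $A'$ depends only on $A$, not on the ambient space. I would make explicit that $\ker(AA')=A'^{-1}(\ker A)$ and that $(BA')^i(\ker(AA'))\subseteq W_i$ for the second Wong sequence of $(A,\langle A,B\rangle)$, both of which are already recorded in the proof of Lemma~\ref{lem:second-interpreted}.

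The main obstacle, such as it is, is bookkeeping rather than a genuine difficulty: one must make sure that the two reductions point in opposite logical directions consistently — Lemma~\ref{lem:second-interpreted} gives ``power condition for $\cB$ $\Leftrightarrow$ $\cork(A)$-witness for $\cB$ $\Rightarrow$ $A$ max rank in $\cB$'', while Fact~\ref{fact:atkinson} gives ``$A$ max rank in $\langle A,B\rangle$ $\Rightarrow$ $\cork(A)$-witness for $\langle A,B\rangle$ $\Rightarrow$ power condition for that $B$'' — and to note that passing from a witness for $\langle A,B\rangle$ to the power condition for $\cB$ only requires checking each $B$ separately, which is exactly the form of the claimed condition. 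No estimate on entry sizes or running time is needed here; that is deferred to the algorithmic sections.
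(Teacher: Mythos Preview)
Your ``only if'' direction (maximum rank $\Rightarrow$ power condition) is correct and matches the paper: reduce to each two-dimensional subspace $\langle A,B\rangle$, invoke Fact~\ref{fact:atkinson} to get a $\cork(A)$-singularity witness there, and then apply Lemma~\ref{lem:second-interpreted} to obtain $(BA')^i(\ker(AA'))\subseteq\im(A)$.

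The ``if'' direction, however, has a genuine gap. You assert that the condition ``$(BA')^i(\ker(AA'))\subseteq\im(A)$ for all $B\in\cB$ and all $i$'' is, after expanding through a basis, \emph{the same} as the condition $(\cB A')^i(\ker(AA'))\subseteq\im(A)$ from Lemma~\ref{lem:second-interpreted} applied to $(A,\cB)$. This is false: the latter involves products $B_{j_i}A'\cdots B_{j_1}A'$ with possibly \emph{different} $B_{j_k}$ at each step, whereas the power condition only controls products with the \emph{same} $B$ repeated. A concrete counterexample is $\cB=\sk_3$ with $A$ any rank-$2$ element: $A$ has maximum rank, so by your own correct ``only if'' argument the power condition holds for every $B$; but $\sk_3$ has no singularity witness, so by Lemma~\ref{lem:second_utility} the full Wong sequence of $(A,\sk_3)$ is \emph{not} contained in $\im(A)$. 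Hence the two conditions you identify are distinct.

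The fix is easy and is exactly what the paper does: run \emph{both} directions through the two-dimensional reduction. First observe that $A$ has maximum rank in $\cB$ if and only if it has maximum rank in $\langle A,B\rangle$ for every $B\in\cB$. Then, for each fixed $B$, combine Lemma~\ref{lem:second-interpreted} and Fact~\ref{fact:atkinson} applied to $\langle A,B\rangle$ to get that $A$ has maximum rank there iff $(\langle A,B\rangle A')^i(\ker(AA'))\subseteq\im(A)$ for all $i$. Finally simplify the latter to $(BA')^i(\ker(AA'))\subseteq\im(A)$ using that $AA'$ is the identity on $\im(A)$ and zero on $\ker(AA')$. Your later remark about checking ``$B$ by $B$'' is the right intuition; you just need to apply it to both directions rather than routing the ``if'' direction through the Wong sequence of the full space $\cB$.
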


\begin{proof}
First observe that $A$ is of maximum rank in $\cB$
if and only if for every $B \in \cB$, it is of maximum rank in $\langle A, B\rangle$. 
For a fixed $B$, by Fact~\ref{fact:atkinson} and Lemma~\ref{lem:second-interpreted},
$A$ is of maximum rank in $\langle A, B\rangle$ exactly when
$(\langle B, A \rangle A')^i (\ker(AA')) \subseteq \im(A),$ for all $ i\in[n]$.
From that we can conclude since $A'$ is the inverse of $A$ on $\im(A)$.
\end{proof}

This lemma leads us to reduce the problems of
deciding if $A$ is of the maximum rank, and
finding a matrix of rank larger than $A$ when this is not the case,
to the following question.% The \emph{power overflow} problem (PO) is defined as follows:
\begin{problem}[The \pep{}]
Given $\cD\leq\matspc$, $U\leq \F^n$ and $U'\leq \F^n$, 
output $D\in\cD$ and
$\ell\in[n]$ s.t.  $D^\ell (U) \not\subseteq U'$, if there exists such $(D, \ell)$.
Otherwise say {\bf no}.

%decide if for all $D\in\cD$ and
%$\ell\in[n]$ we have $D^\ell U \subseteq U'$.
%%In the positive case find a $\cork(\cD)$-singularity witness. 
%In the negative case also find
%$D\in\cD$ and
%$\ell\in[n]$ such that  $D^\ell U\not\subseteq U'$.

%there exists $D\in\cD$ and
%$\ell\in[n]$ such that  $D^\ell U\not\subseteq U'$. In the positive case find
%$\cork(\cD)$-singularity
%witness.Inthe negative case find
%such a $D$ and $\ell$. 
\end{problem}

The power overflow problem admits an efficient randomized algorithm when $|\F|=\Omega(n)$. For the rank-$1$ spanned case we show a deterministic solution regardless of the field size.
%In the next section we will prove the following theorem about  spaces spanned by rank-1 matrices.

\begin{theorem}
\label{thm:po}
Let $\cD\leq\matspc$ be spanned by rank-1 matrices. Then there exists
$D\in\cD$ and
$\ell\in[n]$ such that  $D^\ell (U) \not\subseteq U'$ if and only if there exists 
$\ell\in[n]$ such that  $\cD^\ell (U) \not\subseteq U'$.
The power overflow problem for $\cD$ can be solved 
deterministically in polynomial time 
on an algebraic RAM as well as over $\Q$.
%\gabor{appended}
 % is spanned by rank-1 matrices. Morover
  %Moreover, when $D^\ell U \subseteq U'$ for all $D\in\cD$ and
%$\ell\in[n]$, it also finds a $\cork(\cD)$-singularity witness.
\end{theorem}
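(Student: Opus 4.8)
The plan is to prove the ``if and only if'' first, and then derive the algorithm from it. One direction is trivial: if some single $D\in\cD$ satisfies $D^\ell(U)\not\subseteq U'$, then a fortiori $\cD^\ell(U)\supseteq D^\ell(U)\not\subseteq U'$. The substance is the converse: assuming $\cD$ is spanned by rank-$1$ matrices $D_1,\dots,D_m$, if $\cD^\ell(U)\not\subseteq U'$ for some $\ell\in[n]$, I must produce a \emph{single} $D\in\cD$ with $D^\ell(U)\not\subseteq U'$. The key structural observation is that for a rank-$1$ matrix $D_i=c_i r_i^{\mathrm{T}}$ (column times row), the iterated image $D_i^\ell(U)$ is either $0$ or the line $\langle c_i\rangle$, and more usefully, any product $D_{i_\ell}\cdots D_{i_1}$ of rank-$1$ matrices is again rank $\le 1$ with column space inside $\langle c_{i_\ell}\rangle$. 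Thus $\cD^\ell(U)$ is spanned by vectors of the form $D_{i_\ell}\cdots D_{i_1}(u)$ with $u\in U$, each lying in some line $\langle c_{i_\ell}\rangle$. So $\cD^\ell(U)\not\subseteq U'$ means some such product-of-generators vector escapes $U'$, i.e.\ some scalar multiple of some $c_j$ escapes $U'$; and then, crucially, I want to realize that escaping vector as $D^\ell(u)$ for one matrix $D\in\cD$, not a product of different generators.

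To bridge the gap between ``a product $D_{i_\ell}\cdots D_{i_1}$ witnesses the overflow'' and ``a single power $D^\ell$ witnesses it,'' I would argue as follows. Write $D=\lambda_1 D_1+\cdots+\lambda_m D_m$ with indeterminate (or generic) scalars $\lambda_1,\dots,\lambda_m$. Then $D^\ell(u)$ expands as a sum over all length-$\ell$ words $w=(i_1,\dots,i_\ell)$ of $\lambda_w\, D_{i_\ell}\cdots D_{i_1}(u)$ where $\lambda_w=\prod_t\lambda_{i_t}$, and modulo $U'$ this is a polynomial in the $\lambda_i$'s whose coefficient vectors are precisely the words' images mod $U'$. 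If $\cD^\ell(U)\not\subseteq U'$, at least one word $w$ and one $u\in U$ has $D_{i_\ell}\cdots D_{i_1}(u)\notin U'$, so the $\F^n/U'$-valued polynomial $\lambda\mapsto D^\ell(u)+U'$ is not identically zero; since it has degree $\le\ell\le n$ in each $\lambda_i$ and $|\F|\ge n+1$, it has a nonzero evaluation over $\F$ — via the combinatorial nullstellensatz / a Schwartz--Zippel-type argument over $\F$, which is exactly the reason the field-size hypothesis appears. Fixing such $\lambda\in\F^m$ and the corresponding $u$ gives the desired $D\in\cD$. Actually, since $u$ ranges over a basis of $U$ and we only need one nonzero evaluation among finitely many polynomials, a single generic $\lambda$ works for the whole statement simultaneously.

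Given the equivalence, the algorithm is immediate: compute the iterated subspaces $\cD^1(U),\cD^2(U),\dots,\cD^n(U)$ by the routine linear-algebra procedure (apply a basis of $\cD$ to a basis of the current subspace, extract a maximal independent set), stopping at the first $\ell$ with $\cD^\ell(U)\not\subseteq U'$; this is $(n+m)^{O(1)}$ operations on an algebraic RAM and, over $\Q$, no blowup occurs because each $\cD^i(U)$ is a subspace of $\F^n$ of dimension $\le n$ and can be kept in a controlled echelon form. If no such $\ell$ exists, output \textbf{no}. If $\ell$ exists, extract one explicit witness product $D_{i_\ell}\cdots D_{i_1}(u)\notin U'$ by tracing back through the construction, then run over a small explicit set of $\lambda\in\F^m$ (or, more economically, substitute $\lambda_i=t^{i}$ for a single transcendental/large $t$ and reduce to a univariate polynomial of degree $\le mn$) to find $D=\sum\lambda_iD_i$ with $D^\ell(u)\notin U'$; bound the bit-length over $\Q$ by choosing $\lambda_i$ to be small integers. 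The main obstacle I anticipate is the last ingredient: ensuring over $\Q$ that the search for a good $\lambda$ produces a matrix $D$ with polynomially-bounded bit-complexity, which needs the degree bound in each $\lambda_i$ to be $\le n$ (not exponential) — and this is precisely what the rank-$1$ structure buys us, since products of rank-$1$ matrices collapse and the word-sum for $D^\ell$ has per-variable degree at most $\ell\le n$.
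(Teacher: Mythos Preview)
Your argument has a genuine gap at the central step. You write that $D^\ell(u)$ expands as a sum over words $w=(i_1,\dots,i_\ell)$ of $\lambda_w\,D_{i_\ell}\cdots D_{i_1}(u)$ and that ``modulo $U'$ this is a polynomial in the $\lambda_i$'s whose coefficient vectors are precisely the words' images mod $U'$.'' The expansion is correct, but the identification of coefficients is not: distinct words with the same multiset of indices contribute to the \emph{same} monomial. For instance the coefficient of $\lambda_1\lambda_2$ in $D^2(u)$ is $D_1D_2(u)+D_2D_1(u)$, not either word alone. So knowing that one particular word $D_{i_\ell}\cdots D_{i_1}(u)$ escapes $U'$ does \emph{not} show the polynomial is nonzero modulo $U'$; the symmetrized sums over permutations could all land back in $U'$. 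Your Schwartz--Zippel step therefore has no nonzero polynomial to apply to, and in fact the rank-$1$ hypothesis is never actually used to establish nonvanishing (you invoke it only for the degree bound, which is $\le\ell$ regardless of rank).

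The paper's proof is designed precisely to kill these cross-terms, and this is where the rank-$1$ hypothesis really enters. Taking $\ell$ minimal, it introduces for each position $i\in[\ell]$ the subspace
\[
\cH_i=\{X\in\cD : \cD^{\ell-j}X\cD^{j-1}(U)\subseteq U'\text{ for all }j\neq i\},
\]
computable over $\F$ by solving linear systems. If $C_{k_\ell}\cdots C_{k_1}(U)\not\subseteq U'$ with rank-$1$ $C_{k_i}$'s, then each $C_{k_i}\in\cH_i$: since $C_{k_i}$ has one-dimensional image, a violation at some position $j\neq i$ would combine with the known violation at position $i$ to produce a shorter escaping product, contradicting minimality of $\ell$. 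Hence $\cH_\ell\cdots\cH_1(U)\not\subseteq U'$, and one can find (over $\F$) basis elements $X_i\in\cH_i$ with $X_\ell\cdots X_1(U)\not\subseteq U'$. Setting $D=X_1+\cdots+X_\ell$, every term of $D^\ell$ other than $X_\ell\cdots X_1$ has some $X_i$ in a position $j\neq i$ and hence maps $U$ into $U'$ by definition of $\cH_i$; thus $D^\ell(U)\equiv X_\ell\cdots X_1(U)\not\subseteq U'$. Two further points: this construction needs no field-size hypothesis (essential for the field-independent clause of Theorem~\ref{thm:main1}), whereas your approach, even if repaired, would require $|\F|>n$; and your algorithm manipulates the unknown, possibly $\F'$-rational rank-$1$ generators directly, while the paper's algorithm works with an arbitrary $\F$-basis of $\cD$.
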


Using this result whose proof 
is given in Section~\ref{sec:rank1} we are now ready to prove 
Theorem~\ref{thm:main1}.
%\bigskip
%\vspace{0.3cm}
\begin{proof}[\bf Proof of Theorem~\ref{thm:main1}.]
First we suppose that $|\F| \geq n+1$.
Let $A$ be an arbitrary matrix in $\cB$. The algorithm iterates the following process until
$A$ becomes of maximum rank.

We run the algorithm of Lemma~\ref{lem:second-interpreted} to test whether
$(\cB A')^i (\ker(AA')) \subseteq \im(A)$
for $i \in [n]$.
If this condition holds then $A$ is of maximum rank,
and the algorithm also gives a $\cork(\cB)$-singularity witness.
Otherwise we know by Theorem~\ref{thm:po} that there exists $B \in \cB$ and $i \in [n]$ such that
$(BA')^i (\ker(AA')) \not \subseteq \im(A)$.
We
apply the algorithm of Theorem~\ref{thm:po}
with input $\cB A'$, $\ker (AA')$ and $\im (A)$,
which finds such a couple $(B,i)$.
Lemma~\ref{lem:sufficient} applied to  $\langle A, B\rangle$ implies that
$A$ is not of maximum rank in $\langle A, B\rangle$.
If $A$ has rank $r \leq n-1$ which is not maximal in
$\langle A, B\rangle$, then the determinant of an 
appropriate $(r+1) \times (r+1)$ minor of $A+\lambda B$
%\miklos{inserted of $A+\lambda B$}
is
a nonzero polynomial of degree at most $r+1$
which has at most $r+1 \leq n$ roots.
 %$B \in \cB$ and $\ell$ such that $(BA')^{\ell} (\ker(AA')) \not \subseteq \im(A)$
We then pick $n+1$ arbitrary field elements $\lambda_1, \ldots, \lambda_{n+1}$, and %by Lemma~\ref{lem:sufficient}
we know that for some $1 \leq j \leq n+1$ we have $\rk(A+\lambda_j B) > \rk(A)$. We replace
$A$ by $A+\lambda_j B$ and restart the process.
% from the beginning.

Over $\Q$, 
%\gabor{inserted}
at the end of each iteration, by a reduction procedure described in \cite{GIR}  
we can achieve that the matrix $A$,
written as a linear combination of $B_1,\ldots,B_m$ has
coefficients from a fixed subset $K \subseteq \Q$ of size $n+1$
(say, $K=\{0,\ldots,n\}$).
In fact, if 
$A=\alpha_1B_1+\alpha_2B_2\ldots+\alpha_mB_m$ has rank $r$ then
for at least one $\kappa_1\in K$ the matrix
$\kappa_1B_1+\alpha_2B_2\ldots+\alpha_mB_m$ has rank at least $r$.
This way all the coefficients $\alpha_j$ can be replaced
with an appropriate element from $K$.

%The number of iterations is at most $n$, and each iteration takes polynomial 
%many steps since the processes of
%Lemma~\ref{lem:second-interpreted} and %Lemma~~\ref{lem:sufficient} and 
%Theorem~\ref{thm:po} are polynomial.

As in each iteration we either stop (and conclude with $A$ being of maximal 
rank), or increase the rank of $A$ by at least $1$, the number of iterations is 
at most $n$. Also, each iteration takes polynomial many steps since the 
processes 
of Lemma~\ref{lem:second-interpreted} and %Lemma~~\ref{lem:sufficient} and 
Theorem~\ref{thm:po} are polynomial.
Therefore the overall running  time is also polynomial. This finishes the case 
for $|\F|\geq n+1$.

When $|\F|<n+1$, 
%\gabor{moved paragraph from Remark to here}
we can compute the maximum rank by running the above procedure over a
sufficiently large extension field. The maximum rank
will not grow if we go over an extension. This follows
from the fact that the equality $\cork(\cB)=\disc(\cB)$ holds
over any field if $\cB$ is spanned by an arbitrary 
matrix and by rank one matrices, see \cite{IKS}.
\end{proof}
\begin{remark}\label{remark:thm1}
%\begin{enumerate}
%\item 
%\item 
As mentioned in the introduction, we can generalize to the setting when 
$\cB$ is spanned by rank-$1$ matrices and an arbitrary matrix, as follows. Let 
$\cB'$ be the subspace of $\cB$ generated by rank-$1$ matrices. 
%Suppose we start with some aribitrary $A\in \cB$. 
As indicated in Lemma~\ref{lem:goodproduct1} in the next section, in this case 
the algorithm for \pep{} with special $U=\ker(AA')$ and $U'=\im(A)$ is still
guaranteed to succeed if $A\not\in \cB'$. Furthermore, in the update step, the 
resulting matrix of higher rank keeps the property of not in $\cB'$. So from 
the given basis $B_1,\ldots,B_m$ for $\cB$, we apply the procedure in 
Theorem~\ref{thm:main1} 
using $B_i$ as the starting point, for each $B_i$. Then it is ensured that for 
those $B_i\not\in\cB'$ this procedure will succeed in finding a matrix with 
maximal rank. Otherwise if $B_i\in \cB'$, then \pep{} will either get 
$\mathbf{no}$, or detect that $\cH_\ell\cdots \cH_1(U)\subseteq U'$, so ending 
with a safe return. 
%for each index $i$,
%find an element $A+\lambda_iB_i\neq A$ of rank at least $\rk(A)$. 
%For at least one $i$, the matrix $A+\lambda_iB_i$ is outside 
%$\cB'$ and the procedure with $A+\lambda_iB_i$ in place of $A$
%will succeed.
%\end{enumerate}
\end{remark}

\iffalse
In the next section we will present an algorithm for \pep{}
which works correctly in the special cases when either
$\cD$ is spanned by rank one matrices (for arbitrary $U$
and $U'$) or $\cD$ is spanned by
rank one matrices and the projection which has kernel $U$ and
image $U'$ (in this case $U'$ must be a direct complement of $U$.)

Assume that we have a matrix $A\in\cB$. If we successfully constructed
a $\cork(A)$-discrepancy-witness then $A$ is of maximum rank.
Otherwise, in the case when $\cB$ is promised to be spanned by 
rank one matrices, it is sufficient to apply the algorithm
for the \pep{} with input $\cB A'$, $\ker (AA')$ and $\im (AA')$.
If only an unknown subspace $\cB'$ of $\cB$ of codimension 1
is spanned by rank one matrices then the algorithm is
guaranteed to succeed if $A\not\in \cB$. If the algorithm
for the \pep{} fails we do the following. We have the basis 
$B_1,\ldots,B_m$ for $\cB$ and for each index $i$,
find an element $A+\lambda_iB_i\neq A$ of rank at least $\rk(A)$. 
For at least one $i$, the matrix $A+\lambda_iB_i$ is outside 
$\cB'$ and the procedure with $A+\lambda_iB_i$ in place of $A$
will succeed.
\fi

\section{The \pep{} for rank-$1$ spanned matrix spaces}\label{sec:rank1}

In this section we prove {\bf Theorem~\ref{thm:po}}.

\paragraph{The setting.} Given subspaces $U,U'$ of $\F^n$ as well
as a basis $\{D_1, \dots, D_m\}$ for a matrix space
$\cD\leq \matspc$, we will show is that in polynomial time we can decide if $\cD^\ell (U) \not\subseteq U'$
for some $\ell$, and if this holds then find $D \in \cD$ s.t. $D^\ell (U) \not\subseteq U'$.

%In this section we present an algorithm, called
%$\prob{RankOnePowerOverflow}$, that solves the \pep{} 
%in the case when $\cD$ is spanned by rank one matrices.
%Plugging 
%$\prob{RankOnePowerOverflow}$ into Algorithm~\ref{algo:maxrankwithpep} 
%we obtain the algorithm stated as in Theorem~\ref{thm:main2}.

Formally let $\ell =\ell(\cD)$ be the smallest integer
$j$ s.t. $\cD^j (U) \not \subseteq U'$
 if such an integer exists, and $n$ otherwise.
We start by computing $\ell$ and for $1 \leq j \leq \ell$, bases $\cT_j$ for $\cD^j$.
%where $\ell$  is the smallest integer $j$ such that $\cD^jU\not\subseteq U'$ (or $\ell=n$). 
%For $j=0$ we let $BB_j$ be empty. 
Set  $\cT_1 =$ $ \{D_1,\ldots,D_m\}$. 
If $\cD^j (U) \not \subseteq U'$ then we set $\ell =j$ and stop constructing further bases.
If $j = n$ and $\cD^n (U)  \subseteq U'$ then we stop the algorithm  and output {\bf no}.
Otherwise
we compute $\cT_{j+1}$ by selecting
a maximal linearly independent set form the products of elements in $\cT_{j}$ and $\cT_{1}$.
%$\{T\times D: T \in \cT_j, D \in T_1\}$.

%the products of element $T_{j}$ and $D_k$, for $k=1,\ldots,m$. 

\paragraph{Helpful subspaces of $\cD$.} Recall that our goal is to find $D$ 
such that $D^\ell (U) \not\subseteq U'$. To achieve this goal, 
for $i\in[\ell]$, 
we define subspaces $\cH_i$ of $\cD$, 
which play a crucial role in the algorithm:
$$
\cH_i=\{X\in\cD\mid  \cD^{\ell-j}X \cD^{j-1} (U) \subseteq U',~
j=1,\ldots,i-1,i+1,\ldots,\ell\}.
$$
Let us examine the meaning for some matrix $X$ to be in $\cH_i$. Let $P$ be a 
product of $\ell$ elements from $\cD$, and suppose $X$ appears in $P$. Then 
$X\in\cH_i$ implies that, as long as $X$ appears in $P$ at the $j$th position, 
$j\neq i$, then it must be that $P (U) \subseteq U'$. In other words, for $P$ 
to be able to pull $U$ out of $U'$, it is necessary that $X$ appears at the 
$i$th position. 

The following lemma explains why $\cH_i$'s are useful for the purpose of 
powerflow problem. 
\begin{lemma}
\label{lem:oneterm}
For a matrix $X=X_1+\ldots+X_\ell$ with
$X_i\in \cH_i$, we have
$X^\ell (U) \subseteq U'$ if and
only if $X_\ell\cdots X_2X_1(U) \subseteq U'$.
\end{lemma}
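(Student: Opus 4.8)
The plan is to expand $X^\ell=(X_1+\cdots+X_\ell)^\ell$ as a sum of $\ell^\ell$ products of the form $X_{i_\ell}X_{i_{\ell-1}}\cdots X_{i_1}$, where each index $i_j$ runs over $[\ell]$, and then argue that applying any such product to $U$ lands inside $U'$ unless the index sequence $(i_1,\ldots,i_\ell)$ is exactly the identity permutation $(1,2,\ldots,\ell)$. Granting this, we have
$$
X^\ell(U)\subseteq U' \iff X_\ell X_{\ell-1}\cdots X_1(U)\subseteq U',
$$
since $X^\ell(U)$ is the span of the images of all the summand products applied to $U$, and all summands other than the ``diagonal'' one $X_\ell\cdots X_1$ already map $U$ into $U'$ unconditionally.

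The key step is therefore the following claim: if $(i_1,\ldots,i_\ell)\neq(1,\ldots,\ell)$, then $X_{i_\ell}\cdots X_{i_1}(U)\subseteq U'$. To see this, first note that since $(i_1,\ldots,i_\ell)$ is a sequence of $\ell$ values from $[\ell]$ that is not a permutation, it must have a repeated value, hence it misses at least one value; pick the largest position $j$ at which $i_j\neq j$ occurs — actually, more cleanly, observe that any non-identity sequence must contain an index $j$ with $i_j\neq j$, and I would take the smallest such $j$. At this position the factor sitting in slot $j$ (counting from the right, i.e.\ $j$ linear maps have been applied so far, the last being $X_{i_j}$) is $X_{i_j}$ with $i_j\neq j$, while $i_1=1,\ldots,i_{j-1}=j-1$. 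So the partial product through slot $j$ is $X_{i_j}X_{j-1}X_{j-2}\cdots X_1$. Since $X_{i_j}\in\cH_{i_j}$ and $i_j\neq j$, the defining condition of $\cH_{i_j}$ — with the role of the ``missing slot'' being $j$ — gives $\cD^{\ell-j}X_{i_j}\cD^{j-1}(U)\subseteq U'$. In particular $X_{i_j}\cD^{j-1}(U)\subseteq X_{i_j}(\cD^{j-1}(U))$, and since $X_{j-1}\cdots X_1(U)\subseteq \cD^{j-1}(U)$, we get $X_{i_j}X_{j-1}\cdots X_1(U)\subseteq \cD^{\ell-j}{}^{-1}(U')$ in the sense that applying any $\ell-j$ further factors from $\cD$ lands in $U'$; since the remaining factors $X_{i_\ell},\ldots,X_{i_{j+1}}$ all lie in $\cD$, applying them to $X_{i_j}X_{j-1}\cdots X_1(U)$ yields a subspace of $U'$. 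Hence the whole product maps $U$ into $U'$.

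The main obstacle — really the only subtle point — is bookkeeping the indexing conventions correctly: matching the ``position from the right'' in the product $X_{i_\ell}\cdots X_{i_1}$ with the index $j$ that appears in the definition $\cD^{\ell-j}X\cD^{j-1}(U)\subseteq U'$ of $\cH_j$, and verifying that for the chosen repeated/missing value the condition we invoke is genuinely one of the conditions listed in $\cH_{i_j}$ (i.e.\ that $j$ is in the index range $\{1,\ldots,i_j-1,i_j+1,\ldots,\ell\}$, which holds precisely because $j\neq i_j$). Once the index chase is pinned down, the rest is just the monotonicity facts $\cD^a(S)\subseteq\cD^a(T)$ for $S\subseteq T$ and $\cD^{a+b}=\cD^a\cD^b$ from Lemma~\ref{lemma:contain}, applied to push the partial product into $U'$ and keep it there under the remaining $\cD$-factors. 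The ``only if'' direction of the lemma is immediate from the expansion since $X_\ell\cdots X_1(U)\subseteq X^\ell(U)$.
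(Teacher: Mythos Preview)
Your approach is essentially the paper's: expand $X^\ell$ over all maps $\sigma:[\ell]\to[\ell]$, and for any non-identity $\sigma$ pick a position $j$ with $\sigma(j)\neq j$, so that $X_{\sigma(j)}\in\cH_{\sigma(j)}$ forces $\cD^{\ell-j}X_{\sigma(j)}\cD^{j-1}(U)\subseteq U'$ and hence $X_{\sigma(\ell)}\cdots X_{\sigma(1)}(U)\subseteq U'$. Your choice of the \emph{smallest} such $j$ is harmless but unnecessary; any $j$ with $i_j\neq j$ already suffices, since the factors on both sides of $X_{i_j}$ lie in $\cD$ regardless of their indices.

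Two small slips in your write-up, however, are genuinely false statements and should be corrected. First, $X^\ell(U)$ is \emph{not} the span of the images of the summand products: it is the image of the single linear map $X^\ell$, which is only contained in that span. Second, and for the same reason, $X_\ell\cdots X_1(U)\subseteq X^\ell(U)$ is not true in general, so your justification of the ``only if'' direction fails as stated. The clean fix (which is implicit in the paper's two-line proof) is to argue pointwise: for each $u\in U$ one has $X^\ell u - X_\ell\cdots X_1 u\in U'$ because every other summand lands in $U'$, and this immediately gives both implications at once.
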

\begin{proof}
We have $X^\ell=\sum_{\sigma}X_{\sigma(\ell)}\cdots 
X_{\sigma(1)}$,
%\miklos{$X^\ell$ as $X^m$} 
where
the summation is over the maps $\sigma:[\ell]\rightarrow [\ell]$.
When $\sigma$ is not the identity map then there exists an index $j$
such that $\sigma(j)\neq j$. %Put $i=\sigma(j)$. 
Then
$X_{\sigma(\ell)}\cdots X_{\sigma(1)}(U) \subseteq U'$ by the definition
of $\cH_{\sigma(j)}$.
\end{proof}

Furthermore, $\cH_i$'s can be computed efficiently 
on an algebraic RAM as well as over $\Q$
as follows. 
%The subspaces $\cH_i$ can be computed as follows.
Let $x_1, \dots, x_m$ be formal variables,
an element in $\cD$ can be written as 
$X=\sum_{k\in[m]}x_kD_k$. The condition 
$\cD^{\ell-j}X \cD^{j-1} (U) \subseteq U'$
is equivalent to the set of the following 
homogeneous linear equations in the variables $x_k$: %\\
%\centerline{
$\langle Z( \sum_{k\in[m]}x_kD_k)Z'u,v\rangle= 0 ,$ %}
where $Z$ is from $\cT_{\ell-j}$, $Z'$ is from
$\cT_{j-1}$, $u$ is from a basis for $U$ and
$v$ is from a basis for ${U'}^\perp$. Thus
$\cH_i$ can be computed by solving
a system of polynomially many homogeneous
linear equations. Note that the coefficients of 
the equations are scalar products of vectors from a basis
for $U'^\perp$ by vectors obtained as applying
products of $\ell$ matrices from $\{D_1,\ldots,D_m\}$
to basis elements for $U$.
%The definition of $\cH_i$ implies the following.

\paragraph{Back to rank-$1$ spanned setting.} 
In general, $\cH_i$ can be $0$. 
In our setting, due to the existence of a basis of rank-$1$ matrices, 
fortunately this is far from the case. 
%and the condition on $\ell$ 
%However, the existence of rank one generators 
%and the condition on $\ell$ 
%implies that this is not the case. 

\begin{lemma}
\label{lem:goodproduct}
%Assume that $\cD$ is spanned by rank one matrices.
Suppose $\cB$ is rank-$1$ spanned, and $\ell$ is the smallest integer such that 
$\cD^\ell (U) \not\subseteq U'$. Then the following hold:
%\begin{enumerate}
(1) $\forall i\in[\ell]$, $\cH_i\neq 0$;
(2) $\cH_\ell \cdots \cH_1(U) \not\subseteq U'$.
%\end{enumerate}
\end{lemma}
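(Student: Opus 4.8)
The plan is to prove the two claims in order, using the rank-$1$ structure to force each $\cH_i$ to contain a specific rank-$1$ matrix, and then assembling these into a single product witnessing (2). First I would fix, by the hypothesis that $\cB$ (i.e.\ $\cD$) is rank-$1$ spanned, a basis $D_1,\ldots,D_m$ of $\cD$ consisting of rank-$1$ matrices, working if necessary over an extension field $\F'$ (by the remark at the end of Section~\ref{sec:wong}, passing to $\F'$ does not change which $\cD^j(U)$ are contained in $U'$, so $\ell$ is unchanged, and a singularity witness over $\F'$ restricts back appropriately). Since $\cD^\ell(U)\not\subseteq U'$ but $\cD^{\ell-1}(U)\subseteq U'$ — here using minimality of $\ell$ — there is a product $D_{k_\ell}\cdots D_{k_1}$ of rank-$1$ basis elements with $D_{k_\ell}\cdots D_{k_1}(U)\not\subseteq U'$, while every shorter such product maps $U$ into $U'$.

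For part (1), fix $i\in[\ell]$ and consider the rank-$1$ matrix $D_{k_i}$ from that product. I claim $D_{k_i}\in\cH_i$, i.e.\ for every $j\neq i$ in $[\ell]$ and every choice of length-$(\ell-j)$ product $Z$ and length-$(j-1)$ product $Z'$ of basis elements, $Z\,D_{k_i}\,Z'(U)\subseteq U'$. The key observation is that a rank-$1$ matrix $D_{k_i}$ factors as $D_{k_i}=a b^{\tp{}}$ for suitable column vectors, so $Z D_{k_i} Z'(U) = (Za)\cdot\big(b^{\tp{}}Z'(U)\big)$ is contained in the line $\langle Za\rangle$; in particular $Z D_{k_i} Z'(U)\subseteq \langle Za\rangle = Z D_{k_i} Z''(U)$ for any other $Z''$, so membership in $U'$ of $ZD_{k_i}Z'(U)$ depends only on $Z$ (and on whether $b^{\tp{}}Z'(U)=0$, which only helps). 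Now $Z\,D_{k_i}$ is itself a product of $\ell-j+1\le \ell-1$ basis matrices (since $j\ge 1$ and $j\neq i$ forces, after bounding, a total length $<\ell$ — more carefully: the full product $Z D_{k_i} Z'$ has length $(\ell-j)+1+(j-1)=\ell$, but what matters is that $Z D_{k_i} Z'(U)\subseteq\langle Za\rangle$ and $Za = Z D_{k_i} e$ for a well-chosen $e$ lies in a product of length $\ell-j+1\le\ell$; if $\ell-j+1<\ell$ we invoke minimality of $\ell$ directly, and the boundary case needs the line argument). This is exactly the step I expect to require the most care: reducing $ZD_{k_i}Z'(U)\subseteq U'$ to a statement about a strictly shorter product so that minimality of $\ell$ applies, using the rank-$1$ collapse to a line. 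I would actually phrase it as: $Z D_{k_i} Z'(U) \subseteq Z D_{k_i}(\F^n) = Z(\langle a\rangle)$, and $Z(\langle a\rangle) = Z D_{k_i} w$ where $w$ is any vector with $b^{\tp{}}w\neq 0$, hence $Z(\langle a\rangle)$ is spanned by a single product of $(\ell-j)+1 \le \ell-1 < \ell$ basis elements applied to $w$; but $\cD^{\ell-1}(U)\subseteq U'$ only speaks about applying to $U$, so I instead take $w\in U$ if possible, and otherwise note the whole thing is $0$. Cleanest: since $\cD^{i-1}(U)\subseteq U'$ is false in general, I should argue directly that $Z D_{k_i}(\ker\text{-complement})$ lands in $\cD^{\le\ell-1}(\text{something}\subseteq U')$ — I'll nail the indices in the writeup, but the mechanism is the line-collapse plus minimality.

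For part (2), having shown $D_{k_i}\in\cH_i$ for each $i$, set $X_i=D_{k_i}$; then $\cH_\ell\cdots\cH_1(U)\supseteq X_\ell\cdots X_1(U) = D_{k_\ell}\cdots D_{k_1}(U)\not\subseteq U'$, which is precisely claim (2) and also re-confirms each $\cH_i\neq 0$. (Note claim (1) already follows once we exhibit $D_{k_i}\in\cH_i$; part (2) is then essentially immediate from the choice of the witnessing product.) Finally I would remark that this also proves the stronger statement needed for the algorithm: taking a generic element $X_i$ of each $\cH_i$ — or more simply, since we only need existence, the specific $D_{k_i}$ — and that over a field of size $\ge n+1$ one can pick the coefficients so that $X=X_1+\cdots+X_\ell$ lies in $\cD$ and satisfies $X^\ell(U)\not\subseteq U'$ by Lemma~\ref{lem:oneterm}. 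The main obstacle, to reiterate, is purely bookkeeping in part (1): matching the length count $(\ell-j)+1+(j-1)=\ell$ against the available hypothesis $\cD^{\ell-1}(U)\subseteq U'$, which is resolved by the rank-$1$ image-collapse that effectively removes the $D_{k_i}$ from the length count.
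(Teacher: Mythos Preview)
Your high-level strategy matches the paper's: fix rank-one generators $C_1,\ldots,C_m$ (over an extension if necessary), pick a length-$\ell$ product $C_{k_\ell}\cdots C_{k_1}$ with $C_{k_\ell}\cdots C_{k_1}(U)\not\subseteq U'$, and show $C_{k_i}\in\cH_i$ for each $i$; both (1) and (2) then follow immediately. The gap is in your argument that $C_{k_i}\in\cH_i$, and it is not mere bookkeeping.

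You try to show $\cD^{\ell-j}C_{k_i}\cD^{j-1}(U)\subseteq U'$ directly by collapsing through the line $\im(C_{k_i})=\langle a\rangle$ and then writing $\langle a\rangle=C_{k_i}(w)$ for some $w\in U$, producing a product of length $\ell-j+1$ applied to $U$. But this is $<\ell$ only when $j\geq 2$, and even then it requires $C_{k_i}(U)\neq 0$; your parenthetical ``otherwise note the whole thing is $0$'' is wrong, since $C_{k_i}(U)=0$ does not force $C_{k_i}\cD^{j-1}(U)=0$. Your summary that the rank-one collapse ``effectively removes $D_{k_i}$ from the length count'' is the misconception: nothing in your argument uses the specific position $i$ of $C_{k_i}$ in the witnessing product, and without that the case $j<i$ (in particular $j=1$) cannot be closed.

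What the paper actually does is argue by contradiction via a \emph{swap} through $\langle a\rangle$. If $\cD^{\ell-j}C_{k_i}\cD^{j-1}(U)\not\subseteq U'$ for some $j\neq i$, then $C_{k_i}\cD^{j-1}(U)=\langle a\rangle$; and from the witnessing product also $C_{k_i}\cD^{i-1}(U)=\langle a\rangle$. Since these two intermediate images coincide, one may exchange the outer contexts: neither $\cD^{\ell-j}C_{k_i}\cD^{i-1}(U)$ nor $\cD^{\ell-i}C_{k_i}\cD^{j-1}(U)$ is contained in $U'$, and their lengths are $\ell-j+i$ and $\ell-i+j$. As $i\neq j$, one of these is strictly less than $\ell$, contradicting minimality. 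So the rank-one property is used not to delete a factor but to make the right-hand context interchangeable between positions $i$ and $j$, which is where the strictly shorter product comes from.
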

%(Recall that $\ell$ is the smallest integer s.t. $\cD^\ell U\not\subseteq U'$.) 

\begin{proof}
Assume that $\cD$ is spanned by 
the rank one matrices $C_1,\ldots,C_m$, where $C_i$ may be over an extension 
field $\F'$ of $\F$. 

Let us first consider the case when $C_i$'s are matrices over $\F$. Then there
exist indices $k_1,\ldots,k_{\ell}$ such
$C_{k_\ell}\cdots C_{k_1} (U) \not\subseteq U'.$
We show that $C_{k_i}\in \cH_i$, for $i \in [\ell]$, proving (1). This also 
implies immediately
$\cH_\ell \cdots \cH_1 (U) \not\subseteq U'$, proving (2). 

Assume by contradiction  that $C_{k_i}\not\in \cH_i$, for some $i \in [\ell]$. Then
$\cD^{\ell-j} C_{k_i} \cD^{j-1} (U) \not\subseteq U',$
for some $j\neq i$. On the other hand $C_{k_i}$ satisfies 
$\cD^{\ell-i} C_{k_i} \cD^{i-1} (U) \not\subseteq U'.$
Since $C_{k_i}$ is of rank $1$ we have  
$C_{k_i} \cD^{j-1} (U) =C_{k_i} \cD^{i-1} (U),$
which yields that neither 
$\cD^{\ell-i} C_{k_i}\cD^{j-1} (U)$
nor 
$\cD^{\ell-j} C_{k_i}\cD^{i-1} (U)$
is contained in $U'$.
However one of these products
 is shorter than $\ell$, contradicting the minimality of $\ell$. 
 
To generalize to $C_i$'s over an extension field $\F'$, it suffices to lift all 
objects ($\cD$, $\cH_i$, $U$ and $U'$) to their spans with the extension field 
$\F'$ (denoted as $\F'\cD$, $\F'\cH_i$, $\F U$ and $\F U'$). After going 
through the above argument, we have $\F'\cH_i\neq 0$ and $\F'\cH_\ell\cdots 
\F'\cH_1(\F'U)\not\subseteq \F'U'$. We then have $\cH_i\neq 0$ and 
$\cH_\ell\cdots \cH_1(U)\not\subseteq U'$, as it is not hard to see that 
$\F'\cH_i$, and $\F'\cH_\ell\cdots \F'\cH_1(\F'U)$, are spans of $\cH_i$ and 
$\cH_\ell\cdots \cH_1(U)$ with the extension field $\F'$. 
\end{proof}
That is, in out setting, not only $\cH_i\neq 0$, but $\cH_\ell\cdots \cH_1$ is 
able to pull $U$ outside $U'$ (instead of using the full power of $\cD^\ell$).

To finish the algorithm, we compute bases for products $\cH_i\cdots\cH_1$, for 
$i \in [n]$,
in a way similar to computing bases for $\cD^i$. Then we search the basis
of $\cH_{\ell}$ for an element $Z$ such that
$Z\cH_{\ell-1}\cdots\cH_1(U) \not\subseteq U'$. 
%We return with reporting failure if no such element found. Otherwise 
We put $X_{\ell}=Z$ and continue
searching the basis of $\cH_{\ell-1}$ for an element $Z$ %$X_{\ell-1}$ 
such that %$X_\ell X_{\ell-1}\cH_{\ell-2}\cdots\cH_1U\not\subseteq U'$.
$X_\ell Z\cH_{\ell-2}\cdots\cH_1 (U) \not\subseteq U'$.
Continuing the iteration, Lemma~\ref{lem:goodproduct} ensures that eventually 
we find 
$X_i\in \cH_i$,  for $i\in[\ell]$, such that $X_\ell\cdots X_1(U) \not\subseteq 
U'$.
We set $D=X_1+\ldots+X_\ell$, then by Lemma~\ref{lem:oneterm} we have
$D^\ell (U) \not\subseteq U'$.  We return $D$ and $\ell$. This finishes the 
proof of Theorem~\ref{thm:po}. 
%\iffalse

Finally, we introduce the following slight extension of 
Lemma~\ref{lem:goodproduct} for special subspaces 
$U,U'$, as applicable to Remark~\ref{remark:thm1} (2). 
\begin{lemma}
\label{lem:goodproduct1}
Assume that $\cD$ is spanned by rank one matrices
and a projection to $U'$ having kernel $U$.
Then $\cH_\ell \cdots \cH_1U\not\subseteq U'$.
\end{lemma}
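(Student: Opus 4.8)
The plan is to import the proof of Lemma~\ref{lem:goodproduct} almost verbatim, the only genuinely new point being that the projection $\pi$ (with $\ker\pi=U$ and $\im\pi=U'$) can never occur in a \emph{shortest} product of generators of $\cD$ that pulls $U$ out of $U'$. Concretely, I would set things up exactly as in Lemma~\ref{lem:goodproduct}: pass to an extension field if necessary (and descend at the end by the same scalar-extension argument used there), so that $\cD$ is spanned over the ground field by rank-$1$ matrices $C_1,\ldots,C_{m-1}$ together with the idempotent $\pi$; note $V=U\oplus U'$ and that $\pi$ restricts to the identity on $U'$. Let $\ell$ be the least integer with $\cD^\ell(U)\not\subseteq U'$; then $\cD^j(U)\subseteq U'$ for $j<\ell$, and the case $\ell=1$ is vacuous since there $\cH_1=\cD$. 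So assume $\ell\geq2$ and fix a product $P=C_{k_\ell}\cdots C_{k_1}$ of $\ell$ generators and a vector $u\in U$ with $P(u)\notin U'$, the length $\ell$ being minimal.

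The decisive step is to show that $\pi$ is not among the $C_{k_i}$. If $i=\ell$ then $P(u)\in\im\pi=U'$; if $i=1$ then $\pi(u)=0$, so $P(u)=0\in U'$; both contradict $P(u)\notin U'$. For $1<i<\ell$ I would exploit that $\pi$ fixes $U'$ pointwise: the partial image $v=C_{k_{i-1}}\cdots C_{k_1}(u)$ lies in $\cD^{i-1}(U)\subseteq U'$ (since $i-1<\ell$), hence $\pi(v)=v$; therefore erasing the factor $\pi$ from $P$ does not change its value at $u$, producing a product of length $\ell-1$ whose action on $u$ still escapes $U'$, contradicting minimality of $\ell$. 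I expect this ``delete the projection'' observation to be the main obstacle to spot; once found it is a one-liner. The only minor point to state carefully is that a product of $i-1$ individual generators is an element of the matrix space $\cD^{i-1}$, so applying it to $u\in U$ indeed lands in $\cD^{i-1}(U)$.

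With $\pi$ excluded, all the $C_{k_i}$ are rank-$1$ matrices and the rest is literally the argument of Lemma~\ref{lem:goodproduct}: if some $C_{k_i}\notin\cH_i$, then $\cD^{\ell-j}C_{k_i}\cD^{j-1}(U)\not\subseteq U'$ for some $j\neq i$, while also $\cD^{\ell-i}C_{k_i}\cD^{i-1}(U)\not\subseteq U'$ because $P(U)\not\subseteq U'$; since $C_{k_i}$ has rank $1$ we get $C_{k_i}\cD^{j-1}(U)=C_{k_i}\cD^{i-1}(U)$, so one of the products $\cD^{\ell-i}C_{k_i}\cD^{j-1}(U)$ or $\cD^{\ell-j}C_{k_i}\cD^{i-1}(U)$ — both not contained in $U'$ by the above, and one of length strictly less than $\ell$ — contradicts minimality. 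Hence $C_{k_i}\in\cH_i$ for every $i$, and consequently $\cH_\ell\cdots\cH_1(U)\supseteq C_{k_\ell}\cdots C_{k_1}(U)\not\subseteq U'$, as claimed. Finally the extension-field descent (the spaces $\cH_i$, $\cD^j$, $U$, $U'$ all commute with scalar extension, exactly as recorded in the proof of Lemma~\ref{lem:goodproduct}) transfers the conclusion back to the ground field.
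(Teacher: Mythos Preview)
Your proposal is correct and follows exactly the approach the paper indicates: the paper's proof is the single sentence ``Identical with the proof of Lemma~\ref{lem:goodproduct}, based on the observation that a projection with the prescribed properties can be deleted from any product mapping $U$ outside $U'$,'' and you have carried out precisely this deletion argument (handling the endpoint positions $i=1,\ell$ separately and using $\cD^{i-1}(U)\subseteq U'$ for the interior positions) before invoking the rank-one argument verbatim.
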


\begin{proof}
Identical with the proof of Lemma~\ref{lem:goodproduct},
based on the observation that a projection 
with the prescribed properties can be deleted from 
any product mapping $U$ outside $U'$. 
\end{proof}
%\fi

%$\hfill \Box$
%if $X^\ell U\not\subseteq U'$. Otherwise report failure.

%We finally note that when the above procedure is applied to an arbitrary matrix space, if some $(X, \ell)$ is returned then we do have $X^\ell U\not\subseteq U'$. However, if failure is returned, it is not guaranteed that no $(X, \ell)$ s.t. $X^\ell U\not\subseteq U'$: such $X$ may not be of the form $X_1+\dots+X_\ell$ where $X_i\in\cH_i$. 

\section{The first Wong sequence and triangularizable matrix spaces}\label{sec:first}
\subsection{The connection}

%Here we only give a proof outline of Theorem~\ref{thm:main2}, and the reader 
%is 
%referred to the full version \cite[Section 5]{conf_version} for details.
%Appendix~\ref{app:first} for details. 
%the full version for complete details.
% for finite fields, the proof for the rationals is left for the full paper.
 
To tackle the triangularizable matrix spaces, our starting point is the 
following lemma, which connects first Wong sequences 
with singularity witnesses. 
%Its proof is put in Appendix~\ref{app:first}.
\begin{lemma}\label{prop:first_1_dim}
Let $A\in\cB\leq\matspc$, and let $U^*$ be the limit of the first Wong sequence of $(A, \cB)$.
Set $d=\dim(U^*)$. Then either $U^*$ is a singularity witness of $\cB$, or there exist nonsingular 
matrices $P, Q\in\matspc$, such that $\forall B\in \cB$, $QBP^{-1}$ is of the 
form {\small
$\left[
\begin{array}{cc}
X & Y \\
0 & Z
\end{array}
\right]$
}, where $X$ is of size $d\times d$, and $\cB$ is nonsingular in the $X$-block.
\end{lemma}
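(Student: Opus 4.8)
\textbf{Proof plan for Lemma~\ref{prop:first_1_dim}.}

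The plan is to analyze the limit $U^*$ of the first Wong sequence of $(A,\cB)$, which by Proposition~\ref{prop:largest} is the largest subspace $T \leq V$ with $\cB(T) \subseteq \cA(T)$, where $\cA = \langle A \rangle$. So we have $\cB(U^*) \subseteq A(U^*)$. First I would dispose of the singularity-witness case: if $\dim(A(U^*)) < \dim(U^*)$, then since $\cB(U^*) \subseteq A(U^*)$ we get $\dim(U^*) - \dim(\cB(U^*)) \geq \dim(U^*) - \dim(A(U^*)) > 0$, so $U^*$ is a singularity witness of $\cB$ and we are done. Hence we may assume $\dim(A(U^*)) = \dim(U^*) = d$, i.e.\ $A$ restricted to $U^*$ is injective and $A(U^*)$ has dimension $d$ as well, and moreover $\cB(U^*) \subseteq A(U^*)$, so $A(U^*)$ is ``$\cB$-invariant'' in the sense that every $B \in \cB$ maps $U^*$ into $A(U^*)$.

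Next I would set up the change of basis. Choose a basis $u_1,\dots,u_d$ of $U^*$ and extend it to a basis $u_1,\dots,u_n$ of $V = \F^n$; let $P$ be the matrix whose columns express the standard basis in terms of the $u_i$'s (equivalently $P^{-1}$ has columns $u_1,\dots,u_n$), so that $P^{-1}$ maps $\langle e_1,\dots,e_d\rangle$ onto $U^*$. Since $A$ is injective on $U^*$, the vectors $A(u_1),\dots,A(u_d)$ are linearly independent and span $A(U^*)$; extend them to a basis $A(u_1),\dots,A(u_d), w_{d+1},\dots,w_n$ of $V'=\F^n$, and let $Q$ be the matrix sending this basis to the standard basis (so $Q^{-1}$ has columns $A(u_1),\dots,A(u_d),w_{d+1},\dots,w_n$). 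Now for any $B \in \cB$, consider $QBP^{-1}$. Its first $d$ columns are $Q B(u_j)$ for $j \leq d$; since $B(u_j) \in \cB(U^*) \subseteq A(U^*) = \langle A(u_1),\dots,A(u_d)\rangle$, applying $Q$ writes $B(u_j)$ in coordinates supported on the first $d$ entries — so the bottom-left $(n-d)\times d$ block is zero. This gives $QBP^{-1} = \left[\begin{array}{cc} X & Y \\ 0 & Z \end{array}\right]$ for every $B \in \cB$, with $X$ of size $d \times d$. The key point making this work for a \emph{common} $P,Q$ is that $U^*$ and $A(U^*)$ depend only on the pair $(A,\cB)$, not on the individual $B$.

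Finally I must check that $\cB$ is nonsingular in the $X$-block, i.e.\ there is some $B \in \cB$ whose $X$-block has rank $d$. Take $B = A$ itself: by construction $A(u_j) $ is the $j$-th basis vector of the chosen basis of $V'$, so $Q A(u_j) = e_j$, meaning the first $d$ columns of $QAP^{-1}$ are $e_1,\dots,e_d$; thus the $X$-block of $QAP^{-1}$ is the $d\times d$ identity, which is nonsingular. (Here $Y$ and $Z$ for $A$ record how $A$ acts on the complement $\langle u_{d+1},\dots,u_n\rangle$, which is irrelevant.) The main subtlety — and the step I would be most careful with — is the bookkeeping of which space $P$ and $Q$ act on and in which direction, together with the observation that $U^*$ being the \emph{limit} (hence a fixed point of the Wong iteration) is exactly what forces $\cB(U^*)\subseteq A(U^*)$; once that inclusion is in hand, the block-triangular form and the nonsingularity of the $X$-block via $B=A$ are essentially immediate. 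I would also remark that this argument does not yet use triangularizability of $\cB$ at all — it is a structural dichotomy valid for any $A \in \cB$ — and that the triangularizable hypothesis will enter later, presumably by recursing on the smaller $X$-block and the $Z$-block.
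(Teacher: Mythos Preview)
Your proposal is correct and follows essentially the same approach as the paper's proof: both use that the limit $U^*$ satisfies $\cB(U^*)\subseteq A(U^*)$ (and in fact $\cB(U^*)=A(U^*)$ since $A\in\cB$), split on whether $\dim(U^*)>\dim(\cB(U^*))$, and in the equality case obtain the block form via a basis change and certify nonsingularity of the $X$-block using $A$ itself. Your write-up simply spells out the basis-change bookkeeping that the paper leaves implicit.
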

\begin{proof}
If $\dim(U^*)>\dim(\cB(U^*))$ then $U^*$ is a singularity witness. If 
$\dim(U^*) = \dim(\cB(U^*))$ then the choice of $P$ and $Q$
corresponds to an appropriate basis change transformation.
%The two cases in statement of the proposition depends on whether 
%$\dim(U^*)>\dim(A(U^*))$ or $\dim(U^*)=\dim(A(U^*))$. 
To see that $\cB$ is nonsingular in the $X$-block, note that $A \in \cB$ and 
$A(U^*) = \cB (U^*)$.
%Recall that $\cB(U^*)\subseteq A(U^*)$. If $\dim(U^*)>\dim(A(U^*))\geq 
%\dim(\cB(U^*))$ then $U^*$ is clearly a singular subspace witness. If 
%$d:=\dim(U^*)=\dim(A(U^*))$, then every $B\in \cB$ sends $U^*$ to $A(U^*)$. 
%Let 
%us fix two basis of $V$: $P=\{p_1, p_2, \dots, p_n\}$ and $Q=\{q_1, \dots, 
%q_n\}$ where $\{p_1, \dots, p_{d}\}$ is a basis of $U^*$ and $\{q_1, \dots, 
%q_{d}\}$ is a basis of $A(U^*)$. Viewing $P$ and $Q$ as change of basis 
%matrices we are done. 
%Let $P$ be a basis of $V$ extending a basis of $U^*$, and $Q$ be a basis 
%extending a basis of $A(U^*)$. This gives the desired $P$ and $Q$ in the 
%statement. 
\end{proof}

Lemma~\ref{prop:first_1_dim} suggests a recursive algorithm: take an arbitrary $A\in\cB$ and compute $U^*$, the limit of the first Wong sequence of $(A, \cB)$. If we get a
singularity witness, we are done. Otherwise, if $U^*\neq 0$, as the $X$-block is already nonsingular, we only need to focus on the nonsingularity of $Z$-block which is of smaller size. To make this idea work, we have to satisfy essentially two
conditions. We must find some $A$ such that $U^*\neq 0$, and to allow for recursion the specific property of the matrix space $\cB$ 
we are concerned with has to
be inherited by the subspace corresponding to the $Z$-block. It turns out that 
in the triangularizable case these two problems can be 
taken care of by the following lemma.

%If $DAC^{-1}(1, 1)\neq 0$, we claim $U^*\neq 0$. To see this,  let $S=\langle c_1\rangle$ and $T=\langle d_1\rangle$. As $\cB$ is triangularizable by $(C, D)$, $\cB(S)\subseteq T$. As $DAC^{-1}(1, 1)\neq 0$, $A(S)=T$, so $\cB(S)\subseteq A(S)$. So by Proposition~\ref{prop:largest}, $c_1\in U^*$, thus $U^*\neq 0$. 

%Thus the question is to get some $A\in\cB$ with $DAC^{-1}(1, 1)\neq 0$, and to allow for recursion. For this we resort to the following properties of triangularizable matrix spaces.

%Suppose we are given a basis $\{B_1, \dots, B_m\}$ for 
%$\cB\leq \matspc$ which is triangularizable over
%an extension field $\F'$ of $\F$, i.~e., $B_i=DB_i'C^{-1}$ for some nonsingular 
%$C, D\in\matspe$, and $B_i'\in \matspe$ is upper triangular for every
%$i\in[m]$. 
%Our problem is to determine whether 
%there exists a nonsingular matrix in $\cB$ or not and 
%finding such a matrix
%if exists.

\begin{lemma}\label{lem:first_correct}
Let $\cB\leq \F$ be given by a basis $\{B_1, \dots, B_m\}$, and suppose that there exist nonsingular matrices
$C, D\in M(n, \F')$ such that $B_i=DB_i'C^{-1}$, and $B_i'\in M(n, \F')$ is 
upper triangular for every
$i\in[m]$. 
%Let $\cB$, $B_i$, $B_i'$ and $C$, $D$ be defined at the beginning of this section. 
Then we have the following. 
\begin{enumerate}
\item Either $\cap_{i\in[m]}\ker(B_i)\neq 0$, or $\exists j\in [m]$ and $0\neq 
U\leq \F^n$ s.t. $B_j(U)=\cB(U)$. 
\item Suppose there exist $j\in [m]$ and $0\neq U\leq \F^n$ s.t. $B_j(U)=\cB(U)$, and $\dim(U)=\dim(B_j(U))$. 
%Suppose that there exists $U\leq \vecspc$ such that $\dim(\cB(U))=\dim(U)$. 
Let $B_i^* : \vecspc/U \rightarrow \vecspc/\cB(U)$ 
be the linear map induced by $B_i$, for $i\in[m]$.
Then $\cB^*=\langle B_1^*, \dots, B_m^*\rangle$ 
is triangularizable over $\F'$. 
\end{enumerate}
\end{lemma}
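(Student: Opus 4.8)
For part~1, I would reduce everything to the extension field $\F'$. The common kernel $\cap_{i\in[m]}\ker(B_i)$ is nonzero over $\F$ iff it is nonzero over $\F'$, and by the observation at the end of Section~\ref{sec:wong} the limit of the first Wong sequence of $(B_j,\cB)$ is nonzero over $\F$ iff it is nonzero over $\F'$; moreover, by Proposition~\ref{prop:largest}, that limit is nonzero precisely when some nonzero $U$ satisfies $\cB(U)\subseteq B_j(U)$, i.e. $B_j(U)=\cB(U)$ (the reverse inclusion being automatic as $B_j\in\cB$). So it suffices to produce, over $\F'$, either a nonzero common kernel vector or an index $j$ and a nonzero $\F'$-subspace $U'$ with $\cB(U')\subseteq B_j(U')$. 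Take $w=Ce_1$: since $B_i w=DB_i'C^{-1}Ce_1=(B_i')_{11}\,De_1$, each $B_i$ sends the line $\langle w\rangle$ into the line $\langle De_1\rangle$. If every scalar $(B_i')_{11}$ vanishes, then $w$ is a common kernel vector; otherwise $\cB(\langle w\rangle)=\langle De_1\rangle=B_j(\langle w\rangle)$ for any $j$ with $(B_j')_{11}\neq 0$, which is what is needed.

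For part~2 I would work over $\F'$ throughout (replacing $U$ and $\cB(U)$ by their $\F'$-spans), writing $W=\cB(U)=B_j(U)$, $d=\dim U=\dim W$, $r=n-d$. Put $\tilde F_k=C\langle e_1,\dots,e_k\rangle$ and $\tilde G_k=D\langle e_1,\dots,e_k\rangle$; upper-triangularity of the $B_i'$ gives $B_i(\tilde F_k)\subseteq\tilde G_k$ for all $i,k$. Since $k\mapsto\dim(\tilde F_k+U)$ increases by $0$ or $1$, running from $d$ at $k=0$ to $n$ at $k=n$, it has exactly $r$ jump positions $k_1<\dots<k_r$; set $\hat X_0=U$ and $\hat X_i=\tilde F_{k_i}+U$, so that $U=\hat X_0\subset\hat X_1\subset\dots\subset\hat X_r=\F^n$ with $\dim\hat X_i=d+i$.

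The crucial estimate is $\dim\cB(\hat X_i)\le d+i$. Indeed $\cB(\hat X_i)=\cB(\tilde F_{k_i})+W\subseteq\tilde G_{k_i}+W$, whence $\dim\cB(\hat X_i)\le k_i+d-\dim(\tilde G_{k_i}\cap W)$, whereas $d+i=\dim\hat X_i=k_i+d-\dim(\tilde F_{k_i}\cap U)$; so the estimate follows once we know $\dim(\tilde G_k\cap W)\ge\dim(\tilde F_k\cap U)$ for every $k$. This inequality is the heart of the argument, and it holds because $B_j(\tilde F_k\cap U)\subseteq B_j(\tilde F_k)\cap B_j(U)\subseteq\tilde G_k\cap W$ while $B_j$ is injective on $U$ (as $\dim B_j(U)=\dim U$), so $\dim(\tilde G_k\cap W)\ge\dim B_j(\tilde F_k\cap U)=\dim(\tilde F_k\cap U)$.

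To conclude I would use the elementary fact that a space of linear maps between $r$-dimensional spaces is triangularizable once the source admits a complete flag $0=X_0\subset X_1\subset\dots\subset X_r$ with $\dim\cB^*(X_i)\le i$ for all $i$ (from such a flag one builds a matching target flag greedily from the top, choosing $Y_i$ of dimension $i$ with $\cB^*(X_i)\subseteq Y_i\subseteq Y_{i+1}$, possible since $\dim\cB^*(X_i)\le i<\dim Y_{i+1}$ and $\cB^*(X_i)\subseteq\cB^*(X_{i+1})\subseteq Y_{i+1}$, then reads off upper-triangular matrices in adapted bases). Taking $X_i=\hat X_i/U$ inside $\F^n/U$, the estimate above gives $\dim\cB^*(X_i)=\dim\cB(\hat X_i)-\dim\cB(U)=\dim\cB(\hat X_i)-d\le i$, so $\cB^*$ is triangularizable over $\F'$. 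I expect the main obstacle to be exactly the alignment of the source and target flags: once the invariant $\dim(\tilde G_k\cap W)\ge\dim(\tilde F_k\cap U)$ is isolated and seen to follow from injectivity of $B_j|_U$, the remainder is routine dimension counting with flags.
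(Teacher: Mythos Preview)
Your proof is correct and follows essentially the same route as the paper. For Part~1 both arguments look at the line $\langle Ce_1\rangle$ and split on whether all the $(1,1)$ entries of the $B_i'$ vanish, then pass back to $\F$ via the first Wong sequence; for Part~2 both isolate the same key inequality $\dim(\tilde G_k\cap W)\ge\dim(\tilde F_k\cap U)$ via injectivity of $B_j|_U$. The only cosmetic difference is in the final assembly of the target flag: the paper extracts a complete subflag from the $T_k^*=\tilde G_k/W$ and uses the index comparison $j_h\le k_h$ to match it to the source flag, whereas you build the target flag greedily from the top using the bound $\dim\cB^*(X_i)\le i$; these are equivalent packagings of the same dimension count.
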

\begin{proof}
%\begin{enumerate}
1. Let $\{e_i\mid i\in[n]\}$ be the standard basis of ${\F'}^n$, 
and $c_i=C(e_i)$ and $d_i=D(e_i)$ for $i\in[n]$. 
If  $B_i'(1, 1)=0$ for all $ i\in[m]$ then $c_1$ is in the 
kernel of every $B_i$'s. 
If there exists $j$ such that $B_j'(1, 1)\neq 0$, we
set $U'=\langle c_1\rangle\leq \F'^n$. 
 Then it is clear that 
$\langle d_1\rangle = B_j(U')=\cB(U')$.
It follows that 
the first Wong sequence of $(B_j, \cB)$ over $\F'$ has nonzero limit,
and therefore the same holds over $\F$. We can choose for $U$ this limit.

%The two cases in 
%the statement are distinguished by the following: 
%(1) if some $B_j$ satisfies that $B_j'(1, 1)\neq 0$, 
%let $U'=\langle c_1\rangle\leq \F'^n$. Then it is clear that 
%$\langle d_1\rangle = B_j(U')=\cB(U')$. It follows that
%the the first Wong sequence over $\F'$ has nonzero limit,
%whence the same holds over $\F$. 
%(2) 
%if $\forall i\in[m]$ $B_i'(1, 1)=0$, then $c_1$ is in the 
%kernel of every $B_i$'s. 

2. 
%Let $\ell=\dim(V/U)$, and $\cB^*=\langle B_1^*, \dots, B_m^*\rangle$. 
%We replace $\F'U\leq {\F'}^n$ in place of $U\leq \F^n$, we may assume without loss of generality that $\F'=\F$.
%Let us first review the concept of complete flags of vector spaces. 
First we recall that for a vector space $V$ of dimension $n$, 
a complete flag of $V$ is a nested sequence of subspaces 
$0=V_0\subset V_1 \subset\dots \subset V_n=V$. 
For $\cA\leq \matsp$ with $\dim(V)=\dim(V')=n$, the matrix space
$\cA$ is triangularizable if and only if $\exists$ complete 
flags $0=V_0\subset V_1 \subset\dots \subset V_n=V$ 
and $0=V_0'\subset V_1' \subset\dots \subset V_n'=V'$ s.t.
$\cA(V_i)\subseteq V_i'$ for $i \in [n]$. 

For $U\leq \F^n$,
% and $\F'$ an extension field of $\F$, 
let $\F'U$ be the linear span of $U$ in ${\F'}^n$. We think of $B_i$'s and $B_i^*$'s as linear maps over $\F'$ in a natural way. 
Let $\ell=\dim({\F'}^n/\F'U)$. For $0 \leq i \leq n$ set $S_i=\langle c_1, \dots, c_i\rangle$ and $T_i=\langle d_1, \dots, d_i\rangle$.
Obviously $\cB(S_i)\subseteq T_i$ for $0 \leq i \leq n$. Let $S_i^*=S_i/\F'U$ and $T_i^*=T_i/\cB(\F'U)$, and 
consider $S_0^*\subseteq \dots\subseteq S_n^*$ and $T_0^*\subseteq \dots \subseteq T_n^*$. 
We claim that $
\forall i\in [n], \dim(S_i^*)\geq \dim(T_i^*).$
This is because as $T_i\cap \cB(\F'U)\supseteq B_j(S_i\cap \F'U)$, by $\dim(\F'U)=\dim(B_j(\F'U))$, $\dim(B_j(S_i\cap \F'U))\geq \dim(S_i\cap \F'U)$. Thus $\dim(S_i\cap \F'U)\leq \dim(T_i\cap \cB(\F'U))$, and $\dim(S_i^*)\geq \dim(T_i^*)$. 
As $\cB^*(S_i^*)\subseteq T_i^*$, $\dim(S_{i+1}^*)-\dim(S_i^*)\leq 1$, 
and $\dim(T_{i+1}^*)-\dim(T_i^*)\leq 1$, there exist two nested sequences $S_0^*\subset S_{j_1}^*\subset\dots \subset S_{j_\ell}^*=S_n^*$ 
and $T_0^*\subset T_{k_1}^*\subset\dots\subset T_{k_\ell}^*=T_n^*$, s.t. $\dim(S_{j_h})=\dim(T_{k_h})=h$. Furthermore, by $\dim(S_i^*)\geq \dim(T_i^*)$, $j_h\leq k_h$, thus $\cB^*(S^*_{j_h})\subseteq \cB^*(S^*_{k_h})\subseteq T^*_{k_h}$, $\forall h\in[\ell]$. That is, the two nested sequences are complete flags, and $\cB^*$ is triangularizable over $\F'$.
%This concludes the proof. \qedhere
%\end{enumerate}
\iffalse
%Let $\ell=\dim(V/U)$, and $\cB^*=\langle B_1^*, \dots, B_m^*\rangle$.
By going over $\F'U\leq {\F'}^n$ in place of $U\leq \F^n$, etc., we may assume w.l.o.g.~that $\F'=\F$. 
Let us first review the concept of complete flags of vector spaces. 
Recall that for a vector space $V$ over $\F$ and $\dim(V)=n$, 
a complete flag of $V$ is a nested sequence of subspaces 
$0=V_0\subset V_1 \subset\dots \subset V_n=V$. 
For $\cA\leq \matsp$, $\dim(V)=\dim(V')=n$, 
$\cA$ is triangularizable if and only if there exist complete 
flags $0=V_0\subset V_1 \subset\dots \subset V_n=V$ 
and $0=V_0'\subset V_1' \subset\dots \subset V_n'=V'$, s.t. 
$\cA(V_i)\subseteq V_i'$ for $i \in [n]$. 

Getting back to $\cB^*$, recall that $c_i=C(e_i)$ and $d_i=D(e_i)$ for $i\in[n]$. Let $\ell=\dim(\vecspc/U)$, $S_i=\langle c_1, \dots, c_i\rangle$ and $T_i=\langle d_1, \dots, d_i\rangle$. So $\cB(S_i)\subseteq T_i$. Let $S_i^*=S_i/U$ and $T_i^*=T_i/\cB(U)$, and consider $S_0^*\subseteq \dots\subseteq S_n^*$ and $T_0^*\subseteq \dots \subseteq T_n^*$. As $\cB^*(S_i^*)\subseteq T_i^*$, $\dim(S_{i+1}^*)-\dim(S_i^*)\leq 1$, and $\dim(T_{i+1}^*)-\dim(T_i^*)\leq 1$, there must be a pair of complete flags $S_0^*\subset S_{j_1}^*\subset\dots \subset S_{j_\ell}^*=S_n^*$ and $T_0^*\subset T_{k_1}^*\subset\dots\subset T_{k_\ell}^*=T_n^*$, s.t. $\cB^*(S_{j_h})\subseteq T_{k_h}$ for all $h\in[\ell]$. This concludes the proof. \qedhere
%\end{enumerate}
\fi
\end{proof}

\subsection{An algorithm 
on an algebraic RAM}
%\gabor{was: over finite and black box fields}

%Our task is to determine whether 
%there exists a nonsingular matrix in a triangularizable matrix space, and 
%finding such a matrix if exists. Let $\F'$ be an extension field of $\F$, and 
%recall that $\cB\leq \matspc$ is triangularizable if there exist nonsingular 
%$C, D\in M(n, \F')$, s.t. $\forall B\in\cB$, $DBC^{-1}$ is upper triangular.

Suppose we are given a basis $\{B_1, \dots, B_m\}$ for 
$\cB\leq \matspc$ which is triangularizable over
an extension field $\F'$ of $\F$, i.~e., $B_i=DB_i'C^{-1}$ for some nonsingular 
$C, D\in\matspe$, and $B_i'\in \matspe$ is upper triangular for every
$i\in[m]$. 
Our problem is to determine whether 
there exists a nonsingular matrix in $\cB$ or not and 
finding such a matrix
if exists.

Given the preparation of Lemma~\ref{lem:first_correct}, here is the outline of 
an algorithm using 
polynomially many arithmetic operations. The algorithm recurses on the size of 
the matrices, with the base case being the size $1$. 
It checks at the beginning whether $\cap_{i\in[m]}\ker(B_i) = 0$. 
If this is the case then it returns $\cap_{i\in[m]}\ker(B_i)$ which is a 
singularity witness.
Otherwise, for all $i\in[m]$,  it computes the limit $U_i^*$ of the 
first Wong sequence for $(B_i, \cB)$. By Lemma~\ref{lem:first_correct} 
(1) there exists $j\in[m]$ such that $U_j^* \neq 0$ and $B_j(U)=\cB(U)$. 
The algorithm then recurses on the induced actions $B_i^*$'s of $B_i$'s, %on 
%$\vecspc/U_j^*$, 
which are also triangularizable by Lemma~\ref{lem:first_correct} (2). 
When $\cB$ is nonsingular the algorithm should return 
a nonsingular matrix. This nonsingular matrix is built step by step by the 
recursive calls,
at each step  we have to construct a nonsingular  linear 
combination of $B_j$ and the matrix returned by the call.
%over the given field $\F$. 
For this we need $n+1$ field elements. %to explicitly obtain this linear 
%combination.

We expand the above idea into a rigorous algorithm, called $\trialgo$ and 
present it in Algorithm~\ref{algo:tri}. This algorithm requires polynomially 
many arithmetic operations, and therefore of polynomial complexity in finite 
fields. 
The input of the algorithm can be an arbitrary matrix space (not necessarily 
triangularizable), but it may fail in certain cases. For triangularizable 
matrix spaces the algorithm would not fail due to 
Lemma~\ref{lem:first_correct}. 
Note that though the algorithm works assuming triangularizability over some 
extension field, the algorithm itself does not need to deal with the field 
extension explicitly by Lemma~\ref{lem:first_correct}, given that $\F$ is large 
enough. 
To allow for recursion, the output of the algorithm can be one of the 
following: the first is an explicit linear combination of the given matrices, 
which gives a nonsingular matrix. 
The second is a singular subspace witness. The third one is $\fail$.
%Let us fix some $\Lambda\subseteq \F$ be a finite set of size $n+1$ to be used 
%in the algorithm. 
\begin{algorithm}
%\DontPrintSemicolon 
  \caption{\trialgo$(B_1, \dots, B_m)$}\label{algo:tri}
  \KwIn{$\cB=\langle B_1, \dots, B_m\rangle\subseteq \matspc$.}
  \KwOut{One of the following: (1) $(\alpha_1, \dots, \alpha_m)\in\F^m$ s.t. 
  $\sum_{i\in[m]}\alpha_i B_i$ is nonsingular. (2) A singular subspace witness 
  $U\leq \vecspc$. (3) $\fail$.}
  
  \tcp{Base case} 
  \If{$n=1$}
  {If $\exists$ nonzero $B_i$, \Return $(0, \dots, 1, \dots, 0)$ where $1$ is 
  at the $i$th position. Otherwise \Return $\vecspc$. }
  \tcp{Start of the recursive step.}
  \tcp{If $\cap_{i\in[m]}\ker(B_i)\neq 0$ then $\cap_{i\in[m]}\ker(B_i)$ is a 
  singular witness itself.}
  \If{$\cap_{i\in[m]}\ker(B_i)\neq 0$}{\Return $\cap_{i\in[m]}\ker(B_i)$.}
  \ForAll{$i\in[m]$}{$U_i^*\gets$ the limit of the first Wong sequence of 
  $(B_i, \cB)$.}
  \If{$\exists i\in[m]$, $\dim(\cB(U_i^*))<\dim(U_i^*)$}{\Return $U_i^*$}
  \If{$\not\exists j$ s.t. $\dim(U_j^*)>0$}{\Return $\fail$}
  $U^*\gets U_j^*$ where $U_j^*$ satisfies that $\dim(U_j^*)>0$\;
  \ForAll{$i\in[m]$}{$B_i^*\gets$ the induced linear map of $B_i$ from 
  $\vecspc/U^*$ to $\vecspc/\cB(U^*)$.}
  \tcp{Recursive call.}
  $X\gets\trialgo(B_1^*, \dots, B_m^*)$\;
  \If{$X$ is a singular subspace witness $W/U^*$}{\Return the full preimage of 
  $W/U^*$ in the canonical projection $\vecspc\to \vecspc/U^*$.}
  \ElseIf{$X$ is $(\alpha_1, \dots, \alpha_m)$}
  {
  $\Lambda\gets$ a set of field element of size $n+1$\;
  $E\gets\sum_{i\in[m]}\alpha_i B_i$\;
  Choose $(\lambda, \mu)\in\Lambda\times\Lambda$, s.t. $\lambda B_j+\mu E$ is 
  nonsingular\;
  \Return $(\mu\alpha_1, \dots, \mu\alpha_{j-1}, \mu\alpha_j+\lambda, 
  \mu\alpha_{j+1}, \dots, \mu\alpha_m)$
  }
  \ElseIf{$X=\fail$}
  {\Return $\fail$}
\end{algorithm}

%Let us first explain some steps in the algorithm.
Regarding implementation, it might be needed to comment on Line 13. At this 
point we have that $\dim(U^*)\leq \dim(\cB(U^*))\leq \dim(B_j(U^*))\leq 
\dim(U^*)$. Thus $\dim(\cB(U^*))=\dim(U^*)$, and note that $B_i(U^*)\subseteq 
\cB(U^*)$, for all $i\in[m]$. 
%, thus the induced map of $B_i$ can be written as a square matrix. Thus 
Then two bases of $\vecspc$ can be formed by extending bases of $U^*$ and 
$\cB(U^*)$ respectively, and w.r.t. these two bases the induced action $B_i$ 
from $\vecspc/U^*$ to $\vecspc/\cB(U^*)$ can be read off easily. 
%\begin{description}
%\rmitem[Line 13:] At this point we have that $\dim(U^*)\leq \dim(\cB(U^*))\leq 
%\dim(B_j(U^*))\leq \dim(U^*)$, that is $\dim(\cB(U^*))=\dim(U^*)$, thus the 
%induced map of $B_i$ can be written as a square matrix. 
%\rmitem[Line 12:] We need to ensure that $\cB^*$ has a hidden upper-triangular 
%basis in $V/U^*$. We defer the proof to Proposition~\ref{prop:first_correct}.
%\rmitem[Line 20:] 
%After Line 5, the following holds: $\dim(U^*)\leq \dim(\cB(U^*))\leq 
%\dim(B_j(U^*))\leq \dim(U^*), $ which implies that $\dim(\cB(U^*))=\dim(U^*)$. 
%Thus the induced map of $B_i$ can be written as a square matrix. 
%\end{description}

For correctness we distinguish among the types of output of the algorithm, and 
show that they indeed have the required property. 
\begin{description}
\rmitem[If $(\alpha_1, \dots, \alpha_m)$ is returned: ] This case occurs in 
Line 2 and Line 20. Line 2 is trivial. If the algorithm reaches Line 20, we 
claim that there exists $(\lambda, \mu)\in\Lambda\times\Lambda$ s.t. $\lambda 
B_j+\mu E$ is nonsingular. Let $P$ and $Q$ be the matrices from 
Lemma~\ref{prop:first_1_dim}. Thus $\forall i\in[m]$, $QB_iP^{-1}$ is of the 
form:
$
\left[
\begin{array}{cc}
X_i & Y_i \\
0 & Z_i
\end{array}
\right],
$
where $X_i$ is of size $(n-\ell)\times (n-\ell)$ and $Z_i$ is of size $\ell$ by 
$\ell$. As $X_j$ is nonsingular and $\sum_{i\in[m]}\alpha_iZ_i$ is nonsingular, 
$\det(xB_j+yE)$ is a nonzero polynomial, thus from Schwartz-Zippel lemma the 
existence of $(\lambda, \mu)$ in $\Lambda\times\Lambda$ is ensured.
\rmitem[If a subspace of $\vecspc$ is returned: ] This case occurs in Line 2, 
4, 8 and 16. All are straightforward.
\rmitem[The case of $\fail$: ] After Line 3 $\cap_{i\in[m]}\ker(B_i)= 0$. Then 
Lemma~\ref{lem:first_correct} ensures that $\fail$ cannot be returned for 
triangularizable matrix spaces. 
\end{description}
%If $\cB^*$ contains a nonsingular matrix $E^*$, then by the discussion 
%concerning Line 7 (b), $\cB$ contains a nonsingular matrix $E$. If $\cB$ is 
%singular, then a 

\subsection{An algorithm over the rationals}

To obtain a polynomial-time algorithm over rationals, we give first a  
characterization of  triangularizability 
of a nonsingular matrix space.

\begin{lemma}
\label{lem:triang-reg}
Assume $\cB\leq \matspc$ contains a nonsingular matrix $S$.
Then $\cB$ is triangularizable over $\F$ if and only if
there exists a nonsingular matrix $D\in \matspc$ such
that $D^{-1}\cB S^{-1}D$ consists of upper triangular matrices.
\end{lemma}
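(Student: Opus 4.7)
The plan is to observe that the statement amounts to a straightforward change of variables that uses the hypothesis $S \in \cB$ to ``couple'' the left and right transformations that witness triangularizability.

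For the forward direction, I would unfold the definition: if $\cB$ is triangularizable over $\F$ then there are nonsingular matrices $C, D_0 \in \matspc$ with $D_0 B C^{-1}$ upper triangular for every $B \in \cB$. Applying this to $S$ gives that $D_0 S C^{-1}$ is upper triangular and nonsingular, so its inverse is also upper triangular. I would then propose $D := D_0^{-1}$ (which is nonsingular) and compute directly
$$D^{-1} B S^{-1} D \;=\; D_0 B S^{-1} D_0^{-1} \;=\; (D_0 B C^{-1})\,(D_0 S C^{-1})^{-1},$$
which is the product of two upper triangular matrices, hence upper triangular.

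For the reverse direction, I would simply read off the triangularizing pair from the given matrix $D$. Setting $D' := D^{-1}$ and $C := D^{-1} S$, both of these are nonsingular (the latter as the product of nonsingular matrices), and
$$D' B C^{-1} \;=\; D^{-1} B S^{-1} D,$$
which is upper triangular for every $B \in \cB$ by hypothesis. By the definition recalled in the excerpt, this witnesses the triangularizability of $\cB$ over $\F$.

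There is no real obstacle here: the content of the lemma is just the observation that the two parameters $(C, D_0)$ in the definition of triangularizability carry one redundant degree of freedom when $\cB$ already contains a distinguished nonsingular element $S$, and that this redundancy can be removed by rewriting the condition as a single ``similarity-type'' conjugation on $\cB S^{-1}$ by $D$. Both directions therefore reduce to a one-line algebraic manipulation once the substitutions $D = D_0^{-1}$ (forward) and $(D', C) = (D^{-1}, D^{-1}S)$ (reverse) are identified.
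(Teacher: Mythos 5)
Your proof is correct and essentially identical to the paper's: both directions rest on the same one-line factorization $D^{-1}BS^{-1}D = (D^{-1}BC)(D^{-1}SC)^{-1}$ (in your notation, $(D_0BC^{-1})(D_0SC^{-1})^{-1}$), using that the inverse of the nonsingular upper triangular matrix $D^{-1}SC$ is again upper triangular and that products of upper triangulars are upper triangular, with the reverse direction reading off the witnessing pair by $C := S^{-1}D$. The only difference from the paper is a relabeling of the triangularizing pair.
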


\begin{proof}
$\Rightarrow$: 
Assume that $D^{-1}\cB C$ consists of upper triangular matrices.
Then $C^{-1}S^{-1}D=(D^{-1}S C)^{-1}$ is upper triangular as well,
whence -- as products of upper triangular matrices remain upper triangular --
$D^{-1}\cB S^{-1}D=(D^{-1}\cB C)(C^{-1}S^{-1}D)$ also
consists of upper triangular matrices.
\\
$\Leftarrow$: Assume that $D^{-1}\cB S^{-1}D$ consists of upper triangular
matrices. Put $C=S^{-1}D$.
\end{proof}

We have the following criterion of triangularizability:

\begin{lemma}\label{lem:tri_nonsing}
Let $\cA\leq \matspc$ containing the identity matrix
and let $\F'$ be the algebraic closure of $\F$. 
Then there exists $D\in \matspe$ such that $D^{-1}\cA D$ consists
of upper triangular matrices (over $\F'$) if and only if
$$(\cA^{n^2}[\cA,\cA]\cA^{n^2})^n=(0).$$
Here $[\cA,\cA]$ is the space spanned by 
the commutators $[X,Y]=XY-YX$ ($X,Y\in\cA$).
\end{lemma}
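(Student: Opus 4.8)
The plan is to characterize simultaneous triangularizability of $\cA$ (over the
algebraic closure $\F'$) by a nilpotency condition on an associated algebra, and
then to make that nilpotency test effective via powers of a bounded length. The
natural object to look at is the associative $\F'$-algebra $\cR$ generated by
$\cA$; since $I\in\cA$, $\cR$ is just the $\F'$-span of all products of elements
of $\cA$, and because we are inside $M(n,\F')$, $\cR$ is spanned by products of
length at most $n^2-1$ (the dimension bound), so $\cR=\cA^{0}+\cA^{1}+\dots+\cA^{n^2-1}=\cA^{n^2-1}$
using $I\in\cA$ to absorb lower powers. The commutator ideal of $\cR$, i.e. the
two-sided ideal $\cJ$ generated by all $[X,Y]$ with $X,Y\in\cR$, equals
$\cR[\cR,\cR]\cR$, and one checks $[\cR,\cR]$ is spanned by the $[X,Y]$ with
$X,Y\in\cA$ (bilinearity plus the Leibniz-type identity
$[XY,Z]=X[Y,Z]+[X,Z]Y$ reduces commutators of products to combinations of
commutators of generators, multiplied on the left/right by elements of $\cR$);
hence $\cJ=\cR[\cA,\cA]\cR=\cA^{n^2-1}[\cA,\cA]\cA^{n^2-1}$, which up to the
harmless exponent shift $n^2-1\rightsquigarrow n^2$ is exactly the space
$\cA^{n^2}[\cA,\cA]\cA^{n^2}$ appearing in the statement.

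With that identification, the content of the lemma is the classical fact that a
subalgebra $\cR\le M(n,\F')$ over an algebraically closed field is
triangularizable (i.e. conjugate into upper triangular matrices) if and only if
its commutator ideal $\cJ$ is nilpotent; and once $\cJ$ is known to be nilpotent,
its nilpotency index is at most $n$, so $\cJ^n=(0)$ — which gives the displayed
equation after again absorbing the exponent shift ($\cJ^n$ is a sum of products
of the form $\cA^{n^2}[\cA,\cA]\cA^{2n^2}[\cA,\cA]\cdots$, and one rewrites this
as a power of $\cA^{n^2}[\cA,\cA]\cA^{n^2}$ using that $\cA^{n^2}\cA^{n^2}=\cA^{n^2}$
since $I\in\cA$ makes the power stabilize at the dimension bound). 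For the
forward direction: if all of $\cB$ is upper triangular then every commutator
$[X,Y]$ is strictly upper triangular, products of $n$ strictly upper triangular
matrices vanish, and sandwiching by upper triangular matrices keeps things
strictly upper triangular, so $(\cA^{n^2}[\cA,\cA]\cA^{n^2})^n=(0)$. For the
converse, assume the displayed equation; then $\cJ$ is a nilpotent two-sided
ideal of $\cR$, so $\cJ\subseteq\Rad{\cR}$, the Jacobson radical, and the
quotient $\cR/\cJ$ is commutative. A commutative subalgebra of $M(n,\F')$ over an
algebraically closed field is simultaneously triangularizable (split by a common
eigenbasis refinement), and a nilpotent ideal can be incorporated into the flag
since it acts nilpotently; assembling a common complete flag preserved by all of
$\cR$ — equivalently by all of $\cA$ — yields a conjugating $D\in M(n,\F')$ with
$D^{-1}\cA D$ upper triangular.

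The main obstacle I expect is the bookkeeping around the exponents: showing
precisely that $\cA^{n^2}[\cA,\cA]\cA^{n^2}$ really captures the full commutator
ideal $\cR[\cR,\cR]\cR$ (not something smaller), and that its $n$-th power
coincides with $(0)$ rather than merely being contained in some larger nilpotent
power. Both hinge on the single clean fact that, because $I\in\cA$, the chain
$\cA\subseteq\cA^2\subseteq\cdots$ stabilizes by step $n^2-1$ at the algebra
$\cR$, so any exponent $\ge n^2$ can be used interchangeably and products
telescope; once that is set up, the commutator-ideal and nilpotency-index
estimates are the standard ones. The representation-theoretic half (commutative
plus nilpotent-ideal $\Rightarrow$ triangularizable over $\overline{\F}$) is
textbook (Lie–Kolchin / Burnside-type), so I would cite it rather than reprove
it. A secondary, more bureaucratic point is that the lemma is phrased for $\cA$
containing the identity and conjugation on one side only ($D^{-1}\cA D$), whereas
the triangularizability used elsewhere allows two-sided transforms $D\cB C^{-1}$;
this is reconciled exactly by Lemma~\ref{lem:triang-reg}, which lets one replace
a nonsingular $\cB$ by $S^{-1}\cB$ (or $\cB S^{-1}$) containing $I$, so no extra
work is needed here.
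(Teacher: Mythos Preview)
Your proposal is correct and follows essentially the same route as the paper: identify $\cA^{n^2}$ with the unital associative algebra $\cD$ generated by $\cA$, recognize $\cD[\cA,\cA]\cD$ as the full commutator ideal of $\cD$ via the Leibniz identity, and then invoke the standard fact that over an algebraically closed field a matrix algebra is triangularizable iff its quotient by the radical is commutative (equivalently, all irreducible modules are one-dimensional, so a composition series yields the required flag). Your treatment is a bit more explicit than the paper's on the exponent bookkeeping and on the forward direction (strictly upper triangular commutators, nilpotency index $\leq n$), but the underlying argument is the same.
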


\begin{proof}
Put $\cD=\cA^{n^2}$. 
%\gabor{was: $\cD=(\cA)^{n^2}$}
Then $\cD$ is the matrix algebra
generated by $\cA$. The formula expresses that the two-sided
ideal of $\cD$ generated by the commutators from $\cA$ 
is nilpotent. Let $\cD'=\F'\otimes \cD$. Then the formula
is also equivalent to that the ideal of $\cD'$ generated
by the commutators is nilpotent. This is further equivalent
to that the factor algebra $\cD'/Rad(\cD')$ is commutative.
However, over an algebraically closed field a matrix
algebra is nilpotent if it is a conjugate of a
subalgebra of the upper triangular matrices.
(To see one direction, observe that the whole algebra of the 
upper triangular matrices and hence every subalgebra 
of it has this property. As for reverse implication, note that all 
the irreducible representations of an algebra over an algebraically
closed field which is commutative by its radical are one-dimensional 
and hence a composition series
gives a complete flag consisting of invariant subspaces.)
\end{proof}

\begin{corollary}\label{cor:tri_nonsing}
%\gabor{was: Let $\F$ be either a finite field or $\Q$.}
Assume that we are given a nonsingular $S\in\cB$. Then
there is a polynomial time algorithm (on an algebraic RAM
as well as the case $\F=\Q$)
which decides
whether or not there exists an extension of $\F$ over
which $\cB$ is triangularizable.
\end{corollary}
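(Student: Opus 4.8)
The plan is to reduce the statement to the criterion of Lemma~\ref{lem:tri_nonsing} together with the normalization afforded by Lemma~\ref{lem:triang-reg}. First I would use the given nonsingular $S\in\cB$ to form the matrix space $\cA = S^{-1}\cB$, which contains the identity matrix $I = S^{-1}S$. By Lemma~\ref{lem:triang-reg}, $\cB$ is triangularizable over $\F$ (equivalently, over the algebraic closure $\F'$, since by the remark at the end of Section~\ref{sec:wong} triangularizability is insensitive to field extension once one works with the appropriate flags) if and only if there is a nonsingular $D$ with $D^{-1}\cB S^{-1}D = D^{-1}\cA D$ consisting of upper triangular matrices. Thus the existence of an extension of $\F$ over which $\cB$ is triangularizable is exactly the condition that $\cA$ is simultaneously triangularizable over $\F'$, which is the hypothesis treated by Lemma~\ref{lem:tri_nonsing}.

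Next I would invoke Lemma~\ref{lem:tri_nonsing} directly: $\cA$ is triangularizable over $\F'$ if and only if $(\cA^{n^2}[\cA,\cA]\cA^{n^2})^n = (0)$. The algorithm therefore proceeds as follows. Compute a basis for $\cA = S^{-1}\cB$ by inverting $S$ and multiplying; compute a basis for the commutator space $[\cA,\cA]$ by taking all pairwise commutators $[X,Y] = XY - YX$ of basis elements and extracting a maximal linearly independent subset; compute a basis for the algebra $\cA^{n^2}$ by iterated products (each step multiplies the current basis by the basis of $\cA$ and re-selects a maximal linearly independent set, so at most $n^2$ rounds suffice since $\dim M(n,\F)=n^2$); form the product space $\cD[\cA,\cA]\cD$ where $\cD=\cA^{n^2}$; and finally raise this space to the $n$-th power by the same iterated-product-and-reselect procedure, checking whether the result is the zero space. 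Each of these is a polynomial number of matrix multiplications and linear-algebra (Gaussian elimination) steps over $\F$, hence polynomial time on an algebraic RAM; and over $\Q$ the intermediate entries can be kept of polynomial bit-size by the standard reduction (the dimensions never exceed $n^2$, so one may clear denominators and reduce modulo the lattice spanned by a current basis, exactly as in the reduction procedure cited for Theorem~\ref{thm:main1}). Answer ``yes'' iff the final space is $(0)$.

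The only genuine content beyond bookkeeping is the correctness of the reduction, and that is already packaged: Lemma~\ref{lem:triang-reg} handles the passage from $\cB$ (with its fixed nonsingular $S$) to a matrix space containing the identity, and Lemma~\ref{lem:tri_nonsing} handles the algebraic test for that space; no new argument is needed. The main point to be careful about is therefore a mild one — checking that the polynomial-identity condition of Lemma~\ref{lem:tri_nonsing} detects triangularizability \emph{over some extension} (it is stated over the algebraic closure $\F'$, which is exactly ``some extension'' for this purpose, and an algebraically closed-field witness descends to a finite extension), and that all the spaces involved are manipulated purely over $\F$ so that the algorithm never needs to name the extension explicitly. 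The bit-complexity bound over $\Q$ is routine given the bounded dimensions and the reduction procedure already used in the proof of Theorem~\ref{thm:main1}, so I would only sketch it.
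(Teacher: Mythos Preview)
Your proposal is correct and follows the paper's intended route: the corollary is stated there without proof, as an immediate consequence of Lemmas~\ref{lem:triang-reg} and~\ref{lem:tri_nonsing}, and your reduction to a space $\cA$ containing the identity followed by the nilpotency test is precisely this (the paper uses $\cB S^{-1}$ rather than your $S^{-1}\cB$, but either works by the symmetric argument). Two minor caveats: your parenthetical that triangularizability over $\F$ and over $\overline{\F}$ coincide is false in general (a single matrix is always triangularizable over the closure but not necessarily over $\F$) and the remark at the end of Section~\ref{sec:wong} concerns Wong sequences, not flags --- this is harmless since Lemma~\ref{lem:tri_nonsing} is already formulated over the closure; and the ``reduction procedure'' from Theorem~\ref{thm:main1} you invoke for bit-size control over $\Q$ is about replacing scalar coefficients by small integers, not about iterated span computations --- the correct observation is that each intermediate space is spanned by products of $O(n^3)$ matrices of input size, so its reduced row echelon basis has polynomially bounded bit-size intrinsically, independent of how it was computed.
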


Again, we actually have an algorithm using
a polynomial number of arithmetic operations
and equality tests in the black box model for $\F$.

%\subsection{An algorithm over the rationals.}
%\begin{proof}[Proof of Theorem~\ref{thm:main2}]
With these preparations we are now ready to prove Theorem~\ref{thm:main2}.\\ \\
{\bf Proof of Theorem~\ref{thm:main2}.}
%\gabor{was: Over finite fields} 
On an algebraic RAM
Algorithm~\ref{algo:tri} is all we need. 
Over rationals we shall perform a reduction to finite fields via 
Lemma~\ref{lem:tri_nonsing}.

We assume that $\cB$ is given by matrices $B_1,\ldots,B_m$
over $\Q$. Multiplying by a common denominator for the
entries, we can achieve the situation when the entries
of $B_1,\ldots,B_m$ are integers. Let $b$ be a bound on the
 of absolute values of the entries of $B_1,\ldots,B_m$. Then
a polynomial-time algorithm should run in time polynomial in
$n$ and $\log b$. If $\cal B$ is nonsingular then there exist integers
$\lambda_1,\ldots,\lambda_m$, each between $0$ and $n$
such that $S=\lambda_1B_1+\ldots+\lambda_mB_m$ is
nonsingular. The absolute value of the determinant of $S$ is 
a nonzero integer whose logarithm is bounded by a polynomial
in $\log b$ and $n$. It follows that there is a prime 
$p$ bounded by an (explicit) polynomial in $\log b$ and
$n$ that does not divide the determinant of $S$. 

Let $S'=\det(S)S^{-1}$. 
We reduce the problem modulo $p$. We see that $S$ and $S'$
are integral matrices and both are invertible module $p$. 
Furthermore, if $\cB$ is triangularizable over an extension
of $\Q$, by Lemma~\ref{lem:tri_nonsing} all length-$n$ products of elements of 
the form
$B_{i_1}S'\cdots B_{i_{n^2}}S'
[B_{j_1}S',B_{j_2}S'] B_{k_1}S'\cdots B_{k_{n^2}}S'$
vanish, and this will be the case modulo $p$ as well.
It follows that the subspace of matrices over $\F_p$,
spanned by the matrices  $B_iS'$, reduced modulo $p$,
can be triangularized over an extension field of $\F_p$.
But then the space spanned by $B_i$ is also
triangularizable
(over the same extension).
%\gabor{appended}
  
Thus if $\cB$ is nonsingular and triangularizable over 
an extension of $\Q$ then there is a prime $p$ greater than
$n$ but smaller than the value of an explicit polynomial function
in $\log b$ and $n$, such that the reduction modulo $p$ gives a nonsingular
matrix space which is triangularizable over an extension
field of $\F_p$. The algorithm consists of taking the primes
$p$ up to the polynomial limit and applying the generic method 
over $\F_p$ to the reduced setting. The method either finds a $p$
and an integer combination of $B_1,\ldots,B_m$ which
is nonsingular even modulo $p$ or, concludes
that $\cB$ cannot be nonsingular and triangularizable at the same time. $ 
\hfill \Box$
%\end{proof}

\section{On the Edmonds-Rado class and some subclasses}\label{sec:more_ER}

\subsection{Matrix spaces not in the Edmonds-Rado class}

Recall that $\cB\leq\matspc$ is in the Edmonds-Rado class if either $\cB$ 
contains nonsingular matrices, or $\cB$ is singular and there exists a 
singularity witness of $\cB$. Recall that in Section~\ref{sec:intro} we 
defined the discrepancy
$\disc(\cB)=\max\{c\in \N\mid \exists~ c$-singularity witness of $\cB \}$, and 
from the definition it is clear 
that $\mincork(\cB)\geq \disc(\cB)$. In terms of discrepancy,  
$\cB$ is in the Edmonds-Rado class, if $\disc(\cB)=0 \iff \cork(\cB)=0$. 

A well-known example of a matrix space not in the Edmonds-Rado class is the 
class $\sk_3$ of 3 dimensional skew symmetric
matrices, generated for example by the following:
$$
\sk_3=
\left\langle 
\left[\begin{array}{ccc}
0 & 1 & 0\\
-1 & 0 & 0\\
0 & 0 & 0
\end{array}\right], 
\left[\begin{array}{ccc}
0 & 0 & 0\\
0 & 0 & 1\\
0 & -1 & 0
\end{array}\right],
\left[\begin{array}{ccc}
0 & 0 & 1\\
0 & 0 & 0\\
-1 & 0 & 0
\end{array}\right]
\right\rangle. 
$$

%As triangularizable matrix spaces are in the Edmonds-Rado class, it suggests 
Consider the following problem: if a matrix space $\cB$ has a basis consisting 
of matrices with certain property,
does it imply that $\cB$ is in the Edmonds-Rado class? 
Gurvits has observed that if $\cB$ has a basis consisting of triangular or 
semidefinite matrices then it is in
the Edmonds-Rado class. 
%The triangular and positive semidefinite properties indeed put $\cB$ in the 
%Edmonds-Rado class as observed by Gurvits. 
We now show that the other two basis properties, namely consisting of 
projections or positive matrices, do not necessarily 
imply that $\cB$ is in the Edmonds-Rado class. Recall that a matrix over $\R$ 
is positive if every entry in it is positive.
%subspaces generated by projections (resp. positive matrices), the decisive 
%Edmonds-Rado property does not hold. 

%(e.g. triangular, positive semidefinite, projections, positive\footnote{Recall 
%that a matrix over $\R$ is positive if every entry in it is positive.} etc.), 

Let $\cB=\langle B_1, \dots, B_m\rangle\leq M(n, \F)$, and let $A \in M(n, \F)$ 
be an arbitrary nonsingular matrix. 
For $i \in [m],$ we define
$
Y_i=\left[
\begin{array}{cc}
A & B_i \\
0 & 0
\end{array}
\right]$ and 
$
Z=\left[
\begin{array}{cc}
0 & 0\\
A & 0
\end{array}
\right]
$, and 
let $\cA=\langle Y_1, \dots, Y_m, Z\rangle$.

\begin{lemma}
\label{lem:whatever}
%$\subrk(M)=n+\subrk(L)$. 
We have $\disc(\cA)=\disc(\cB)$.
\end{lemma}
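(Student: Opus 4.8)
The plan is to prove the two inequalities $\disc(\cA)\geq\disc(\cB)$ and $\disc(\cA)\leq\disc(\cB)$ separately, by transporting singularity witnesses back and forth between the ambient space $\F^n$ (for $\cB$) and $\F^{2n}=\F^n\oplus\F^n$ (for $\cA$). Throughout I write vectors of $\F^{2n}$ as pairs $(u,w)$ with $u,w\in\F^n$, matching the block structure: $Y_i(u,w)=(Au+B_iw,\,0)$ and $Z(u,w)=(0,\,Au)$. The key bookkeeping fact is that $\cA(U\oplus W)=\langle A(U)\cup \langle B_i(W)\mid i\in[m]\rangle\,,\,A(U)\rangle$ (first component generated by $A(U)$ together with $\cB(W)$, second component by $A(U)$). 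Since $A$ is nonsingular, $\dim A(U)=\dim U$ and $\dim A(U+\text{stuff})$ is controlled exactly.

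First I would show $\disc(\cA)\geq\disc(\cB)$. Given a $c$-singularity witness $W\leq\F^n$ of $\cB$, i.e.\ $\dim W-\dim\cB(W)\geq c$, I would try the witness $\widehat W=0\oplus W\leq\F^{2n}$ for $\cA$. Then $\cA(\widehat W)=\langle\{B_i(w)\mid i,w\in W\}\rangle\oplus 0=\cB(W)\oplus 0$, so $\dim\widehat W-\dim\cA(\widehat W)=\dim W-\dim\cB(W)\geq c$, giving $\disc(\cA)\geq\disc(\cB)$ immediately.

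For the reverse inequality $\disc(\cA)\leq\disc(\cB)$, take any $c$-singularity witness $T\leq\F^{2n}$ of $\cA$; I want to manufacture a $c$-singularity witness of $\cB$ of the same defect. Let $U=\{u\mid \exists w,\ (u,w)\in T\}$ be the projection of $T$ onto the first factor and $W=\{w\mid (0,w)\in T\}$ the slice; then $\dim T=\dim U+\dim W$. Using $Z\in\cA$ we get $Z(T)=0\oplus A(U)$, and using the $Y_i$'s, $\cA(T)\supseteq A(U)\oplus 0$ and $\cA(T)\supseteq \cB(W)\oplus 0$ (the latter from $(0,w)\in T$), while in general $\cA(T)=\langle A(U),\cB(W)\rangle\oplus A(U)$. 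Hence $\dim\cA(T)=\dim\langle A(U),\cB(W)\rangle+\dim A(U)=\dim\langle A(U),\cB(W)\rangle+\dim U$. Therefore $\dim T-\dim\cA(T)=\dim W-\dim\langle A(U),\cB(W)\rangle+\bigl(\dim\langle A(U),\cB(W)\rangle-\dim\cB(W)\bigr)$... I would instead argue more cleanly: $\dim T-\dim\cA(T)=(\dim U+\dim W)-(\dim U+\dim\langle A(U)\cup\cB(W)\rangle)=\dim W-\dim\langle A(U)\cup\cB(W)\rangle\leq \dim W-\dim\cB(W)$. So if $T$ has defect $\geq c$ then $W$ is a $c$-singularity witness of $\cB$, and $\disc(\cA)\leq\disc(\cB)$.

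The main obstacle I anticipate is getting the formula for $\cA(T)$ exactly right when $T$ is an arbitrary subspace of $\F^{2n}$ rather than a direct sum $U\oplus W$: one must be careful that applying $Y_i$ to a general element $(u,w)\in T$ contributes $Au+B_iw$ to the first block, so the first block of $\cA(T)$ is $\langle A(u)+B_i(w)\mid (u,w)\in T, i\in[m]\rangle$ together with (from $Z$) $\langle A(u)\mid (u,w)\in T\rangle=A(U)$; combining, the first block equals $A(U)+\cB(\pi_2(T))$ where $\pi_2(T)$ is the full projection to the second factor — note this is $\pi_2(T)$, not the slice $W$. One then checks $\cB(\pi_2(T))\supseteq\cB(W)$ suffices for the inequality, or more carefully that replacing $W$ by $\pi_2(T)$ throughout (since $Z$ kills the second coordinate, the relevant witness data on the $\cB$-side is $\pi_2(T)$) still yields defect $\leq\disc(\cB)$ because $\dim T-\dim\cA(T)\leq\dim\pi_2(T)-\dim\cB(\pi_2(T))$. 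This reduces to a short dimension count once the block action is written out honestly, so the lemma follows.
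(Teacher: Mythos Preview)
Your overall strategy---transporting singularity witnesses between $\F^n$ and $\F^{2n}$---matches the paper's, and for $\disc(\cA)\geq\disc(\cB)$ your witness $0\oplus W$ is correct (the paper uses a different subspace, $E_1$ together with the embedded $U$). For the reverse inequality you also arrive at the right witness for $\cB$, namely the slice $W=\{w:(0,w)\in T\}$, which is exactly the paper's $R$.

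The gap is in your computation of $\cA(T)$. You assert that ``using the $Y_i$'s, $\cA(T)\supseteq A(U)\oplus 0$'', and later that $Z$ contributes $A(U)$ to the first block; both are false. From $Y_i(u,w)=(Au+B_iw,0)$ you cannot isolate $Au$ unless $(u,0)\in T$, and $Z(u,w)=(0,Au)$ lands entirely in the \emph{second} block. For a concrete failure take $n=m=1$, $A=B_1=1$, $T=\langle(1,-1)\rangle$: then $Y_1(1,-1)=(0,0)$ and $Z(1,-1)=(0,1)$, so $\cA(T)=\langle(0,1)\rangle$ and its first block is $0$, while $A(U)=\F$. Hence your formula $\cA(T)=\langle A(U),\cB(W)\rangle\oplus A(U)$ (and the variant with $A(U)+\cB(\pi_2(T))$ in the obstacle paragraph) is wrong, so the chain of equalities you write is invalid. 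The repair is immediate and is precisely what the paper does: write $\cA(T)=\Sigma\oplus A(U)$ where $\Sigma$ denotes the first-block part, so that $\dim T-\dim\cA(T)=(\dim U+\dim W)-(\dim\Sigma+\dim U)=\dim W-\dim\Sigma$; since you already correctly noted $\Sigma\supseteq\cB(W)$ (from $Y_i(0,w)=(B_iw,0)$), the bound $\dim W-\dim\Sigma\leq\dim W-\dim\cB(W)$ follows. The containment $A(U)\subseteq\Sigma$ is neither true nor needed.
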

\begin{proof}
Let $E_1\leq \F^{2n}$ be the coordinate subspace generated by the first $n$ 
coordinates, and $E_2\leq \F^{2n}$ be the coordinate subspace generated by the 
last $n$ coordinates. 

To show that $\disc(\cA)\geq \disc(\cB)$, we take a $\disc(\cB)$-discrepancy 
witness $U$ of $\cB$, 
and embed $U$ into $E_2$. Then $\langle E_1\cup U\rangle$ 
is a $\disc(\cB)$-singularity witness of $\cA$. 

To show that $\disc(\cA)\leq \disc(\cB)$, let $W$ be $\disc(\cA)$-singularity 
witness of $\cA$. %, and w.l.o.g. assume $W'=\cA(W)$. 
Let $W_1'=W'\cap E_1$ and $W_2'=W'\cap E_2$. Due to the form of 
the $Y_i$'s and $Z$, $W'=W_1'\oplus W_2'$. In particular note that $W_2'=Z(W)$, 
and $A$ is nonsingular. So if we set  %$R\leq W$ be 
$R = \{w\in W\mid Z(w)=0\}$, 
we have $R\leq E_2$, $\dim(R)=\dim(W)-\dim(W_2')$, and 
$\dim(W_1')\geq \dim(\cB(R))$. 
Thus $\disc(\cA)=\dim(W)-\dim(W')=(\dim(W)-\dim(W_2'))-\dim(W_1') \leq 
\dim(R)-\dim(\cB(R))\leq \disc(\cB)$. 
\end{proof}

\begin{proposition}
%\gabor{was: Theorem}
There exist matrix spaces generated by projections or positive matrices 
%\gabor{was: which do not have the ER-property} 
outside the Edmonds-Rado class
\end{proposition}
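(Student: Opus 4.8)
The plan is to feed $\cB=\sk_3$, the space of $3\times 3$ skew-symmetric matrices over $\R$ displayed above, into the construction preceding Lemma~\ref{lem:whatever}, and then to reshape the resulting generators by an equivalence transformation $\cA\mapsto C\cA D$ with nonsingular $C,D$. Recall that $\cB=\sk_3$ is singular (every $3\times 3$ skew-symmetric matrix has determinant $0$) and satisfies $\disc(\cB)=0$, being the standard example outside the Edmonds-Rado class \cite{Lovasz}. Fix $n=3$, $A=I_3$, let $S_1,S_2,S_3$ be the displayed basis of $\sk_3$, and set $Y_i=\left(\begin{smallmatrix}I&S_i\\0&0\end{smallmatrix}\right)$, $Z=\left(\begin{smallmatrix}0&0\\I&0\end{smallmatrix}\right)$, $\cA=\langle Y_1,Y_2,Y_3,Z\rangle$. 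I would first record that $\cA$ itself lies outside the Edmonds-Rado class: Lemma~\ref{lem:whatever} gives $\disc(\cA)=\disc(\cB)=0$, while every element of $\cA$ has the block form $\left(\begin{smallmatrix}cI&S\\\mu I&0\end{smallmatrix}\right)$ with $S\in\sk_3$, and swapping the two block-columns shows its determinant equals $\pm\mu^3\det(S)=0$; hence $\cork(\cA)>0$, so $\cA$ is singular with no singularity witness. Since the Edmonds-Rado class, $\cork$, and $\disc$ are all invariant under $\cA\mapsto C\cA D$ with $C,D$ nonsingular (immediate from the definitions), the same conclusion holds for $C\cA D$, so it remains to choose $C,D$ that force the generators into the prescribed shape.

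For the projection case take $C=\left(\begin{smallmatrix}I&I\\0&I\end{smallmatrix}\right)$ and $D=I_6$. Since the bottom block-row of each $Y_i$ vanishes we have $CY_i=Y_i$, and $CZ=\left(\begin{smallmatrix}I&0\\I&0\end{smallmatrix}\right)$. Any matrix of one of the forms $\left(\begin{smallmatrix}I&N\\0&0\end{smallmatrix}\right)$ or $\left(\begin{smallmatrix}I&0\\N&0\end{smallmatrix}\right)$ is idempotent, so $C\cA=\langle Y_1,Y_2,Y_3,\left(\begin{smallmatrix}I&0\\I&0\end{smallmatrix}\right)\rangle$ is spanned by four projections while still lying outside the Edmonds-Rado class.

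For the positive case take the nonsingular matrices $C=\left(\begin{smallmatrix}J+I&J\\J&J+I\end{smallmatrix}\right)$ and $D=\left(\begin{smallmatrix}2J&2J+2I\\I&0\end{smallmatrix}\right)$, where all blocks are $3\times 3$ and $J$ is the all-ones matrix; here $\det C=\det(2J+I)\ne 0$ and $\det D=\pm\det(2J+2I)\ne 0$, so both are nonsingular. A direct block multiplication gives
\[
CY_iD=\left(\begin{smallmatrix}J+I\\J\end{smallmatrix}\right)\bigl(\,2J+S_i\quad 2J+2I\,\bigr)\qquad\text{and}\qquad CZD=\left(\begin{smallmatrix}J\\J+I\end{smallmatrix}\right)\bigl(\,2J\quad 2J+2I\,\bigr),
\]
each displayed as a product of a $6\times 3$ by a $3\times 6$ matrix. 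All of these factors are entrywise positive — for $2J+S_i$ this uses that the entries of $S_i$ lie in $\{-1,0,1\}$ — and a product of entrywise positive matrices is entrywise positive, so $CY_1D,CY_2D,CY_3D,CZD$ are positive matrices spanning $C\cA D$, which is outside the Edmonds-Rado class.

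The one delicate point is the positive case: a single pair $(C,D)$ has to make all four generators positive at once, which is why one factors each generator as a fixed positive matrix times a positive matrix, perturbing $2J$ by the bounded matrix $S_i$ so that positivity survives while the equivalence class — and hence membership in the Edmonds-Rado class — does not change. Everything else is routine block-matrix bookkeeping together with the two facts already imported from the excerpt, namely that $\sk_3$ is singular with $\disc=0$ and that $\disc$ is preserved by the construction of Lemma~\ref{lem:whatever}.
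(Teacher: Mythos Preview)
Your proof is correct and follows essentially the same strategy as the paper: start from $\cB=\sk_3$, apply the construction preceding Lemma~\ref{lem:whatever} to obtain $\cA$ outside the Edmonds--Rado class, and then pass to an equivalent space $C\cA D$ whose generators have the required form. The only difference is in execution: the paper uses a single right-multiplication by $\left(\begin{smallmatrix}I&I\\0&I\end{smallmatrix}\right)$ and varies the auxiliary matrix $A$ (taking $A=I$ for projections, and $A$ a sufficiently large positive matrix for the positive case), whereas you fix $A=I$ throughout and compensate with more elaborate choices of $C,D$; the paper's version is a little cleaner, but both arguments are complete.
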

\begin{proof}
For $i \in [m],$ we define $Y_i', Z'\in M(2n, \F)$ by
$
Y'_i=\left[
\begin{array}{cc}
A & B_i+A \\
0 & 0
\end{array}
\right]
$
and 
$
Z'=\left[
\begin{array}{cc}
0 & 0\\
A & A
\end{array}
\right], 
$
and let 
$\cA'=\langle Y'_1, \dots, Y'_m, Z'\rangle$. 
%
%Let the $Y_i'$s and $Z$ as above.
It is easy to see that the $Y'_i$'s and $Z'$ can be obtained from the 
$Y_i$'s and $Z$
via simultaneous row and column operations.
Note that simultaneous row and column operations do not change 
the rank or the discrepancy of a space, that is %$\mincork$ nor $\disc$; that 
%is 
$\mincork(\cA')=\cork(\cA)$ and $\disc(\cA')=\disc(\cA)$. 
Observe that $\mincork(\cA)=\mincork(\cB)$, and by Lemma~\ref{lem:whatever}  we 
have
$\disc(\cA)=\disc(\cB)$. Therefore taking some $\cB$ not 
in the Edmonds-Rado class (for example $\sk_3$) it follows that $\cA$ and 
$\cA'$ are not 
in the Edmonds-Rado class. 
To finish the proof just note that 
if $A=I$, then $Y'_i$'s and $Z'$ are projections,
and if $A$ is a positive matrix with entries at least  the absolute 
values of the entries in the $B_i$'s, then $Y'_i$'s and $Z'$ are positive 
matrices.  
%Finally recall that $\cA'$ can be generated 
%by projections or positive matrices depending on the choice of $A$.
%$\subrk(M)=n+\subrk(L)$, then take $L$ to be any matrix subspace not 
%satisfying the decisive ERP ($\maxrk(L)\neq \subrk(L)$), $M$ would also not 
%satisfy the decisive ERP. In particular, this would imply that matrix 
%subspaces 
%generated by projections (resp. positive matrices) do not satisfy the decisive 
%ERP. 
\end{proof}
%Now we show the following. 

\subsection{Compression spaces}
%\gabor{changed title}
If $\maxrk(\cB)$ is of primary interest, in analogy with the Edmonds-Rado class 
we can define the following matrix class. Here we allow non-square matrices 
from $M(n\times n', \F)$. Recall that for $A\in M(n\times n', \F)$, its rank is 
$\dim(\im(A))$, its corank is $\dim(\ker(A))$, and $A$ is nonsingular if 
$\rk(A)=\min(n, n')$. 

%\begin{definition}
Following the terminology used in~\cite{eh88, fr04}, we call
%\gabor{deleted definition environment, inserted half-sentence}
a matrix space $\cB\leq M(n\times n', \F)$ a {\em compression space}
%\gabor{was: in the quantitative Edmonds-Rado class} 
if $\cB$ possesses $\mincork(\cB)$-singularity  witnesses.
%either (1) $\cB$ is nonsingular, or (2) $\cB$ is singular and there exists 
%$(n-\maxrk(\cB))$-discrepancy subspace witness. 
%\end{definition}
%\gabor{deleted:
%For brevity we call this class q-Edmonds-Rado class in the following.} 
In terms of discrepancy, $\cB$ is
a compression space
%\gabor{was: in the q-Edmonds-Rado class} 
if $\mincork(\cB)=\disc(\cB)$. 
%\gabor{inserted sentence}
Thus, by the result of Lov\'asz discussed in Subsection~\ref{subsec:previous},
and by the result of Atkinson and Stephens used in
Subsection~\ref{subsec:pep},
rank-one spanned matrix spaces as well as two-dimensional
matrix spaces over sufficiently large base fields are compression spaces.
As to Wong sequences, from Lemma~\ref{lem:second_utility} we immediately have 
that if $\cB\leq M(n\times n', \F)$ is 
a compression space,
%\gabor{was: in the q-Edmonds-Rado class}, 
then for 
any $A\in\cB$, $A$ is of maximum rank if and only if the limit of the second 
Wong sequence of $(A, \cB)$ is contained in $\im(A)$. 

It is clear that when $n=n'$, if $\cB$ is
a compression space 
%\gabor{was: in the q-Edmonds-Rado class} 
then it 
is in the Edmonds-Rado class. The converse is not true. 
\begin{proposition}
%\gabor{was: Theorem}
\label{prop:inER_notQER}
There exists a matrix space in the Edmonds-Rado class which is not
a compression space. 
%\gabor{was: in the q-Edmonds-Rado class.} 
\end{proposition}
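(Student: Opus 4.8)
The plan is to exhibit a concrete small matrix space $\cB$ which lies in the Edmonds-Rado class but has $\disc(\cB) < \cork(\cB)$. The natural strategy is to take a nonsingular matrix space (so that $\cork(\cB) = 0$ is automatically avoided being the obstruction... no — wait, we want $\cork(\cB) > 0$) so instead I would take a \emph{singular} matrix space that does contain a (small) singularity witness, but whose corank strictly exceeds the best discrepancy achievable. A clean way to build such an example is to glue together a block that forces a large corank with a block whose only singularity witness is ``small'', using the $2n \times 2n$ padding construction $\cA = \langle Y_1, \dots, Y_m, Z\rangle$ already introduced before Lemma~\ref{lem:whatever}, or a direct-sum construction.

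First I would recall that, by Lemma~\ref{lem:whatever}, the construction $\cB \mapsto \cA$ preserves discrepancy exactly: $\disc(\cA) = \disc(\cB)$. The corank, however, behaves differently: $\cork(\cA) = \cork(\cB) + n$, because $Z$ together with the top blocks of the $Y_i$ can have rank at most $n + \rk(\cB)$ in the $2n\times 2n$ ambient space (the bottom $n$ rows of every $Y_i$ are zero, so the image of $\cA$ on $E_1 \oplus E_2$ is governed by the $A$-column and the $B_i$-columns sitting in the top $n$ coordinates plus the $A$-block of $Z$ sitting in the bottom $n$). So if I start from a $\cB$ with $\cork(\cB) = \disc(\cB)$ but $\cork(\cB) > 0$ — for instance the $2$-dimensional space $\cB = \langle \mathrm{diag}(1,0), \mathrm{diag}(0,0)\rangle \le M(2,\F)$, which has $\cork(\cB) = \disc(\cB) = 1$ via the witness $U = \langle e_2 \rangle$ — then $\cA$ has $\cork(\cA) = 1 + n$ but only $\disc(\cA) = 1$. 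That already gives $\disc(\cA) < \cork(\cA)$. The one thing to check is that $\cA$ is nonetheless \emph{in} the Edmonds-Rado class, i.e., that it has \emph{some} singularity witness — and it does, since $\disc(\cA) = 1 > 0$, the embedded witness $\langle E_1 \cup U\rangle$ from the proof of Lemma~\ref{lem:whatever} works.

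Actually, a shorter and more transparent route avoids the gadget entirely: take $\cB = \langle E_{11}, E_{22} \rangle \le M(3, \F)$, the span of two diagonal rank-one idempotents inside $3 \times 3$ matrices. Every matrix in $\cB$ has rank at most $2$, so $\cork(\cB) = 1$, and it is in the Edmonds-Rado class since $U = \langle e_3 \rangle$ is a $1$-singularity witness ($\cB(U) = 0$). But $\cB$ is rank-one spanned, and by Lov\'asz's result (quoted in Subsection~\ref{subsec:previous}) rank-one spanned spaces satisfy $\cork = \disc$, so this example has $\disc = \cork$ — not what we want. Hence the padding gadget (or a non-rank-one example) is genuinely needed, and I would present the $\cA$-construction above.

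The main obstacle is simply verifying $\cork(\cA) > \disc(\cA)$ rigorously, which amounts to computing $\cork(\cA)$ for the padded space: one must argue that no choice of coefficients makes the $2n \times 2n$ combination $\sum \alpha_i Y_i + \beta Z$ have rank exceeding $n + \rk(\cB)$. This follows because such a combination has the block form $\left[\begin{smallmatrix} \alpha A & \sum\alpha_i B_i \\ \beta A & 0\end{smallmatrix}\right]$ where $\alpha = \sum \alpha_i$; if $\alpha \ne 0$ the top block rows span an $n$-dimensional space while the image is constrained, and a short row/column-reduction shows the rank is $n + \rk(\sum \alpha_i B_i) \le n + \rk(\cB)$, and if $\alpha = 0$ the rank is at most $n + \rk(\cB)$ as well. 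Thus $\cork(\cA) = 2n - (n + \rk(\cB)) = n + \cork(\cB) \ge n + 1 > 1 = \disc(\cA)$ whenever $\cork(\cB) \ge 1$ and $n \ge 1$, completing the proof. I would write this up taking $\cB$ to be the $2$-dimensional diagonal example above (so $n = 2$, $\cA \le M(4,\F)$, $\cork(\cA) = 3$, $\disc(\cA) = 1$), and verifying membership in the Edmonds-Rado class by exhibiting the explicit $1$-singularity witness from Lemma~\ref{lem:whatever}.
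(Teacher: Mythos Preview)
Your argument contains a fatal arithmetic slip that invalidates the whole construction. You correctly compute that the maximum rank in $\cA$ is $n+\rk(\cB)$, but then write
\[
\cork(\cA)=2n-(n+\rk(\cB))=n+\cork(\cB).
\]
In fact $2n-(n+\rk(\cB))=n-\rk(\cB)=\cork(\cB)$. The $Y_i,Z$ gadget preserves the corank exactly (the paper states this explicitly right after Lemma~\ref{lem:whatever}: ``Observe that $\cork(\cA)=\cork(\cB)$''). Combined with $\disc(\cA)=\disc(\cB)$ from Lemma~\ref{lem:whatever}, the gadget sends compression spaces to compression spaces, so your diagonal $\cB$ with $\cork(\cB)=\disc(\cB)=1$ yields an $\cA$ with $\cork(\cA)=\disc(\cA)=1$, which is still a compression space. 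Feeding in a non-compression space such as $\sk_3$ does not help either: then $\disc(\cA)=\disc(\sk_3)=0$, so $\cA$ is not even in the Edmonds-Rado class.

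The paper's proof uses a different padding, the strictly upper-triangular one from Lemma~\ref{prop:reduction_to_triangular}: $B_i\mapsto C_i=\left[\begin{smallmatrix}0&B_i\\0&0\end{smallmatrix}\right]$. This construction shifts both invariants by $n$, i.e.\ $\cork(\cC)=n+\cork(\cB)$ and $\disc(\cC)=n+\disc(\cB)$, so it preserves any pre-existing gap. Starting from $\cB=\sk_3$ (where $\cork=1$, $\disc=0$) gives $\cC\le M(6,\F)$ with $\cork(\cC)=4>3=\disc(\cC)$; and because $\cC$ is spanned by upper-triangular matrices it lies in the Edmonds-Rado class. The point you are missing is that one must begin with a space already exhibiting $\cork>\disc$ and then use a padding that pushes $\disc$ above zero while keeping the gap; the $Y_i,Z$ gadget does neither.
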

The proof of Proposition~\ref{prop:inER_notQER} relies on the following lemma, which 
also explains why we do not expect to achieve rank maximization for upper 
triangular matrices in Theorem~\ref{thm:main2}. 
\begin{lemma}\label{prop:reduction_to_triangular}
Rank maximization of matrix spaces can be reduced to rank maximization of 
matrix spaces with a basis of pairwise commuting, and strictly upper triangular 
matrices.
\end{lemma}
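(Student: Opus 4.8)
The plan is to take an arbitrary matrix space $\cB = \langle B_1, \dots, B_m \rangle \leq M(n, \F)$ and encode it inside a larger matrix space whose basis consists of pairwise commuting, strictly upper triangular matrices, while preserving the maximum rank up to a controlled shift. First I would introduce formal "placeholder" coordinates: work in a space of size roughly $2n$ (or $(m+1)n$, depending on how one wants to lay things out), writing block matrices of the shape
\[
\widehat{B_i} = \begin{bmatrix} 0 & B_i \\ 0 & 0 \end{bmatrix}.
\]
These are all strictly upper triangular (with a zero diagonal block structure), they pairwise commute because the product of any two such blocks is the zero matrix, and $\rk(\widehat{B_i}) = \rk(B_i)$. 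The subtlety is that the span $\langle \widehat{B_1}, \dots, \widehat{B_m}\rangle$ only contains matrices of the form $\widehat{B}$ with $B \in \cB$, so its maximum rank is exactly $\rk(\cB)$ — good — but the basis $\{\widehat{B_i}\}$ is \emph{not} a commuting basis of the span in the strong sense required unless we are careful; since all products vanish, the span is in fact an abelian Lie algebra of square-zero matrices, which is precisely strictly-upper-triangularizable and commuting, so this direction already essentially works.

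The main point I would then verify carefully is the \textbf{reduction} claim itself: that an algorithm solving rank maximization for commuting strictly-upper-triangular-based spaces yields one for arbitrary spaces. Given input $B_1, \dots, B_m$, form $\widehat{B_1}, \dots, \widehat{B_m}$ as above; these are already strictly upper triangular in the standard basis (no conjugation needed), they pairwise commute, and $\rk\big(\langle \widehat{B_i}\rangle\big) = \rk(\cB)$. Run the hypothesized algorithm to get a maximum-rank element $\sum \alpha_i \widehat{B_i} = \widehat{\sum \alpha_i B_i}$, and read off $\sum \alpha_i B_i$ as the maximum-rank element of $\cB$. So the reduction is actually immediate once the block construction is in place; the only thing to double-check is that "with a basis of pairwise commuting, strictly upper triangular matrices" is satisfied on the nose by $\{\widehat{B_i}\}$ — commutativity because $\widehat{B_i}\widehat{B_j} = 0$ for all $i,j$, strict upper triangularity because the only nonzero block sits strictly above the diagonal.

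The step I expect to require the most care is reconciling this with the statement's evident purpose, which (per the surrounding text) is to explain why Theorem~\ref{thm:main2} does \emph{not} also give rank maximization for triangularizable spaces: the reduction shows that rank maximization even for the very restricted class of commuting square-zero (hence triangularizable) spaces is as hard as the general SMR, so one should not expect a polynomial-time rank-maximization algorithm for $\ut$ unless one exists for all of Edmonds' problem. Thus I would phrase the proof to emphasize that the constructed space lies in $\ut$ (indeed is simultaneously strictly triangularizable with a commuting basis) and that rank of the general space is recovered verbatim, so any rank-maximization algorithm for this subclass solves general SMR. I would close by noting that this is exactly what feeds into Proposition~\ref{prop:inER_notQER}: such a space is in the Edmonds-Rado class (being triangularizable, hence in $\ut$, hence in Edmonds-Rado), yet if it were always a compression space we could maximize rank via second Wong sequences in polynomial time, contradicting the hardness just established. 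The one genuine obstacle is purely expository — making sure the block layout chosen (size $2n$ versus a larger padding) cleanly preserves $\cork$ and not merely $\rk$ if one wants the sharper statement — but for the stated lemma, which only concerns rank maximization, the $2\times 2$ block construction suffices.
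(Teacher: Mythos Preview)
Your proposal is correct and takes essentially the same approach as the paper: the paper's entire proof is the $2\times 2$ block construction $C_i=\begin{bmatrix}0 & B_i\\ 0 & 0\end{bmatrix}$, with the observation that these matrices are strictly upper triangular and pairwise commute (all products vanish), and that $\rk(\langle C_i\rangle)=\rk(\cB)$. The only detail the paper adds that you omit is an initial zero-padding step to square up the matrices when the input lives in $M(n\times n',\F)$ with $n\neq n'$.
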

\begin{proof}
For $\cB=\langle B_1, \dots, B_m\rangle\leq M(n\times n', \F)$ we first pad 
$0$'s to make it a matrix space of $M(\max(n, n'), \F)$. Then consider the 
matrix space in $M(2\cdot \max(n, n'), \F)$ generated by $C_1, \dots, C_m$ 
where $C_i=\left[
\begin{array}{cc}
0 & B_i \\
0 & 0
\end{array}
\right]
$. 
\end{proof}
\begin{proof}[Proof of Proposition~\ref{prop:inER_notQER}]
Consider the following matrix space: apply the construction in 
Lemma~\ref{prop:reduction_to_triangular} with $\sk_3$, and let the resulting 
matrix space be $\cB\leq M(6, \Q)$. $\cB$ is in the Edmonds-Rado class as it is 
spanned by upper-triangular matrices. 
On the other hand $\cB$ is not in the Edmonds-Rado class as $\cork(\cB)=4$ 
while $\disc(\cB)=3$. 
\end{proof}

%whether the limit space of the second Wong sequence is in $\im(A)$ or  let 
%$(W_i)_{i\in\N}$ be the second Wong sequence of $(A, \cB)$, and $W^*$ be the 
%limit subspace. Recall that Lemma~\ref{lem:second_utility} there exists 
%$\cork(A)$-discrepancy witness if and only if $W^*\subseteq \im(A)$. This 
%implies that $A\in\cB$ is of maximum rank if and only if $W^*\subseteq 
%\im(A)$. 

\subsection{The black-box Edmonds-Rado class}

\begin{definition}
Let $\cB\leq M(n\times n', \F)$. 
$\cB$ is in the black-box Edmonds-Rado class if the following two conditions 
hold: 
(1) there exists a $\cork(\cB)$-singularity witness; 
(2) for any $A\in\cB$, either $A$ is of maximum rank, or 
$\cB(\ker(A))\not\subseteq \im(A)$. 
%either (1) $\cB$ is nonsingular, or (2) $\cB$ is singular and there exists 
%$(n-\maxrk(\cB))$-discrepancy subspace witness. 
\end{definition}
By the first condition, the black-box Edmonds-Rado class is a subclass of 
the compression spaces.
%\gabor{was: the q-Edmonds-Rado class.} 
Also note that $\cB(\ker(A))$ is just the first item 
in the 
second Wong sequence of $(A, \cB)$. 
The second condition says that
if $A$ is non-maximum rank then already the first 
item in the second Wong sequence excludes existence of $\cork(A)$-singularity
witnesses. In this case for any matrix $B$ from $\cB$ with
$B(\ker(A))\not\subseteq \im(A)$, we have $\rk( B)>\rk( A)$. Therefore
in matrix spaces in this class the following simple algorithm
finds an element of maximum rank over sufficiently large base fields.

\begin{proposition}
Let $\cB\leq \matspc$ be in the black-box Edmonds-Rado class, and assume 
$|\F|=\Omega(n)$. Then there exists a deterministic algorithm that solves the 
constructive SMR for $\cB$ using polynomial number of arithmetic operations. 
\end{proposition}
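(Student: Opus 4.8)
The plan is to turn the two defining conditions of the black-box Edmonds-Rado class into a termination and correctness argument for the obvious rank-increasing loop. First I would describe the algorithm: pick any $A\in\cB$ (say $A=B_1$), and repeat the following. Check whether $\cB(\ker(A))\subseteq \im(A)$; this amounts to computing $\ker(A)$, applying the basis matrices $B_1,\dots,B_m$ to a basis of $\ker(A)$, and testing containment in $\im(A)$ — all by linear algebra over $\F$. If the containment holds, then by condition (2) of the definition $A$ must be of maximum rank, so we output $A$ and halt. Otherwise there is a basis matrix $B_i$ with $B_i(\ker(A))\not\subseteq\im(A)$, and we replace $A$ by a suitable combination $A+\lambda B_i$ of strictly larger rank.

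The key step is to justify that such a $\lambda$ exists and can be found deterministically. Here I would reuse the degree argument already used in the proof of Theorem~\ref{thm:main1}: if $\rk(A)=r$ and $B_i(\ker(A))\not\subseteq\im(A)$, then one can exhibit an $(r+1)\times(r+1)$ submatrix of $A+\lambda B_i$ whose determinant is a nonzero polynomial in $\lambda$ of degree at most $r+1\le n$ — concretely, extend a basis of a complement of $\ker(A)$ by a preimage of a vector in $B_i(\ker(A))\setminus\im(A)$, and extend $\im(A)$ by that vector. Since $|\F|=\Omega(n)$, evaluating at $n+1$ distinct field elements $\lambda_1,\dots,\lambda_{n+1}$ yields some $\lambda_j$ with $\rk(A+\lambda_jB_i)\ge r+1$. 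We set $A\leftarrow A+\lambda_j B_i$ and iterate. Because the rank strictly increases at each iteration, the loop runs at most $n$ times, and each iteration does polynomially many arithmetic operations, so the whole algorithm uses a polynomial number of arithmetic operations.

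It remains to note that correctness of the output relies on condition (2) exactly as stated: when the loop exits, $\cB(\ker(A))\subseteq\im(A)$, hence $A$ is of maximum rank, hence the returned matrix solves the constructive SMR. (Condition (1) is not even needed for this algorithm to be correct; it only guarantees the stronger fact that a $\cork(\cB)$-singularity witness exists, which one could additionally output as $A^{-1}(W^*)$ via Lemma~\ref{lem:second_utility} if desired, but that is not required by the statement.) I expect the main — and only mildly technical — obstacle to be the rank-jump lemma: writing down the explicit $(r+1)\times(r+1)$ minor and checking that its determinant is a nonzero polynomial of controlled degree. This is essentially the computation sketched in the \texttt{\textbackslash iffalse} block earlier in the paper (the analogue of Lemma~\ref{lem:rk_inc}), so it is routine, and everything else is bookkeeping: the loop terminates in $\le n$ steps, each step is polynomial, and the exit condition is precisely the maximum-rank certificate supplied by the definition.
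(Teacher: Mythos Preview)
Your proof is correct and follows essentially the same rank-increasing iteration as the paper. One small difference worth noting: the paper's version skips the explicit test $\cB(\ker(A))\subseteq\im(A)$ and instead directly tries all $A+\lambda B_i$ (over the basis and a set of $\rk(A)+1$ scalars), declaring $A$ maximal when none improves the rank---this keeps the algorithm purely rank-querying, which is exactly what motivates the ``black-box'' name in the paragraph following the proof.
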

\begin{proof}
Given $A\in \cB$, we compute the rank of $A+\lambda B$ where $\lambda$ is from 
a subset
of $\F$ of size $\rk (A)+1$ and $B$ is from a basis of $\cB$. If none
of these matrices have rank larger than $A$, conclude that $A$ is
of maximum rank. Otherwise replace $A$ with an $A+\lambda B$ of
larger rank. Iterate the above procedure to obtain $A\in\cB$ of maximum rank. 
\end{proof}

As a justification for the name of the subclass,
observe that this algorithm does not make use of any properties
of matrices other that their rank. It even works
in the setting that instead of inputting the basis 
$B_1,\ldots,B_m$ explicitly, we only know $m$ and
 have an oracle which, on input 
$(\alpha_1,\ldots,\alpha_m)$ returns the rank 
of $\alpha_1B_1+\ldots+\alpha_mB_m$. 
%\gabor{was: $\alpha_1B_1,\ldots,\alpha_mB_m$}

\subsubsection{Some matrix spaces in the black-box Edmonds-Rado class.} While 
this class seems quite restrictive, 
it contains some interesting cases. 

A first example is when $\cB$ has a basis of 
positive semidefinite matrices. 
Let $\cB=\langle B_1, \dots, B_m\rangle\leq M(n, \R)$, where $B_i$'s are 
positive semidefinite. Then it is seen easily that $A$ is of maximum rank if 
and only if $\ker(A)=\cap_{i\in[m]}\ker(B_i)$. In particular if $A$ is not of 
maximum rank then there exists $v\in\ker(A)$ 
such that $B_j(v)\not\in\im(A)$, for some $j\in[m]$. 

Another more interesting scenario is from \cite{CIK97} (see also \cite{IKS}, 
Lemma~4.2). Let $G$ be a finite dimensional
         associative algebra over $\F$ and let $V, V'$ be semisimple 
	$G$-modules.
       Let  $\cB=\Hom_G(V,V').$
	Recall that a semisimple module is the direct sum of
	simple modules and that in a semisimple module every
	submodule has a direct complement. We know that
        $A \in \cB$ is of maximum rank if and only if for every isomorphism
	type $S$ of simple modules for $A$, the multiplicity of 
	$S$ in $\im(A)$ is the minimum of the multiplicities of $S$ in 
	$U$ and $V$.

        If $A$ is not of maximum rank, then for some simple module $S$ there 
	is an isomorphic copy $S_1$ of
        $S$ in $\ker(A)$ and there is a copy $S_2$ of $S$ in $V'$ 
	intersecting $\im(A)$ trivially. Also, there are nontrivial
	homomorphisms
        mapping the first copy of $S$ to the second one. 
	For instance, any isomorphism
	$S_1\to S_2$ can be extended to a homomorphism $V\to V'$ by
	the zero map on a direct complement of $S_1$.

        On the other hand, if $A$ is of maximum rank then for every
        simple submodule in $\ker(A)$, the copies in $V'$ isomorphic
        to it are in $\im(A)$, therefore no simple constituent
        can be moved out of $\im(A)$ via the second Wong sequence.

\section{Concluding remarks}
\label{sec:concl}
%\gabor{New section, please read carefully and improve}
%\youming{changes here and there in this section}

Our main results are deterministic polynomial time algorithms for
the {\em constructive} version of Edmond's problem (that is, finding 
nonsingular matrices) in certain
%\youming{inserted} 
subclasses of the 
Edmonds-Rado class.
%\youming{was: 
%the class of the matrix spaces 
%with the Edmonds-Rado property. }
In the light of Gurvits' result
on the non-constructive version, probably the most interesting open
problem is the deterministic complexity of the constructive version
for the whole Edmonds-Rado class. Regarding the Boolean complexity 
of some of our algorithms, the bottleneck is our limited knowledge 
about the possible blowup of the sizes of bases for the Wong sequences.
We are not even aware of any good bound on the size of bases for 
singularity witnesses (except for the rank one generated case).
In particular, we do not know the Boolean complexity of finding singularity 
witnesses for singular triangularizable matrix spaces over the
rationals.

The deterministic or randomized complexity of finding rank one 
matrices spanning a rank-one generated space is another question
which is open to our knowledge. We believe that the problem
is hard. In contrast, triangularizing a triangularizable matrix
space may be easier. In the special case when the space
is triangularizable over the base field $\F$ and it contains
a nonsingular matrix (which can be efficiently found even
deterministically), Lemma~\ref{lem:triang-reg} gives a
reduction to finding composition series for matrix algebras
which is further reducible to factorization of polynomials over $\F$.
%\gabor{added two sentences}
It would also be interesting finding maximum rank matrices
over very small fields in the rank-one spanned case. The algorithm
of \cite{IKS} does the job when rank-one generators are at hand.

\paragraph{Acknowledgements.}
We
%\gabor{moved this to a footnote at Page 1: 
%A preliminary report of this work appears at 31st International Symposium on 
%Theoretical Aspects of Computer Science (STACS 2014). 
%}
would like to thank the anonymous reviewers for careful reading and pointing 
out some gaps in an earlier version of the paper.
Most of
this work was conducted when G.~I., Y.~Q. and M.~S. were
at the Centre
for Quantum Technologies (CQT) in Singapore, and partially funded by
the Singapore Ministry of Education and the National Research
Foundation, also through the Tier 3 Grant ``Random numbers from quantum 
processes'' (MOE2012-T3-1-009).
Research partially supported by the European Commission
IST STREP project Quantum Algorithms (QALGO) 600700,
by the French ANR Blanc program under contract ANR-12-BS02-005 (RDAM project), 
by the Hungarian Scientific Research Fund (OTKA), 
%\gabor{temporarily deleted: Grant NK105645} 
and by the Hausdorff grant EXC59-1/2.
\bibliography{ref}

\end{document}